\documentclass[a4paper,onecolumn,11pt]{quantumarticle}
\pdfoutput=1
\usepackage[utf8]{inputenc}
\usepackage[english]{babel}
\usepackage[T1]{fontenc}
\usepackage{amsmath}
\usepackage{hyperref}

\usepackage{tikz}
\usetikzlibrary{decorations.pathreplacing,angles,quotes,calc,shapes,arrows,automata}

\usepackage{amsmath}
\usepackage{amsfonts}
\usepackage{amssymb}
\usepackage{amsthm}
\usepackage{physics}
\usepackage{bm}

\usepackage{comment}

\newcommand{\cB}{\mathcal{B}}
\newcommand{\cC}{\mathcal{C}}
\newcommand{\cD}{\mathcal{D}}
\newcommand{\cE}{\mathcal{E}}

\newcommand{\cL}{\mathcal{L}}
\newcommand{\cM}{\mathcal{M}}

\newcommand{\cP}{\mathcal{P}}

\newcommand{\cS}{\mathcal{S}}

\newcommand{\cU}{\mathcal{U}}
\newcommand{\cV}{\mathcal{V}}
\newcommand{\cW}{\mathcal{W}}

\newcommand{\bbR}{\mathbb{R}}
\newcommand{\bbI}{\mathbb{I}}
\newcommand{\bbC}{\mathbb{C}}
\newcommand{\bbN}{\mathbb{N}}

\newcommand{\heta}{\hat{\eta}}
\newcommand{\ceta}{\check{\eta}}
\newcommand{\hkappa}{\hat{\kappa}}
\newcommand{\ckappa}{\check{\kappa}}
\newcommand{\bsigma}{\bm{\sigma}}
\newcommand{\br}{\bm{r}}

\newcommand{\cptp}{\operatorname{CPTP}}
\newcommand{\hptp}{\operatorname{HPTP}}

\newcommand{\codim}{\operatorname{codim}}

\newcommand{\vspan}{\operatorname{span}}
\newcommand{\Ker}{\operatorname{Ker}}
\newcommand{\Herm}{\operatorname{Herm}}
\newcommand{\HermZero}{\Herm_0}
\newcommand{\id}{\operatorname{id}}

\newtheorem{theorem}{Theorem}
\newtheorem{lemma}[theorem]{Lemma}
\newtheorem{proposition}[theorem]{Proposition}
\newtheorem{definition}[theorem]{Definition}
\newtheorem{corollary}[theorem]{Corollary}
\newtheorem{example}[theorem]{Example}
\newtheorem{remark}[theorem]{Remark}

\begin{document}

\title{Contraction and Expansion Values of Quantum Channels}

\author{Ruben Ibarrondo}
\affiliation{Department of Physical Chemistry, University of the Basque Country UPV/EHU, Apartado 644, 48940 Leioa, Spain}
\affiliation{EHU Quantum Center, University of the Basque Country UPV/EHU, Barrio Sarriena s/n, 48940 Leioa, Spain}
\email{rubenibarrondo@gmail.com }
\orcid{0000-0002-2445-2701}
\author{Mikel Sanz}
\email{mikel.sanz@ehu.eus}
\orcid{0000-0003-0290-4698}
\affiliation{Department of Physical Chemistry, University of the Basque Country UPV/EHU, Apartado 644, 48940 Leioa, Spain}
\affiliation{EHU Quantum Center, University of the Basque Country UPV/EHU, Barrio Sarriena s/n, 48940 Leioa, Spain}
\affiliation{Basque Center for Applied Mathematics (BCAM), Alameda Mazarredo 14, 48009 Bilbao, Spain}

\maketitle

\begin{abstract}
  The contraction coefficient of the trace distance is a central tool in quantum information, quantifying how strongly a quantum channel degrades the distinguishability of states. However, being an extremal ratio, it captures only the most optimistic behaviour of the channel and is often trivial, even for very noisy channels. Moreover, a single scalar is poorly suited to describe how contraction accumulates under channel composition. In this work we introduce the \emph{contraction and expansion values}, two monotone sequences that refine the contraction and expansion coefficients in the same way singular values refine the operator norm. They arise from a min--max variational principle over subspaces of traceless Hermitian operators, admit an operational interpretation in terms of two state-discrimination games, and are shown to coincide with the Gel'fand or Bernstein numbers of the channel restricted to traceless operators. This identification places the sequences within Pietsch's theory of $s$-numbers and yields, in particular, bounds under channel composition that the contraction coefficient alone cannot provide. We establish their main structural properties and compute or estimate them for single-qubit channels, $d$-dimensional amplitude damping channels, and direct-sum channels.
\end{abstract}

Quantum channels allow us to think of quantum evolutions as input-output processes. One of their intrinsic features is that they degrade the information from their input. A paradigmatic case is that of the trace distance \cite{NielsenChuang2010,Watrous2018}, given by $D_{\Tr}(\rho \| \sigma) = \frac{1}{2} \norm{\rho - \sigma}_1$ for two quantum states $\rho$ and $\sigma$ and where $\norm{\cdot}_1$ is the Schatten $1$-norm. Here the loss of information is captured in the form of its  \emph{data processing inequality}:
\begin{equation}
    D_{\Tr}(T(\rho) \| T(\sigma)) \leq D_{\Tr}(\rho \| \sigma),
\end{equation}
where $T$ is a quantum channel. We refer to this degradation as the \emph{contraction} of the trace distance. In general, the strength of the contraction depends on the input states as well as the channel itself. In order to characterize the channel beyond specific states, the most commonly used quantifier is the contraction coefficient,
\begin{equation}
    \heta(T) := \max_{\rho\neq\sigma} \frac{D_{\Tr}(T(\rho) \| T(\sigma))}{D_{\Tr}(\rho \| \sigma)}
    = \max_{\rho\neq\sigma} \frac{\norm{T(\rho) - T(\sigma)}_1}{\norm{\rho - \sigma}_1}.
\end{equation}
When $\heta(T)<1$, we say that the channel satisfies a \emph{strong data processing inequality}. The converse quantity, which is attracting increasing attention~\cite{Ramakrishnan2021,Hirche2024,Belzig2025,Laracuente2023}, is the expansion coefficient,
\begin{equation}
    \ceta(T) := \min_{\rho\neq\sigma} \frac{D_{\Tr}(T(\rho) \| T(\sigma))}{D_{\Tr}(\rho \| \sigma)}
    = \min_{\rho\neq\sigma} \frac{\norm{T(\rho) - T(\sigma)}_1}{\norm{\rho - \sigma}_1}.
\end{equation}

Beyond foundational interest \cite{PetzRuskai1998,Hiai15,Hirche23}, contraction coefficients have been used to derive bounds on channel capacities~\cite{Hirche2022}, quantum machine learning \cite{Angrisani2023DPamplificationquantum,Angrisani2023QDP,Nuradha25}, limitations of quantum computation~\cite{StilckFrana2021,aharonov_depo}, or quantifying causal relations~\cite{Chiribella2025}. A recurrent application is the estimation of mixing times for dissipative dynamics and repeated applications of a channel~\cite{MllerHermes2018,Temme2010,gibbs_samplers}. Nevertheless, the contraction coefficient reflects an extreme-case ratio over all input states, which may not capture the typical contraction behavior of the channel. An illustrative example is the completely dephasing channel, whose contraction coefficient is one even though it completely suppresses off-diagonal coherences, suggesting a much worse typical contraction. This noticeable gap is more than a plain curiosity as it has been shown to imply stringent restrictions on error mitigation techniques \cite{Quek2024}. Furthermore, it has motivated the development of a dedicated framework to characterize the contraction of quantum channels under random distributions~\cite{Ibarrondo2025}.

Many of the applications above involve channel concatenations that result in complex evolutions or protocols, which reveals a related shortcoming. In such cases, one is interested in how contraction accumulates under composition, but considering extreme-case quantities for the components leads to an even more extreme value for the composite channel. Indeed, a scalar characterization cannot capture the possible interplay of different ``contraction directions'' which could give more informative bounds. This is what happens with the eigenvalues and singular values, which satisfy well known inequalities that account for this interplay \cite{Bhatia1996}.  When such spectral techniques apply to a linear process (be it a quantum channel or of another nature), the eigenvalues naturally provide a richer description of the convergence properties of the channel: the leading eigenvalue determines the asymptotic fixed point, while subleading ones quantify successive modes of decay.

Unfortunately, it is often useless to directly resort to the eigenvalue or singular value decomposition to study the trace distance contraction of a quantum channel. There is a solid knowledge about the spectral properties of quantum channels~\cite{Wolf2010,Wolf2012}, and it is well known that, in general, diagonalization is not guaranteed. This lack of diagonalizability hinders the direct use of spectral methods~\cite{Terhal2000}, unless additional assumptions, such as detailed balance, are imposed~\cite{Kastoryano2013,MllerHermes2018,Gao2022}. Whereas for the singular values, the problem is that they can exceed unity and even grow with the dimension~\cite{PerezGarcia2006}, which renders many bounds trivial. In Ref.~\cite{Temme2010}, this limitation was addressed by defining a modified map whose singular values are properly bounded, enabling effective bounds on convergence speed. Their analysis focuses on the leading singular values, yet it already hints at a potential hierarchy of contraction factors. However, the compositional properties could only be applied for channels with the same fixed point.

These observations suggest that an intermediate object is missing. Not a single extremal ratio, but an ordered sequence of contraction factors that resolves the channel into a hierarchy of ``contraction directions'', in the same way that the singular values resolve a matrix. Such a sequence should reduce to the contraction coefficient at one end and to the expansion coefficient at the other, retain a clear operational meaning in terms of state distinguishability, and obey composition inequalities analogous to those of the singular values, so that the contraction of a concatenated channel can be controlled by the contraction of its parts. The trace distance, however, is governed by the Schatten $1$-norm rather than an inner-product norm, so the usual singular value decomposition does not apply directly. The central question of this work is therefore whether a $1$-norm analogue of the singular values exists that meets these requirements.

In this article, we propose the \emph{contraction and expansion values}, which extend the contraction coefficient and resemble the singular values in essence. These are two sequences describing the contraction induced by a quantum channel in a spectrum-like manner, with the first element corresponding to the contraction coefficient and descending to the expansion coefficient. 
They have a clear physical interpretation and we show that they can be understood within established tools in the analysis of bounded operators.
Remarkably, they are suitable for studying channel composition because they satisfy similar inequalities to the ones for the singular values of matrices under matrix multiplication.
Like the contraction coefficient, they are generally difficult to compute exactly~\cite{Delsol2025}, but their behavior under composition suggests they can often be approximated in terms of simpler channels.

After establishing some notation in Section~\ref{sec:notation}, in Section~\ref{sec:literature_review}, we review the literature related to channel contraction. We pay special attention to the contraction coefficient, and channel composition, which is specially relevant and is one of the main applications of the tools we introduce.
In Section~\ref{sec:definition}, we introduce the contraction and expansion values.
Afterwards, in Section~\ref{sec:adversarial_game_interpretation}, we describe the adversarial game interpretation of these quantities which roots them with a clear physical meaning.
Section~\ref{sec:s_numbers} establishes the connection with $s$-numbers, which are well known tools in the contexts of bounded operator analysis. This connection allows us to prove the main properties of these two sequences, gathered in Section~\ref{sec:properties}. Finally, we consider some examples in Section~\ref{sec:examples}, which include a full characterization for single-qubit channels, estimates for the $d$-dimensional amplitude damping, and inequalities for direct sum channels in terms of its constituents.

\section{Notation}~\label{sec:notation}

In what follows, $\cM_{d,d'}$ denotes the space of $d\times d'$ complex matrices, with $\cM_d=\cM_{d,d}$. For a matrix $A$, $A^{\dag}$ denotes its Hermitian conjugate. We write $\Herm(d)$ for the real vector space of Hermitian matrices and $\HermZero(d)$ for its traceless subspace. The identity matrix is denoted by $\bbI_d$, and $\id_d: \cM_d \to \cM_d$ denotes the identity map.

Most results will be stated for \emph{Hermiticity-preserving and traceless-preserving} ($\hptp_0$) maps. A linear map $T:\cM_d\to\cM_{d'}$ is $\hptp_0$ if
\begin{equation}
    (T(A))^{\dag} = T(A^{\dag})
    \quad \text{and} \quad
    \tr(T(X)) = 0 \quad \text{whenever }\tr(X) = 0.
\end{equation}
We write $\hptp_0(d,d')$ for the set of such maps and $\hptp_0(d)$ for $d=d'$. Although quantum channels are our primary objects of interest, this larger class is technically convenient: channel differences and inverses (when they exist) are $\hptp_0$. For any $T\in\hptp_0(d,d')$, the restriction $T|_0 : \HermZero(d) \to \HermZero(d')$ is well defined. For a quantum channel $T$, $\Ker T|_0 = \Ker T$ and $\rank(T) = \rank(T|_0) + 1$.

A subspace $\cM\subseteq\HermZero(d)$ can be characterized by a maximal set of linearly independent traceless observables $\{M_1,\dots,M_n\}\subset \HermZero(d)$ orthogonal to it, such that
\begin{equation}
    \cM =\{ X\in \HermZero(d) \;\mid\; \tr(M_k X) = 0 \quad \text{for all} \quad 1\leq k\leq n \}.
\end{equation}
Hence $\codim \cM = n$ and $\dim\cM=d^2-1-n$, where the codimension is taken inside $\HermZero(d)$. The subspace $\cM$ induces a set of state pairs
\begin{equation}
    \qty{ (\rho, \sigma) \;\mid\; \rho - \sigma \in \cM }
    =
    \qty{ (\rho, \sigma) \;\mid\; \tr( M_k \rho) = \tr( M_k \sigma) \quad \text{for all} \quad 1\leq k\leq n },
\end{equation}
that is, state pairs with identical expectation value for all $M_k$. We finally define the associated unit sphere
\begin{equation}
    \cS_{\cM} =\qty{X \in \cM \;\mid\; \norm{X}_1 = 1}.
\end{equation}

\section{Extensions of the contraction coefficient and channel composition}\label{sec:literature_review}

In this section we review the literature on the contraction of quantum channels, with two goals in mind: to motivate, more technically, the characterization of channel composition and divisibility in terms of contraction, and to position the contraction and expansion values among existing refinements of the contraction coefficient.

We begin with the contraction coefficient itself. The most remarkable results include the study of equivalence relations and inequalities between contraction coefficients of different metrics \cite{PetzRuskai1998,LesniewskiRuskai1999,Sharma2012,Hiai15,Hirche23}. For instance, it is known that some families of metrics share the same contraction coefficient, and that the contraction coefficient of the trace distance dominates many others. Additionally, there are known estimations of their values for concrete metrics, predominantly for the trace distance \cite{Reeb2011, Hirche2024} and the relative entropy \cite{Berta2023}. Finally, even though matrix $p$-norms do not define monotone distances, their contraction properties have been actively studied due to their relevance in matrix analysis \cite{PerezGarcia2006,King2012,Cubitt2015}.

As recalled in the introduction, the contraction coefficient $\heta$ captures how much a channel can shrink “distances” between states, measured as the maximal ratio between output and input values of a given distance measure. The main limitation of the contraction coefficient is that it only reflects the most optimistic scenario. This is often unrepresentative and can become trivial, $\heta=1$, even for very noisy channels, such as a fully dephasing channel or a $d$-dimensional amplitude damping (as defined in Section~\ref{sec:amplitude_damping}). Despite these caveats, contraction coefficients remain widely present in the literature, and there is increasing interest in the expansion coefficient $\ceta$ \cite{Ramakrishnan2021,Hirche2024,Belzig2025,Laracuente2023}. Our goal is to provide a finer characterization, interpolating between the contraction and expansion coefficients, that can be used to understand channel compositions.

This brings us to channel composition and divisibility, the main applications of the tools we introduce. The ability to analyze a quantum channel by breaking it into smaller time slices should not be taken for granted. For instance, this is possible if we consider an isolated quantum system under a Hamiltonian $H$. The evolution from time $t_1$ to $t_2$ can be described by the composition of parts $e^{- i (t_2 - t_1) H} = e^{-i (t_2 - t) H} e^{-i (t - t_1) H}$ for any $t_1 \leq t \leq t_2$. This is not possible in general for a quantum channel $T_{t_2,t_1}$, as the evolution may not be divisible into parts as in $T_{t_2, t_1} = \Phi_{t_2,t} \circ \Phi_{t, t_1}$. If a channel $T\in\cptp(d)$ admits a decomposition $T = \Phi_1 \circ \Phi_2$ such that $\Phi_1, \Phi_2 \in \cptp(d)$ are non-unitary channels, we say that $T$ is divisible \cite{Wolf2008}. An accessible review of divisibility can be found in \cite{vomEnde2025}, and a more extensive account is provided in \cite{Chruscinski2022DynamicalMapsMarkovianRegime}. Whenever a channel can be decomposed, its analysis may be simplified by estimating the properties from its constituents. This provides both a reduction in complexity and deeper structural insights.

A related notion is that of degradability. We say that a channel $S\in\cptp(d_1,d_3)$ is a degraded version of $T\in\cptp(d_1,d_2)$ if there is a channel $\Phi\in\cptp(d_2,d_3)$ such that $S = \Phi \circ T$ \cite{Hirche2022}. The degradability relation defines a preorder on quantum channels and can be leveraged to bound quantities such as channel capacities and contraction coefficients. More precisely, if $S = \Phi \circ T$, then the contraction coefficient of $T$ provides an upper bound for that of $S$. This simple observation has proven powerful, since for some channel families $\{ T_{\lambda}\}$ the search for a valid $T$ and $\Phi$ given $S$ can be formulated as a semidefinite program (SDP) \cite{Hirche2024}. When $\{T_{\lambda}\}$ is the parametrized erasure channel, this procedure leads to the Doeblin coefficient for $S$, which has also motivated further research \cite{George2025}.

Finally, we comment on existing notions that extend the contraction coefficient or attempt to better characterize contraction under composition. Other notions provide sequences characterizing contraction that are similar in spirit to our proposal. In \cite{Burrell2009}, they discuss the compression vectors, which naturally appear in anisotropic depolarizing channels. These channels are diagonal in some operator basis formed by trace-free orthogonal operators. Such a channel compresses the Bloch sphere in $d^2-1$ directions, with the magnitude of the distortion given by the corresponding compression coefficient. As we will see in Section~\ref{sec:vs_single_qubit}, this notion coincides with the sequences introduced in this work for single-qubit channels. For our purposes, the limitation of this approach is that it does not apply to arbitrary channels, lacks universality due to its dependence on the chosen operator basis, and analyzing compositions seems to require favorable alignment of the bases involved.
In Section III.A of \cite{Temme2010}, they introduce a map denoted $Q_k$ whose singular values are guaranteed to be in the range $[0, 1]$. The map $Q_k$ appears as a similarity transformation of an ergodic quantum channel $T$ which depends on its fixed point $\sigma$ and a function $k$ of a certain class. The singular values of $Q_k$ are related to the contraction properties of the generalized $\chi^2$-divergences and provide valuable estimates for the mixing time of the channel. As an example, if we let $k(x) = \frac{1}{2}(x^{-1/2} + x^{1/2})$ we obtain the divergence $\chi^2_{1/2}(\rho,\sigma) = \tr(\rho \sigma^{-1/2} \rho \sigma^{-1/2}) - 1$ and the map $Q_{1/2} (\rho) = \sigma^{-1/2} T (\sigma^{1/2} \rho \sigma^{1/2}) \sigma^{-1/2}$. This notion could potentially be used to study the composition of channels that share a common full-rank fixed point.

Lastly, let us mention a complementary line of work~\cite{Ibarrondo2025}, where the contraction coefficient is extended in a different direction through the \emph{moments of contraction}, which interpolate between the contraction coefficient and the average-case contraction with respect to a distribution over states. Although that approach has a clearer operational interpretation and yields quantities that are easier to estimate, establishing properties for channel compositions there requires further work, whereas composition is precisely the strength of the sequences introduced here.

\section{Definition of the contraction and expansion values}\label{sec:definition}

Subspaces of traceless Hermitian matrices play a central role in the quantitative notions introduced in this work. The idea of restricting contraction coefficients to suitable subspaces already appears in the classical literature \cite{Mukhamedov2019GeneralizedDobrushin}, where it serves to capture finer structural properties of stochastic maps. In the same spirit, we introduce here subspace-restricted versions of both the contraction and the expansion coefficients, which will be key tools in our subsequent analysis.

\begin{definition}[Subspace-restricted coefficients]
    Let $T\in\hptp_0(d, d')$ and let $\cM\subseteq\HermZero(d)$ be a subspace of Hermitian traceless matrices, we define the \emph{subspace-restricted contraction} and \emph{expansion coefficients}, respectively:
    \begin{align}
        \heta(T,\cM) := \max_{X\in \cS_\cM} \norm{T(X)}_1,\label{eq:heta_def} \\
        \ceta(T,\cM) := \min_{X\in \cS_\cM} \norm{T(X)}_1.\label{eq:ceta_def}
    \end{align}
\end{definition}

We get the usual contraction and expansion coefficients as special cases
\begin{equation}
    \heta(T,\HermZero(d)) = \heta(T),
    \quad \text{and} \quad
    \ceta(T,\HermZero(d)) = \ceta(T),\label{eq:etaM_to_eta}
\end{equation}
and using $\{\rho-\sigma\}$ as a shorthand for the one-dimensional space spanned by the non-trivial difference between $\rho\neq\sigma$
\begin{equation}
    \heta(T,\{\rho -\sigma\}) = \ceta(T,\{\rho -\sigma\}) = \frac{\norm{T(\rho) -T(\sigma)}_1}{\norm{\rho -\sigma}_1}. \label{eq:eta_single}
\end{equation}
Additionally, it is clear that for any $X\in\cM$,
\begin{equation}\label{eq:subspace_contraction_range}
    \ceta(T, \cM) \norm{X}_1 \leq \norm{T(X)}_1 \leq \heta(T, \cM) \norm{X}_1,
\end{equation}
and for two subspaces $\cM\subseteq \cW$ we have $\heta(T,\cM)\leq \heta(T,\cW)$ and $\ceta(T,\cW)\leq \ceta(T,\cM)$. It holds that $\ceta(ST,\cM)\geq\ceta(S,T(\cM))\ceta(T,\cM)$ for $S\in\cptp(d',d'')$, with the convention that the right-hand side vanishes when $T(\cM)=\{0\}$.

While the subspace-restricted coefficients allow for a finer analysis of the action of a channel, their values depend on the choice of subspace. Our goal is therefore to identify refinements that are intrinsic to the map itself, by selecting a preferred sequence of subspaces that yields a characterization. Inspired by the min–max characterization of singular values, we introduce two sequences of coefficients that capture the extremal contraction and expansion behavior of the channel across subspaces of increasing dimension

\begin{definition}[Contraction Values]\label{def:contraction_values}
    The \emph{contraction values} $\hkappa_n(T)$ with $n=1,\dots,d^2-1$ of a linear map $T \in\hptp_0 (d, d')$ are given by
    \begin{equation}\label{eq:def_cv}
        \hkappa_n(T) := \min_{\codim \cM=n-1} \max_{X \in \cS_{\cM}} \norm{T(X)}_1,
    \end{equation}
    where the minimization is over subspaces of Hermitian and traceless matrices $\cM\subseteq\HermZero(d)$, and the norm is the Schatten 1-norm.
\end{definition}

\begin{definition}[Expansion Values]\label{def:expansion_values}
    The \emph{expansion values} $\ckappa_n(T)$ with $n=1,\dots,d^2-1$ of a linear map $T \in \hptp_0(d,d')$ are given by
    \begin{equation}\label{eq:def_ev}
        \ckappa_n(T) := \max_{\dim \mathcal{M}=n} \min_{X \in \cS_{\cM}} \norm{T(X)}_1,
    \end{equation}
    where the maximization is over subspaces of Hermitian and traceless matrices $\cM\subseteq\HermZero(d)$, and the norm is the Schatten 1-norm.
\end{definition}

Both definitions are based on variational principles involving two competing optimizations: an outer optimization over Grassmannians, i.e. sets of subspaces of fixed dimension, and an inner one over elements within each subspace. The connection with the subspace-restricted contraction and expansion coefficients is immediate, since Eqs. \eqref{eq:def_cv} and \eqref{eq:def_ev} admit a natural reformulation in terms of these quantities,
\begin{equation}\label{eq:CVs_EVs_from_subspace_coeffs}
    \hkappa_n(T) = \min_{\codim \mathcal{M}=n-1  } \heta(T,\cM)
    \quad \text{and} \quad
    \ckappa_n(T) = \max_{\dim \mathcal{M}=n} \ceta(T,\cM),
\end{equation}
which motivates accent notation and makes explicit that
\begin{equation}
    \hkappa_1(T)=\ckappa_1(T)=\heta(T),
    \quad\text{and}\quad
    \hkappa_{d^2-1}(T)=\ckappa_{d^2-1}(T)=\ceta(T).    
\end{equation}
With these expressions in hand, it is easy to establish the following elementary relation.

\begin{lemma}\label{lemma:cvs_vs_evs}
	Let $T\in\hptp_0(d,d')$ and $1\leq n\leq d^2-1$, then
    \begin{equation}
        \ckappa_n(T) \leq \hkappa_n(T) .
    \end{equation}
\end{lemma}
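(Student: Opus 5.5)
The plan is to compare the two variational problems by a dimension-counting argument, as in the proof of the Courant--Fischer theorem. Fix $n$. Let $\cM$ be a subspace with $\dim\cM=n$ attaining the maximum in the definition of $\ckappa_n(T)$, and let $\cW$ be a subspace with $\codim\cW=n-1$ attaining the minimum in the definition of $\hkappa_n(T)$. If attainment is not obvious, I will instead take arbitrary subspaces of these dimensions and optimize at the end. From Eq.~\eqref{eq:CVs_EVs_from_subspace_coeffs} it suffices to show that
\begin{equation*}
\ceta(T,\cM)\leq\heta(T,\cW)
\end{equation*}
for every such pair $(\cM,\cW)$. Taking the supremum over $\cM$ and the infimum over $\cW$ then gives the claim.

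The key step is to show that $\cM\cap\cW\neq\{0\}$. Both are subspaces of the real vector space $\HermZero(d)$, which has dimension $N=d^2-1$, and $\dim\cW=N-n+1$. Grassmann's formula gives
\begin{equation*}
\dim(\cM\cap\cW)\geq \dim\cM+\dim\cW-N = n+(N-n+1)-N=1.
\end{equation*}
So we can pick $X\in\cM\cap\cW$ with $\norm{X}_1=1$, that is, $X\in\cS_\cM\cap\cS_\cW$.

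For this $X$ we apply Eq.~\eqref{eq:subspace_contraction_range} twice. Since $X\in\cS_\cM$, we have $\ceta(T,\cM)\leq\norm{T(X)}_1$. Since $X\in\cS_\cW$, we have $\norm{T(X)}_1\leq\heta(T,\cW)$. Chaining these inequalities gives the desired bound.

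The only points needing care are minor. The first is that the intersection bound uses real dimensions, which is consistent because $\cM$ and $\cW$ are real subspaces of $\HermZero(d)$. The second is the existence of the optimizers. For fixed subspaces, the inner max and min are attained because $\cS_\cM$ is compact and $X\mapsto\norm{T(X)}_1$ is continuous. The outer optimization can be handled without knowing that optimizers exist: the inequality holds for all pairs, so it passes directly to the supremum and infimum. I do not expect a genuine obstacle here; the dimension count is the crux.
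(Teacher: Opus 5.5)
Your proof is correct and is essentially the paper's argument. The paper also uses the dimension count $\dim(\cM\cap\cW)\geq 1$, then chains $\ceta(T,\cW)\le\ceta(T,\cM\cap\cW)\le\heta(T,\cM\cap\cW)\le\heta(T,\cM)$; this is your single-vector chain phrased through the intersection subspace, with the labels swapped so that the paper's $\cM$ is the codimension-$(n-1)$ space. Your observation that the inequality holds for every pair, and therefore passes to the supremum and infimum, avoids relying on attainment of the outer optimization, which the paper's version assumes.
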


\begin{proof}
    Find two subspaces $\cM,\cW\subseteq\HermZero(d)$ with $\codim\cM=n-1$ and $\dim\cW=n$, such that their subspace restricted contraction and expansion coefficients give the contraction and expansion values. A direct dimensionality argument verifies $\dim( \cM\cap \cW ) \geq 1$, then
    \begin{equation}
        \ckappa_n(T) = \ceta(T, \cW) \leq
        \ceta(T, \cM\cap\cW)\leq \heta(T, \cM\cap\cW)
        \leq \heta(T, \cM)
        = \hkappa_n(T).
    \end{equation}
\end{proof}

To highlight the connection with the standard formulation of the min-max characterization of singular values \cite{Bhatia1996}, observe that the singular values of the linear map $T$ are obtained through analogous expressions by replacing the Schatten 1-norm with the 2-norm, i.e. the Frobenius norm, both in the objective function $\norm{T(X)}_1\mapsto\norm{T(X)}_2$ and in the unit sphere constraint $\norm{X}_1=1 \mapsto \norm{X}_2=1$. It is thus reasonable to search for a connection between these sequences and the singular values. The best our efforts can show is a direct consequence of the relation between the 1-norm and the 2-norm:

\begin{proposition}[Relation with singular values]\label{prop:relation_with_singular_values}
    For $T \in \hptp_0(d, d')$ we have that
    \begin{equation}
        \frac{1}{\sqrt{d}} s_n(T|_0) \leq \ckappa_n(T)\leq \hkappa_n(T) \leq \sqrt{d'} s_n(T|_0),
    \end{equation}
    where $\hkappa_n(T),\ckappa_n(T)$ are the contraction and expansion values, and $s_n(T|_0)$ are singular values of its restriction to traceless Hermitian matrices.
\end{proposition}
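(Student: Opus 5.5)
The plan is to compare the Schatten $1$- and $2$-norms subspace by subspace, and then pass the comparison through the min--max and max--min formulas. The middle inequality $\ckappa_n(T)\le\hkappa_n(T)$ is exactly Lemma~\ref{lemma:cvs_vs_evs}, so only the two outer bounds need proof. The basic tools are the norm equivalences $\norm{Y}_2\le\norm{Y}_1\le\sqrt{d}\,\norm{Y}_2$ on $\Herm(d)$, and likewise with $d'$ on $\Herm(d')$. We also use the Courant--Fischer characterization of the singular values of the real linear map $T|_0:\HermZero(d)\to\HermZero(d')$, with respect to the Hilbert--Schmidt inner product:
\begin{equation*}
s_n(T|_0)=\min_{\codim\cM=n-1}\max_{X\in\cM\setminus\{0\}}\frac{\norm{T(X)}_2}{\norm{X}_2}=\max_{\dim\cM=n}\min_{X\in\cM\setminus\{0\}}\frac{\norm{T(X)}_2}{\norm{X}_2}.
\end{equation*}
Here codimension is taken inside $\HermZero(d)$, which matches the convention of the definitions. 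By homogeneity, the inner optimizations in Definitions~\ref{def:contraction_values} and~\ref{def:expansion_values} can also be written as ratios $\norm{T(X)}_1/\norm{X}_1$ over $X\in\cM\setminus\{0\}$.

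For the upper bound, fix any $\cM$ with $\codim\cM=n-1$. For each nonzero $X\in\cM$ we have
\begin{equation*}
\frac{\norm{T(X)}_1}{\norm{X}_1}\le\frac{\sqrt{d'}\,\norm{T(X)}_2}{\norm{X}_2}.
\end{equation*}
This uses $\norm{T(X)}_1\le\sqrt{d'}\norm{T(X)}_2$ in the numerator and $\norm{X}_1\ge\norm{X}_2$ in the denominator. Taking the maximum over $X\in\cM$ gives $\heta(T,\cM)\le\sqrt{d'}\max_X\norm{T(X)}_2/\norm{X}_2$. Choosing $\cM$ to be the minimizer in the Courant--Fischer formula, namely the orthogonal complement of the top $n-1$ right singular vectors, yields $\hkappa_n(T)\le\sqrt{d'}\,s_n(T|_0)$.

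For the lower bound, fix any $\cM$ with $\dim\cM=n$. For each nonzero $X\in\cM$ we have
\begin{equation*}
\frac{\norm{T(X)}_1}{\norm{X}_1}\ge\frac{\norm{T(X)}_2}{\sqrt{d}\,\norm{X}_2}.
\end{equation*}
Taking the minimum over $X\in\cM$ and then choosing $\cM$ to be the span of the top $n$ right singular vectors gives $\ckappa_n(T)\ge s_n(T|_0)/\sqrt{d}$.

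The only step that needs care is checking that the Courant--Fischer formulas apply as stated. $T|_0$ is a linear map between the finite-dimensional real inner product spaces $(\HermZero(d),\langle\cdot,\cdot\rangle_{HS})$ and $(\HermZero(d'),\langle\cdot,\cdot\rangle_{HS})$. Its singular values are those of the real matrix representing it in orthonormal bases, so the standard real min--max theorem applies with the same codimension and dimension conventions. I do not expect any genuine obstacle beyond this bookkeeping.
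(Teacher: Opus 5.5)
Your proposal is correct and follows the paper's proof essentially step for step. You rewrite the min--max and max--min definitions as ratio optimizations, apply $\norm{X}_2\le\norm{X}_1\le\sqrt{d}\,\norm{X}_2$ on the input and output sides, invoke Courant--Fischer for $T|_0$, and get the middle inequality from Lemma~\ref{lemma:cvs_vs_evs}.
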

\begin{proof}
    The bound follows directly from the norm inequalities $\norm{X}_2 \leq \norm{X}_1 \leq \sqrt{d}\norm{X}_2$ for $X\in\cM_{d}$ and the Courant-Fischer-Weyl principle \cite{Bhatia1996}:
    \begin{align}
        \hkappa_n(T) &:= \min_{\codim \cM=n-1} \max_{X\in \cS_\cM} \norm{T(X)}_1
                    =\min_{\codim \cM=n-1} \max_{X\in\cM, X\neq 0} \frac{\norm{T(X)}_1}{\norm{X}_1} \\
                    &\leq \min_{\codim \cM=n-1} \max_{X\in\cM, X\neq 0} \frac{\sqrt{d'}\norm{T(X)}_2}{\norm{X}_2} \\
                    &=\sqrt{d'} s_n(T|_0),
    \end{align}
    and
    \begin{align}
        \ckappa_n(T) &:= \max_{\dim \cM=n} \min_{X\in \cS_\cM} \norm{T(X)}_1
                    =\max_{\dim \cM=n} \min_{X\in\cM, X\neq 0} \frac{\norm{T(X)}_1}{\norm{X}_1} \\
                    &\geq \max_{\dim \cM=n} \min_{X\in\cM, X\neq 0} \frac{\norm{T(X)}_2}{\sqrt{d} \norm{X}_2} \\
                    &=\frac{1}{\sqrt{d}} s_n(T|_0).
    \end{align}
    Finally, use $\ckappa_n\leq \hkappa_n$.
\end{proof}

Although we cannot claim the tightness of this result, we know that the bounds generally require dimensional factors. We show an example of this in Appendix~\ref{app:example_singluar_values}, which is mainly inspired by \cite{PerezGarcia2006}.

\section{Adversarial game  interpretation}\label{sec:adversarial_game_interpretation}

We are now in a position to introduce two distinguishability games that naturally illustrate the operational interpretation of the expansion and contraction values. In both games, there are two parties, which we call Alice and Bob. Alice randomly chooses to send one of two perfectly distinguishable states $\rho$ or $\sigma$ through a quantum channel $T$, and Bob attempts to guess which of the two states was sent. Both parties know the channel and the pair of states, but only Alice knows which was sent. The constraint that the states are perfectly distinguishable implies that they must have orthogonal support, denoted $\rho \perp \sigma$. Restricting to $\rho\perp\sigma$ (orthogonal support, not necessarily pure) loses no generality because every traceless Hermitian $X$ factorizes as $X=t(\rho'-\sigma')$ with $t=\frac{1}{2}\norm{X}_1$ and $\rho',\sigma'$ the normalized positive/negative parts, which have orthogonal support; hence $\norm{T(X)}_1/\norm{X}_1=\tfrac12\norm{T(\rho')-T(\sigma')}_1$ ranges over exactly the same set. This is what makes both games faithful to Eqs.~\eqref{eq:def_cv}--\eqref{eq:def_ev}. Bob's optimal success probability is determined by the distinguishability of the received states via Helstrom bound:
\begin{equation}
P_{\text{success}}=\frac{1}{2}\left(1+\frac{1}{2} \norm{T(\rho) - T(\sigma)}_1\right).
\end{equation}
Given a channel, the game setup requires choosing the state pair. While Bob will advocate for maximizing the success probability, Alice will take a stand for minimizing it. The best strategy for Bob is to choose states that yield:
\begin{equation}
\max_{\rho\perp \sigma} P_{\text{success}}
= \frac{1}{2}\left(1+ \max_{\rho\perp \sigma} \frac{1}{2} \norm{T(\rho) - T(\sigma)}_1\right)
= \frac{1}{2}\left(1+ \heta(T)\right),
\end{equation}
whereas Alice's optimal strategy leads to:
\begin{equation}
\min_{\rho\perp \sigma} P_{\text{success}}
= \frac{1}{2}\left(1+ \min_{\rho\perp \sigma} \frac{1}{2} \norm{T(\rho) - T(\sigma)}_1\right)
= \frac{1}{2}\left(1+ \ceta(T)\right).
\end{equation}
We construct two games, each with a distinct procedure for selecting input states. In both, Bob is gradually given more freedom over Alice or vice versa, but through different constraints. These procedures generate sequences of decreasing values, interpolating between Bob’s and Alice’s optimal strategies. The contraction and expansion values emerge from the optimal strategies at each level of freedom in these games. 

\paragraph{Game for Contraction Values} The procedure is indexed by a number $n$ ranging from $n= 1$ to $n=d^2-1$, respectively corresponding to the extremes where Bob or Alice get total freedom of choice. Firstly, Alice is allowed to introduce up to $n-1$ constraints given by observables $\{M_1,\dots, M_{n-1}\}$ such that $\rho$ and $\sigma$ must have identical expectation values for all of them. Note that her choice is equivalent to selecting a subspace $\cM\subseteq\HermZero(d)$ with $\codim\cM=n-1$, and that at $n=1$ she can do nothing. Then, Bob sets $\rho$ and $\sigma$ subject to $\tr( M_k \rho) = \tr( M_k \sigma) $. Afterward, Alice sends one of the two states, chosen at random, and Bob can implement an arbitrary one-shot measurement on the received state. If both parties play their optimal strategy, Bob's success probability is given by
\begin{equation}
P^n_{\text{success}} = \frac{1}{2}\left(1 + \hkappa_n \right),
\quad \text{where} \quad
\hkappa_n = \min_{M_1,\dots,M_{n-1}} \max_{\substack{\rho \perp \sigma \\ \tr( M_k \rho) = \tr( M_k \sigma)}} \frac{1}{2}\norm{T(\rho) - T(\sigma)}_1.
\end{equation}

\paragraph{Game for Expansion Values} Similarly, we index the procedure by ranging from $n=1$ to $d^2-1$, with the same meaning as above. In this case, Bob can introduce up to $d^2-1-n$ constraints given by observables $\{M_1,\dots, M_{d^2-1-n}\}$ such that $\rho$ and $\sigma$ must have identical expectation values for all of them. Note that his choice is equivalent to selecting a subspace $\cM\subseteq\HermZero(d)$ with $\dim\cM=n$, and that at $n=1$ he fully determines the states. Then, Alice sets $\rho$ and $\sigma$ subject to $\tr( M_k\rho) = \tr( M_k \sigma)$. Afterward, Alice randomly chooses to send one of the two states, and Bob implements an arbitrary one-shot measurement on the received state. If both parties play their optimal strategy, Bob's success probability is given by
\begin{equation}
P^n_{\text{success}} = \frac{1}{2}\left(1 + \ckappa_n \right),
\quad \text{where} \quad
\ckappa_n = \max_{M_1,\dots,M_{d^2-1-n}} \min_{\substack{\rho \perp \sigma \\ \tr( M_k \rho) = \tr( M_k \sigma)}} \frac{1}{2}\norm{T(\rho) - T(\sigma)}_1.
\end{equation}

\vspace{.5cm}

\begin{figure}
    \centering
    \includegraphics[width=0.9\linewidth]{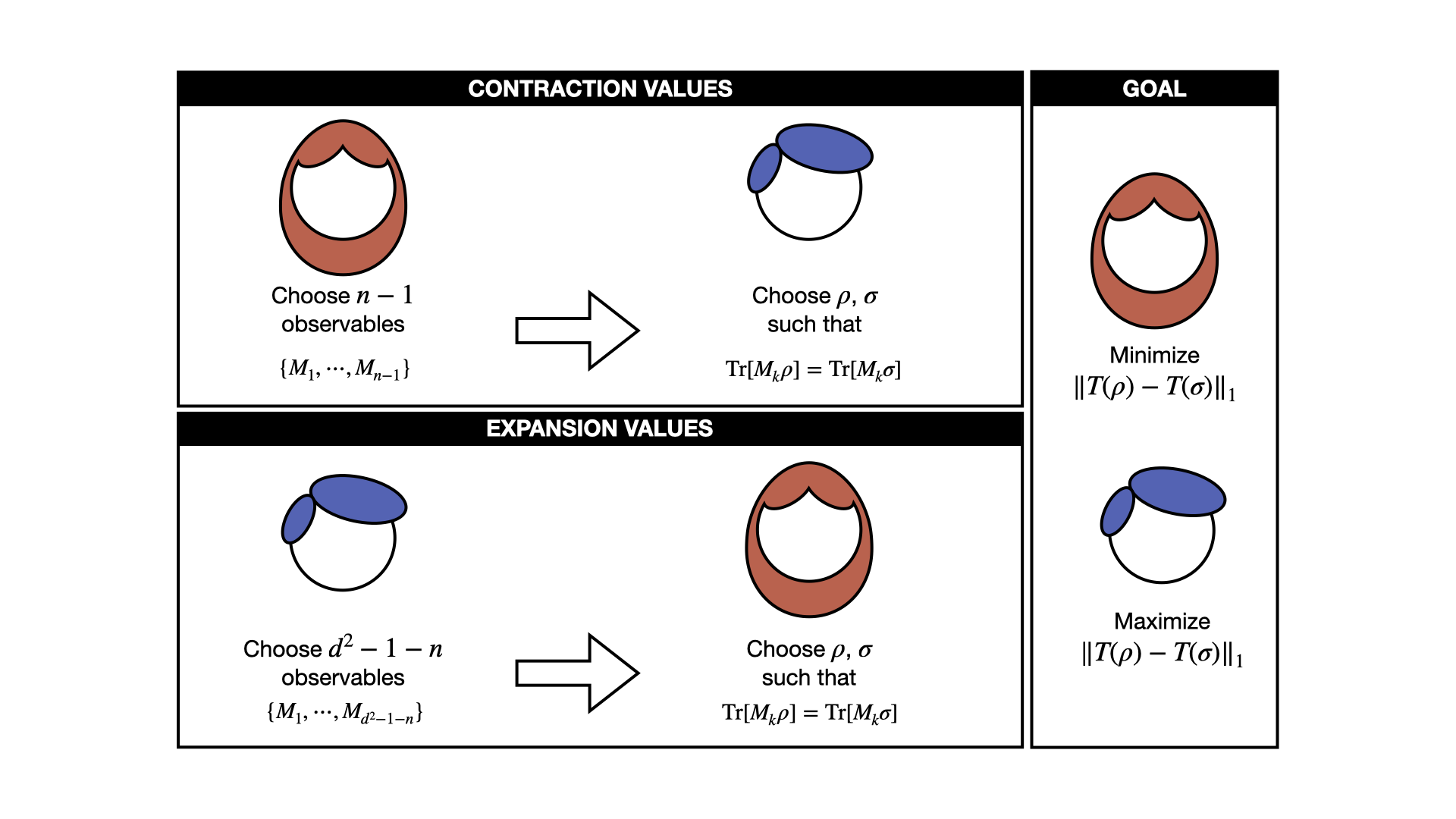}
    \caption{Given a channel $T$ and an index $1\leq n \leq d^2-1$, Alice and Bob agree on two states $\rho$ and $\sigma$ to play with. The game consists of Alice randomly choosing one of the states to send through the channel $T$, and Bob performing arbitrary one-shot measurements to guess which state was sent. The diagrams describe two ways to agree on the states, Bob’s success probability if both choose the optimal strategy is given by (a) the contraction coefficients, or (b) the expansion coefficients. }
    \label{fig:CV_EV_diagram}
\end{figure}

We represent both games diagrammatically in Fig.~\ref{fig:CV_EV_diagram}. The value sequences $\hkappa_n$ and $\ckappa_n$ resulting from the optimal strategies in the previous games are precisely the contraction and expansion values. To get an intuition for the contraction values, let us take Bob's viewpoint. He is given a quantum channel $T$ and a subspace $\cM\subseteq \HermZero(d)$ with $\codim \cM=n-1$. His goal is to choose two states $\rho$ and $\sigma$, subject to $\rho-\sigma\in\cM$, to maximize their final distinguishability after applying $T$. The $n$th contraction value, $\hkappa_n(T)$, answers the question: \textit{In the worst case, constrained to an extremely unfavorable $\cM$ with $\codim\cM=n-1$, what is the best success probability that Bob can achieve?}

On the other hand, take Bob's perspective in the second game. He is given a channel $T$ and the task of deciding a set of measurements $\{M_1,\dots, M_{d^2-1-n}\}$. His goal is to maximize the worst-case discrimination between $T(\rho)$ and $T(\sigma)$ where the input states have equal expectation values for all $M_k$. Then, the $n$th expansion value connects to the question: \textit{What is the $\cM$ with $\dim\cM=n$ that guarantees the best worst-case distinguishability?}

The restrictions to subspaces may be imposed by operational constraints, such as limited access to information, known symmetries of a physical system, or code spaces in cryptographic or error correction settings. The contraction values quantify the best achievable output distinguishability for any set of $n$ linear constraints. Whereas, in the expansion values we can view the operational constraints as tunable, and the goal is to find the $n$-dimensional subspace that guarantees the best worst-case performance. For instance, this suggests applicability to characterizing error-resilient subspaces.

\section{Connection with \texorpdfstring{$s$}{s}-numbers}\label{sec:s_numbers}

The declared motivation of this work is to generalize the singular values in such a way that we obtain a description of the contraction of the 1-norm instead of the 2-norm, preserving their relevant properties. In this Section, we show that we are indeed dealing with a proper generalization of the singular values called $s$-numbers. Precisely, we connect our definition of the contraction values to the Gel'fand numbers and the expansion values to Bernstein numbers. Below, we offer the necessary connection for readers who already know something about bounded operators, as well as for those who are simply curious about the deeper structure underlying our framework. We introduce all the required definitions as we go, although some parts may still feel a bit abstract. Even so, we encourage the reader to continue: the fact that these analytic tools turn out to fit naturally into the study of quantum channels is one of the most rewarding insights we have come across, and it places our framework on solid foundations.

Singular values give an ordered list describing how an operator stretches or contracts in different directions and how well low-rank maps can approximate it.  Many problems in analysis and numerics ask for invariants that measure ``size'', ``compactness'' or ``approximability'' of an operator; but outside Hilbert spaces, there is no canonical inner product and no unique singular-value decomposition. Pietsch's idea was to single out the minimal properties that a sequence of numbers should satisfy to be a generalization of singular values, and then study many concrete choices that fit those properties \cite{Pietsch1974}. The following introduction is based on Pietsch's work, namely his seminal paper \cite{Pietsch1974}, Chapter 11 in \cite{Pietsch1980opideals}, and Chapter 2 in \cite{Pietsch1987}.

As a short reminder, a \emph{Banach space} is a complete normed vector space. Let $E,F$ denote Banach spaces with corresponding norms $\norm{\cdot}_E,\norm{\cdot}_F$; we write $\cL (E,F)$ for the bounded linear maps $E\to F$. We endow the space of linear maps $\cL (E, F)$ with the operator norm
\begin{equation}
    \norm{T} = \sup \{ \norm{T(x)}_F :  \; x\in E, \; \norm{x}_E \leq 1 \}
    \quad \text{for} \quad
    T \in \cL (E, F).
\end{equation}
We use $J_W^E: W \to E$ to denote the injection from a subspace $W\subset E$ into $E$ acting as $J_W^E(z) = z$. Thus, the restriction of $T$ to $W$ corresponds to $T|_W = T\, J_W^E$.

\begin{definition}[$s$-numbers]\label{def:s_numbers}
    A rule $s:\; T \mapsto (s_n(T))_{n\in \bbN}$, which assigns to every bounded operator $T: E\to F$ a scalar sequence, is called an \emph{$s$-number sequence} or \emph{$s$-scale} \cite{Pietsch1987}, if the following conditions are satisfied:

    \begin{enumerate}
    \item \emph{Monotonicity.}\label{list:Monotonicity}
    \begin{equation}
        \norm{T} = s_1(T) \geq s_2(T) \geq \cdots \geq 0.
    \end{equation}
    
    \item \emph{Additivity.} \label{list:Additivity}
    For all $S,T\in\cL (E, F)$ and $n,m\in\bbN$,
    \begin{equation}
        s_{n+m-1}(S+T) \leq s_n(S) + s_m(T).
    \end{equation}
    
    \item \emph{Ideal property.} \label{list:Ideal}
    For $A\in\cL (E_0, E)$, $B\in\cL (F, F_0)$,
    \begin{equation}
        s_n(A T B) \leq \norm{A} \, s_n(T)\, \norm{B}.
    \end{equation}
    
    \item \emph{Rank property.} \label{list:Rank}
    If $\rank(T)< n$, then $s_n(T)=0$.
    
    \item \emph{Norming property.} \label{list:Norming}
    For the identity map $I_n:\ell_2^n\to \ell_2^n$, it yields $s_n(I_n) = 1$. 
    
    \end{enumerate}
\end{definition}

Many $s$-numbers satisfy an additional property:

\begin{definition}[Multiplicative $s$-numbers]
    An $s$-number sequence is said to be \emph{multiplicative} if for all Banach spaces $E,F,G$ and operators $T\in\cL (E, F)$, $S\in\cL (F, G)$, and all $n,m\in\bbN$,
    \begin{equation}\label{eq:multiplicative_s_numbers}
        s_{n+m-1}(S T) \leq s_n(S)\, s_m(T).
    \end{equation}
\end{definition}

In the case of compact operators on Hilbert spaces, Definition~\ref{def:s_numbers} completely determines the $s$-numbers, reducing to the usual singular values~\cite{Pietsch1974}, which of course are multiplicative. In general, the axioms leave room for a variety of choices. Six prominent examples are the approximation numbers, the Gel'fand numbers, the Weyl numbers, the Kolmogorov numbers, the Chang numbers, and the Hilbert numbers. We gather a brief review of their definitions and relations in Appendix~\ref{app:more_on_s_numbers}, while here we focus on the Gel'fand numbers, which are known to form a multiplicative $s$-scale.

\begin{definition}[Gel'fand numbers]
    Let  $T \in \cL(E, F)$ and $n\in \bbN$. The Gel'fand numbers are given by
    \begin{equation}    
        c_n(T) := \inf \qty{ \norm{T|_W} : W \subset E, \, \codim W < n }.
    \end{equation}
\end{definition}

To proceed with the analysis in this context, we need to identify the Banach space underlying the definition of the contraction and expansion values. This is the real vector space of traceless Hermitian operators acting on finite-dimensional Hilbert spaces equipped with the 1-norm. For a given dimension $d\in\bbN$, we identify it with $\HermZero(d)$ equipped with the 1-norm $\norm{\cdot}_1$ and denote $E_d \cong (\HermZero(d), \norm{\cdot}_1)$. 

Our interest is in $s$-numbers of operators in $\cL(E_d, E_{d'})$. In particular, we focus on operators $T|_0 \in \cL(E_d, E_{d'})$ induced by the restriction of quantum channels $T\in\cptp(d,d')$, or $\hptp_0$ maps in general. In our finite-dimensional setting, the optimizations take place over compact sets, such as unit balls or Grassmannians, and the objective functions are continuous; therefore, suprema and infima are always attained, allowing us to replace $\sup$ and $\inf$ with $\max$ and $\min$. In particular, $\norm{T|_0}=\heta(T)$.

As for any subspace $\cM\subseteq \HermZero(d)$ we have that $(T|_0)|_{\cM} = T|_{\cM}$, the Gel'fand numbers for $T|_0$ take the form
\begin{align}
    c_n(T|_0) &= \inf \qty{ \norm{T|_{\cM}} : \cM \subset E_d, \, \codim \cM < n } \\
              &= \min \qty{ \max \qty{\norm{T(X)}_1 : X \in \cS_{\cM} }: \cM \subset \HermZero(d), \, \codim \cM = n - 1} \\
              &= \hkappa_n(T).
\end{align}
This sets the following conclusion.

\begin{proposition}\label{prop:cvs_and_gelfand}
    The contraction values of a channel $T\in\cptp(d,d')$ are the Gel'fand numbers of $T|_0 \in \cL(\HermZero(d), \HermZero(d'))$.
\end{proposition}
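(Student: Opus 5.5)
The plan is to unfold both definitions and compare them term by term; nothing beyond finite-dimensional compactness is needed.

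First, fix $T\in\cptp(d,d')$. Since $T$ is trace preserving and Hermiticity preserving, it is in $\hptp_0(d,d')$, so the restriction $T|_0:\HermZero(d)\to\HermZero(d')$ is well defined. Equip both spaces with the Schatten $1$-norm, i.e.\ view $T|_0\in\cL(E_d,E_{d'})$. For any subspace $\cM\subseteq\HermZero(d)$ we have $(T|_0)|_\cM=T\,J_\cM^{E_d}$. Its operator norm is $\sup\{\norm{T(X)}_1: X\in\cM,\ \norm{X}_1\le 1\}$.

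Next, I would show that this supremum equals $\heta(T,\cM)=\max_{X\in\cS_\cM}\norm{T(X)}_1$. By homogeneity, for $X\neq0$ the value $\norm{T(X)}_1$ is at most $\norm{X}_1$ times its value at $X/\norm{X}_1$, and $\norm{X}_1\le 1$ on the unit ball. So the supremum over the ball is achieved on the sphere $\cS_\cM$. This sphere is compact because $\cM$ is finite-dimensional, so the supremum is a maximum. The trivial case $\cM=\{0\}$ only occurs for codimension $d^2-1$. There both sides should be read as $0$, with the convention that the max over the empty set is $0$.

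Then I would handle the outer optimization. The Gel'fand number is an infimum over subspaces with $\codim\cM<n$, whereas $\hkappa_n$ is a minimum over $\codim\cM=n-1$. Any $\cM$ with $\codim\cM=k<n-1$ contains a subspace $\cM'$ with $\codim\cM'=n-1$, obtained by imposing extra linear constraints. The monotonicity $\heta(T,\cM')\le\heta(T,\cM)$ noted after \eqref{eq:subspace_contraction_range} then shows that restricting to codimension exactly $n-1$ does not change the infimum. Finally, the map $\cM\mapsto\heta(T,\cM)$ on the Grassmannian of codimension-$(n-1)$ subspaces is continuous, and that Grassmannian is compact, so the infimum is attained and equals the minimum in Definition~\ref{def:contraction_values}. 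This gives $c_n(T|_0)=\hkappa_n(T)$ for $1\le n\le d^2-1$, as in the displayed chain preceding the statement.

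The main obstacle, and it is minor, is justifying continuity of $\cM\mapsto\heta(T,\cM)$ so that the infimum is attained. I expect to sidestep this entirely: the equality of values does not require attainment, provided $\hkappa_n$ is read as an infimum, which the paper already identifies with a minimum by compactness. Alternatively, one can parametrize $\cM$ by orthonormal (Frobenius) bases. The maximum of a jointly continuous function over a compact fiber then varies continuously, which gives attainment directly.
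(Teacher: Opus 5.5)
Your proposal is correct and is essentially the paper's argument: the paper likewise unfolds the definitions, identifies $\norm{(T|_0)|_{\cM}}$ with $\heta(T,\cM)$, and passes from $\codim\cM<n$ to a minimum over $\codim\cM=n-1$ by finite-dimensional compactness, and you simply make explicit the monotonicity and attainment steps it leaves implicit. Note that the case $\cM=\{0\}$ you set aside never arises for $1\le n\le d^2-1$, since then $\dim\cM=d^2-n\ge 1$.
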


We can similarly generalize the expansion values. For Banach spaces $E,F$ and $T\in\cL (E, F)$ we define the sequence
\begin{equation}\label{eq:bernstein_numbers}
    b_n(T) = \sup \qty{ \inf \qty{ \norm{T(x)}_F : x\in W, \, \norm{x}_E \geq 1} : W \subset E, \, \dim W \geq n}.
\end{equation}
As above, this reduces to $b_n(T|_0) = \ckappa_n(T)$ if $T$ is $\hptp_0$. Equation~\eqref{eq:bernstein_numbers} is precisely the definition of the Bernstein numbers, see Ref.~\cite{Nguyen2015} for an updated review. It is known, and it is direct to show, that $b_n$ satisfy Conditions \eqref{list:Monotonicity}, \eqref{list:Ideal}, and \eqref{list:Rank}; and the norming Condition \eqref{list:Norming}. However, they only satisfy a weaker form of additivity \eqref{list:Additivity}~\cite{Pietsch2008},
\begin{equation}\label{eq:weak_additivity_ev}
    b_n(T + S) \leq b_n(T) + \norm{S}.
\end{equation}
In fact, this was the original form proposed for the additivity property \cite{Pietsch1974}, and, in particular, it already suffices to guarantee that the sequence reduces to the usual singular values for operators on Hilbert spaces (Theorem 2.1 in \cite{Pietsch1974}). It is also established that $b_n \leq c_n$. Moreover, Ref.~\cite{Pietsch2008} reveals that the Bernstein and Gel'fand numbers generally take different values, which suggests that contraction and expansion values can be different too. However, we could not formulate an example.

\section{Properties}\label{sec:properties}

Framing the contraction and expansion values within the theory of $s$-numbers allows us to transfer structural properties directly. In particular, it guarantees that both sequences are non-increasing and equal the contraction and expansion values at the extremes. Additionally, the additivity property implies that they are continuous: small additive perturbations of the channel lead to correspondingly small variations of the associated values. While contraction and expansion values may generally differ, they nonetheless share basic qualitative features; most notably, they vanish under the same conditions. The latter is a consequence of the fourth condition for $s$-numbers, which implies that $\hkappa_n(T)=\ckappa_n(T)=0$ if and only if $n>\rank(T|_0)$. Beyond the comparison established in Lemma~\ref{lemma:cvs_vs_evs}, this perspective also yields a simple relation linking contraction and expansion values of an invertible map and its inverse, stated below.

\begin{lemma}\label{lemma:cvs_vs_evs_inv}
	Let $T\in\hptp_0(d)$ be invertible and $1\leq n \leq d^2-1$, then,
    \begin{equation}
        \hkappa_n(T) \ckappa_{d^2-n}(T^{-1}) = 1
        \quad \text{and} \quad
        \ckappa_n(T) \hkappa_{d^2-n}(T^{-1}) = 1
        .
    \end{equation}
\end{lemma}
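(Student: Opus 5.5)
Let $N=d^2-1=\dim\HermZero(d)$. Since $T$ is invertible and $\hptp_0$, $T|_0$ is a bijection of $\HermZero(d)$: it maps traceless to traceless injectively in finite dimension. Hence $T^{-1}$ is also $\hptp_0$ and $(T^{-1})|_0=(T|_0)^{-1}$. The plan is to transport subspaces through $T$: a subspace $\cM$ with $\codim\cM=n-1$ is sent to $\cW=T(\cM)$, which has $\dim\cW=N-n+1=d^2-n$, and this correspondence is a bijection between the two Grassmannians. The indices in the statement match exactly this codimension/dimension pairing, so the whole proof reduces to a subspace-by-subspace identity.

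\textbf{Key identity.} For any nonzero subspace $\cM$ with image $\cW=T(\cM)$,
\begin{equation}
    \heta(T,\cM)\,\ceta(T^{-1},\cW)=1 .
\end{equation}
To see this, write $\heta(T,\cM)=\max_{X\in\cM\setminus\{0\}}\norm{T(X)}_1/\norm{X}_1$ and substitute $Y=T(X)$, which ranges over $\cW\setminus\{0\}$. This gives
\begin{equation}
    \heta(T,\cM)=\max_{Y\in\cW\setminus\{0\}}\frac{\norm{Y}_1}{\norm{T^{-1}(Y)}_1}
    =\Bigl(\min_{Y\in\cW\setminus\{0\}}\frac{\norm{T^{-1}(Y)}_1}{\norm{Y}_1}\Bigr)^{-1}
    =\ceta(T^{-1},\cW)^{-1}.
\end{equation}
Both quantities are strictly positive because $T$ is injective and the unit spheres are compact.

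\textbf{First equality.} Using Eq.~\eqref{eq:CVs_EVs_from_subspace_coeffs},
\begin{equation}
    \hkappa_n(T)=\min_{\codim\cM=n-1}\heta(T,\cM)=\min_{\dim\cW=d^2-n}\ceta(T^{-1},\cW)^{-1}=\Bigl(\max_{\dim\cW=d^2-n}\ceta(T^{-1},\cW)\Bigr)^{-1}=\ckappa_{d^2-n}(T^{-1})^{-1}.
\end{equation}
The second step uses the Grassmannian bijection $\cM\mapsto T(\cM)$, and the third uses that inversion is decreasing on positive reals, so a minimum of reciprocals is the reciprocal of the maximum.

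\textbf{Second equality.} Apply the same argument with the roles of $T$ and $T^{-1}$ swapped. Take $\dim\cM=n$ and $\cW=T(\cM)$, so $\codim\cW=N-n=d^2-n-1$; this is precisely the index for $\hkappa_{d^2-n}$, since the definition uses codimension $(d^2-n)-1$. The identity $\ceta(T,\cM)\,\heta(T^{-1},\cW)=1$ follows from the same substitution, and exchanging max and min via reciprocals gives $\ckappa_n(T)\,\hkappa_{d^2-n}(T^{-1})=1$.

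\textbf{Main obstacle.} The only step requiring care is the index bookkeeping: checking that codimension $n-1$ corresponds to dimension $d^2-n$, and that codimension $d^2-n-1$ is the right one for $\hkappa_{d^2-n}$. Positivity, which justifies taking reciprocals, holds automatically by invertibility and compactness.
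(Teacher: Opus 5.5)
Your proof is correct and follows the paper's argument. Both use the substitution $Y=T(X)$, the fact that $\cM\mapsto T(\cM)$ is a bijection onto subspaces of dimension $d^2-n$ when $\codim\cM=n-1$, and taking reciprocals to exchange $\min$ and $\max$. The only difference is that you redo the argument for the second identity, whereas it also follows at once from the first one applied to $T^{-1}$ with index $d^2-n$.
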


\begin{proof}
    First, note that the contraction and expansion values of $T$ and $T^{-1}$ are strictly positive because both are of full rank. 
    Now by the absolute homogeneity of the norm, we can equivalently write
    \begin{align}
        \hkappa_n(T) &= \min_{\codim \cM=n-1} \max_{X \in \cS_{\cM}} \norm{T(X)}_1\\
                     &= \min_{\codim \cM=n-1} \max_{X \in \cM, X\neq 0} \frac{\norm{T(X)}_1}{\norm{X}_1}.
    \end{align}
    Then, as $T$ is a invertible map, we can define $Y = T(X)$ and have $X=T^{-1}(Y)$. Additionally, $T$ preserves the dimension of the subspaces, so
    \begin{align}
        \hkappa_n(T) &= \min_{\codim \cM=n-1} \max_{Y \in T(\cM), Y\neq 0} \frac{\norm{Y}_1}{\norm{T^{-1}(Y)}_1}\\
                     &= \min_{\codim \cM=n-1} \max_{Y \in \cM, Y\neq 0} \frac{\norm{Y}_1}{\norm{T^{-1}(Y)}_1}\\
                     &= \min_{\codim \cM=n-1} \max_{Y \in \cS_\cM} \frac{1}{\norm{T^{-1}(Y)}_1}\\
                     &= \qty(\max_{\codim \cM=n-1} \min_{Y \in \cS_\cM} \norm{T^{-1}(Y)}_1)^{-1}\\
                     &= \qty(\ckappa_{d^2-n}(T^{-1}))^{-1}.
    \end{align}
    This relation can also be deduced from the connection established in Section~\ref{sec:s_numbers} and Lemma 7.3 in~\cite{Pietsch1974}.
\end{proof}

It is convenient to settle that zero-padding embeddings preserve the contraction and expansion values.

\begin{lemma}\label{lemma:embedding}
    Let $d\leq D$ and $d'\leq D'$. Let $T\in\hptp_0(d,d')$, and let $\tilde{T}\in\hptp_0(D,D')$ be the map obtained by embedding $T$ into the larger spaces, namely,
    \begin{equation}
        \tilde{T}(X\oplus Y)=T(X)\oplus 0,
    \end{equation}
    where $X\in\HermZero(d)$ and $Y\in\HermZero(D)\ominus\HermZero(d)$. Then, for every $1\le n\le d^2-1$,
    \begin{equation}
        \hkappa_n(\tilde{T})=\hkappa_n(T),
        \qquad
        \ckappa_n(\tilde{T})=\ckappa_n(T).
    \end{equation}
\end{lemma}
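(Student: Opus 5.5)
The plan is to reduce the statement to the ideal property of $s$-numbers (Definition~\ref{def:s_numbers}), using Proposition~\ref{prop:cvs_and_gelfand} to identify $\hkappa_n$ with Gel'fand numbers and the discussion after Eq.~\eqref{eq:bernstein_numbers} to identify $\ckappa_n$ with Bernstein numbers. Both sequences satisfy Conditions~\eqref{list:Monotonicity}--\eqref{list:Rank} in the form needed.

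The key structural fact is a factorization of the two maps through each other. Consider the padding map $P:\HermZero(d)\to\HermZero(D)$, where $P(X)$ is $X$ placed in the upper-left $d\times d$ block with zeros elsewhere. Consider also the compression map $Q:\HermZero(D')\to\HermZero(d')$ that extracts the upper-left $d'\times d'$ block, followed by a correction to keep it traceless.

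\emph{Direction $\tilde T|_0 \Rightarrow T|_0$.} By the definition of $\tilde T$,
\begin{equation}
    \tilde T|_0 = \iota'\, T|_0\, \pi,
\end{equation}
where $\pi:\HermZero(D)\to\HermZero(d)$ is the projection onto the first summand of the decomposition $X\oplus Y$ in the statement, and $\iota':\HermZero(d')\to\HermZero(D')$ is zero-padding. Padding with zeros preserves the 1-norm, so $\norm{\iota'}=1$. The ideal property then gives $s_n(\tilde T|_0)\le \norm{\pi}\, s_n(T|_0)$.

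\emph{Direction $T|_0 \Rightarrow \tilde T|_0$.} Here $T|_0 = Q\,\tilde T|_0\, P$ with $\norm{P}=1$. Because $\tilde T(P(X))=T(X)\oplus 0$, one can take $Q$ to be the exact block extraction on the range of $\tilde T$, which is all that matters. So I would write $T|_0 = Q_0\,\tilde T|_0\,P$, where $Q_0$ is block extraction restricted to the subspace $\HermZero(d')\oplus 0$; there it is an isometry. This gives $s_n(T|_0)\le s_n(\tilde T|_0)$.

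The main obstacle is the norm of $\pi$. The orthogonal complement decomposition $\HermZero(D)\ominus\HermZero(d)$ is taken in Hilbert–Schmidt, and the resulting projection need not be a 1-norm contraction. Off-diagonal blocks are harmless, since pinching to block-diagonal form is 1-norm contractive and then taking the upper block is contractive. The problem is the diagonal trace-balancing component, e.g.\ the direction $\bbI_d/d-\bbI_{D-d}/(D-d)$.

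To handle this, I would avoid $\pi$ and argue directly from the variational definitions, as in the proof of Lemma~\ref{lemma:cvs_vs_evs_inv}.

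For $\hkappa_n(\tilde T)\le\hkappa_n(T)$: given an optimal $\cM\subseteq\HermZero(d)$ of codimension $n-1$, take $\tilde\cM=\cM\oplus(\HermZero(D)\ominus\HermZero(d))$. It has the same codimension, and for $Z=X\oplus Y$ in it one needs $\norm{T(X)}_1\le \heta(T,\cM)\norm{Z}_1$. This follows if $\norm{X}_1\le\norm{X\oplus Y}_1$. I expect the paper's intended reading of $\tilde T$ to be exactly this pinching-type projection, where the inequality holds by pinching. Failing that, I would replace $\pi$ by $X\mapsto$ (upper block) $-\tfrac{\tr(\cdot)}{d}\bbI_d$ and accept a bounded constant.

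For the reverse inequality, restrict any $\tilde\cM$ of codimension $n-1$ to $\tilde\cM\cap P(\HermZero(d))$, whose codimension in $\HermZero(d)$ is at most $n-1$. On that intersection, $\tilde T$ agrees with $T$ isometrically.

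The expansion values are handled symmetrically, using subspaces $P(\cW)$ in one direction and intersections/projections in the other.
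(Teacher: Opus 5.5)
\textbf{Gap: the key norm estimate is left unproved.} Your structure is right, and you treat arbitrary subspaces more carefully than the paper does. The paper's proof only considers block-split subspaces $\cW_1\oplus\cW_2$ and $\cM_1\oplus\cM_2$. But your proposal stops at the one estimate everything depends on, and even asserts it may fail. Two of the four inequalities need $\norm{\pi(Z)}_1\le\norm{Z}_1$ for all $Z\in\HermZero(D)$:
\begin{itemize}
\item $\hkappa_n(\tilde T)\le\hkappa_n(T)$, via $\tilde\cM=\cM\oplus(\HermZero(D)\ominus\HermZero(d))$;
\item $\ckappa_n(\tilde T)\le\ckappa_n(T)$, via projecting onto $\HermZero(d)$ a $\tilde\cW$ that meets $\HermZero(D)\ominus\HermZero(d)$ only in $0$.
\end{itemize}
Under the literal Hilbert--Schmidt reading of the statement, $\pi(Z)=Z_{11}-\frac{\tr Z_{11}}{d}\bbI_d$. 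This is exactly your ``fallback'' map, so there is no alternative reading to appeal to. ``Accepting a bounded constant'' would only give $\hkappa_n(\tilde T)\le\norm{\pi}\,\hkappa_n(T)$, not the claimed equalities. Switching to the variational definitions does not sidestep the issue either: the inequality $\norm{X}_1\le\norm{X\oplus Y}_1$ you need there is the same statement.

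\textbf{The estimate holds and is short.} Write $Z\in\HermZero(D)$ in block form and let $t=\tr Z_{11}=-\tr Z_{22}$. Pinching gives $\norm{Z}_1\ge\norm{Z_{11}}_1+\norm{Z_{22}}_1$, and clearly $\norm{Z_{22}}_1\ge\abs{t}$. Hence
\begin{equation*}
    \norm{\pi(Z)}_1\le\norm{Z_{11}}_1+\abs{t}\le\norm{Z_{11}}_1+\norm{Z_{22}}_1\le\norm{Z}_1 .
\end{equation*}
So $\norm{\pi}=1$. The trace-balancing direction $\bbI_d/d-\bbI_{D-d}/(D-d)$ you were worried about lies in the kernel of $\pi$, so it cannot expand anything.

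With this, your first (ideal-property) argument goes through as written. The same estimate on the output side shows that the analogous projection $\pi':\HermZero(D')\to\HermZero(d')$ has norm one on all of $\HermZero(D')$ and satisfies $\pi'\,\tilde T|_0\,P=T|_0$. This also removes the need to define $Q_0$ only on the range of $\tilde T$. The paper's proof uses the same inequality tacitly, when it equates $\heta(\tilde T,\cM_1\oplus\cM_2)$ with $\heta(T,\cM_1)$.
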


\begin{proof}
    Write $V=\HermZero(d)$ and $\widetilde V=\HermZero(D)=V\oplus V^\perp$. For the expansion values, consider $\tilde{\cW} = \cW_1 \oplus \cW_2$ with $\cW_1 \subseteq V, \cW_2 \subseteq V^{\perp}$ and $\dim \cW = n$ for $1\leq n \leq d^2-1$. Note that if $\cW_2 \neq \{0\}$ then $\ceta(\tilde{T}, {\cW}) = 0$, whereas $\ceta(\tilde{T}, \cW_1) = \ceta(T, \cW_1) \geq 0$. Therefore,
    \begin{equation}
        \max_{\dim \tilde{\cW} = n} \ceta(\tilde{T}, \tilde{\cW}) = \max_{\dim \cW_1 = n} \ceta(T, \cW_1) = \ckappa_n(T).
    \end{equation}
    For the contraction values, consider $\tilde{\cM} = \cM_1\oplus\cM_2$ with $\cM_1 \subseteq V, \cM_2 \subseteq V^{\perp}$ and $\codim \cM = n - 1$ for $1\leq n \leq d^2-1$. As $\tilde{T}$ maps every component in $V^{\perp}$ to zero we have that $\heta(\tilde{T}, \tilde{\cM}) = \heta(T, \cM_1)$. Then, $\dim \tilde{\cM} = D^2 - n$ and $\dim \cM_2 \leq D^2  - d^2$ implies $\dim \cM_1 \geq d^2 - n$, hence as a subspace of $V$ it holds that $\codim \cM_1 \leq n - 1$. Finally,
    \begin{equation}
        \min_{\codim \tilde{\cM} = n-1} \heta(\tilde{T}, \tilde{\cM}) = \min_{\codim \cM_1 \leq n - 1} \heta(T, \cM_1) = \hkappa_n(T).
    \end{equation}
\end{proof}

The most distinctive feature of the contraction and expansion coefficients is their compositional behavior. When two channels are concatenated, the coefficients of the individual channels combine in multiple ways from `perfect alignment', where contraction effects reinforce, to `destructive misalignment', where weak contraction values of one channel counteract strong contraction values of the other. Proposition~\ref{prop:multiplicative} establishes precise bounds for the contraction and expansion values of a composite channel in terms of its constituents.

\begin{proposition}[Multiplicative properties]\label{prop:multiplicative}
	Let $T\in\hptp_0(d, d')$ and $S\in\hptp_0(d',d'')$, the following multiplicative properties hold:
    \begin{enumerate}
        \item 
        \begin{equation}
    		\hkappa_{n+m-1}(S \circ T) \leq  \hkappa_n(S) \hkappa_m(T).
    	\end{equation}
	    \item
    	\begin{equation}
    		\ckappa_{n+m-d^2+1}(S \circ T) \geq  \ckappa_n(S) \ckappa_m(T).
    	\end{equation}
	    \item
    	For $n+m=d^2$
        \begin{equation}
            \hkappa_{n}(S \circ T) \geq 
            	\max\qty{ \ckappa_{m}(S) \hkappa_n(T),
            			   \hkappa_{n}(S) \ckappa_{m}(T)},
        \end{equation}
        \item and, similarly,
        \begin{equation}
            \ckappa_{n}(S \circ T) \leq
            	\min\qty{ \ckappa_{m}(S) \hkappa_n(T),
            			  \hkappa_{n}(S) \ckappa_{m}(T)}.
        \end{equation}
    \end{enumerate}
\end{proposition}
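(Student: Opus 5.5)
The plan is to prove all four inequalities directly from the variational formulas \eqref{eq:CVs_EVs_from_subspace_coeffs}, by dimension counting on intersections of subspaces, in the spirit of the proof of Lemma~\ref{lemma:cvs_vs_evs}. The one identity used throughout is that for every $X$,
\begin{equation*}
\norm{S(T(X))}_1 \le \heta(S,\cV)\norm{T(X)}_1 \quad\text{and}\quad \norm{S(T(X))}_1 \ge \ceta(S,\cV)\norm{T(X)}_1 \quad\text{whenever } T(X)\in\cV,
\end{equation*}
together with \eqref{eq:subspace_contraction_range}. Items 1 and 2 are clean. Item 1 is also just the multiplicativity of Gel'fand numbers (Proposition~\ref{prop:cvs_and_gelfand}), but the argument is short enough to give. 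Items 3 and 4 are where I expect trouble.

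\emph{Item 1.} Take $\cM_T\subseteq\HermZero(d)$ with $\codim\cM_T=m-1$ and $\heta(T,\cM_T)=\hkappa_m(T)$. Take $\cM_S\subseteq\HermZero(d')$ with $\codim\cM_S=n-1$ and $\heta(S,\cM_S)=\hkappa_n(S)$. Set $\cU=\cM_T\cap T^{-1}(\cM_S)$. The preimage of a subspace of codimension $n-1$ has codimension at most $n-1$, so $\codim\cU\le n+m-2$. For $X\in\cU$ we have $T(X)\in\cM_S$, hence $\norm{ST(X)}_1\le\hkappa_n(S)\norm{T(X)}_1\le\hkappa_n(S)\hkappa_m(T)\norm{X}_1$. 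Enlarging $\cU$ to a superspace of codimension exactly $n+m-2$ is not needed, since $\heta$ is monotone in the subspace and $\hkappa$ is non-increasing in the index. This gives item 1.

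\emph{Item 2.} Take $\cW_T$ of dimension $m$ with $\ceta(T,\cW_T)=\ckappa_m(T)$, and $\cW_S\subseteq\HermZero(d')$ of dimension $n$ with $\ceta(S,\cW_S)=\ckappa_n(S)$. If $\ckappa_m(T)=0$ there is nothing to prove. Otherwise $T$ is injective on $\cW_T$, so $\dim T(\cW_T)=m$. Then $T(\cW_T)\cap\cW_S$ has dimension at least $n+m-(\dim\HermZero(d'))$; the statement's $d^2$ corresponds to $d=d'$, and otherwise $d'^2$ appears. Its preimage $\cU$ inside $\cW_T$ has the same dimension, and on $\cU$ we have $\norm{ST(X)}_1\ge\ckappa_n(S)\ckappa_m(T)\norm{X}_1$. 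Since $\ckappa$ is a max over subspaces and is monotone, this bounds $\ckappa_{n+m-d^2+1}(S\circ T)$ from below.

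\emph{Items 3 and 4.} Here $n+m=d^2$, so subspaces of codimension $n-1$ are exactly the $m$-dimensional ones, and the two variational problems range over the same Grassmannian. For item 3 I fix an arbitrary $\cM$ of dimension $m$. The aim is to exhibit $X\in\cS_\cM$ with $\norm{ST(X)}_1\ge\hkappa_n(S)\ckappa_m(T)$, and similarly with the roles of $S$ and $T$ swapped. For the term $\hkappa_n(S)\ckappa_m(T)$, the first thing I would try is the composition inequality $\ceta(ST,\cM)\ge\ceta(S,T(\cM))\ceta(T,\cM)$ stated after \eqref{eq:subspace_contraction_range}, together with its maximal analogue
\begin{equation*}
\heta(ST,\cM)\ge \heta(S,T(\cM))\,\ceta(T,\cM).
\end{equation*}
This analogue follows by choosing $Y=T(X)$ maximizing for $S$ on $T(\cM)$. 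One then needs $\heta(S,T(\cM))\ge\hkappa_n(S)$ when $T$ is injective on $\cM$, since then $T(\cM)$ has dimension $m$, i.e.\ codimension $n-1$ for $d=d'$. The remaining factor $\ceta(T,\cM)$ must be compared with $\ckappa_m(T)$, which is the \emph{maximum} of $\ceta(T,\cdot)$ over $m$-dimensional subspaces, so the inequality points the wrong way. This is the main obstacle. To get around it, I would intersect $\cM$ with the optimal subspace $\cW_T$ for $\ckappa_m(T)$ and apply the Lemma~\ref{lemma:cvs_vs_evs} dimension count, or alternatively reduce to the invertible case and use Lemma~\ref{lemma:cvs_vs_evs_inv} together with item 1 applied to $S=(S\circ T)\circ T^{-1}$. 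Item 4 should then follow by the mirror argument with max and min exchanged, using $\ceta(ST,\cM)\le\heta(S,T(\cM))\ceta(T,\cM)$ and its counterpart.
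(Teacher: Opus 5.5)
Your items 1 and 2 are correct. Item 1 is the standard proof of multiplicativity of Gel'fand numbers, which is all the paper invokes. For item 2 the paper takes a longer route. It applies item 1 to $T^{-1}\circ S^{-1}$, converts via Lemma~\ref{lemma:cvs_vs_evs_inv}, extends to singular maps by density and continuity, and handles unequal dimensions with Lemma~\ref{lemma:embedding}. Your direct intersection count avoids all of this and also gets the dimension right: what enters is the intermediate space, $d'^2-1$. For $d<d'$ the index $n+m-d^2+1$ in the statement can genuinely fail. Take $T$ the isometric embedding of a qubit into a qutrit and $S|_0$ a linear projection of $\HermZero(3)$ with kernel $T(\HermZero(2))$, with $n=3$, $m=1$. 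Then $S\circ T$ vanishes on traceless operators, while $\ckappa_3(S)\ckappa_1(T)\ge 1$.

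The gap is in items 3 and 4, and it cannot be closed: the obstacle you isolated is real, because both items are false as stated. With $T=\id_d$ all values of $T$ equal $1$, and:
\begin{itemize}
\item item 3 becomes $\hkappa_n(S)\ge\ckappa_{d^2-n}(S)$, which for $n=d^2-1$ says $\ceta(S)\ge\heta(S)$;
\item item 4 with $n=1$ says $\heta(S)\le\ceta(S)$.
\end{itemize}
Both fail for the qubit amplitude damping channel, whose values are $\sqrt{1-\lambda},\sqrt{1-\lambda},1-\lambda$. This is not only a boundary effect. For the Pauli-diagonal qubit channels $\Delta_S=\mathrm{diag}(\tfrac12,\tfrac12,0)$ and $\Delta_T=\mathrm{diag}(0,\tfrac12,\tfrac12)$, item 3 with $n=m=2$ would give $0=\hkappa_2(S\circ T)\ge\tfrac14$.

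Neither workaround can help:
\begin{itemize}
\item The intersection $\cM\cap\cW_T$ may be $\{0\}$. In any case its image is too small to force $\heta(S,\cdot)\ge\hkappa_n(S)$.
\item The inversion trick, done carefully, yields $\hkappa_a(S\circ T)\ge\hkappa_{a+b-1}(S)\,\ckappa_{d^2-b}(T)$. Matching the right-hand side $\hkappa_n(S)\ckappa_m(T)$ forces $a=1$.
\end{itemize}

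The paper's proof has the same defect in hidden form. It exhibits a single subspace and bounds $\heta(S\circ T,\cdot)$ from below there. The subspace is $T^{-1}(\cW)$ with $\cW$ optimal for $\ckappa_m(S)$, or a subspace optimal for $\ckappa_m(T)$. This controls $\heta(S\circ T)$, but not the minimum over all codimension-$(n-1)$ subspaces that defines $\hkappa_n(S\circ T)$. What does hold, for $d=d'$ and $n+m=d^2$, consistently with the qubit inequalities in Section~\ref{sec:vs_single_qubit}, is
\begin{equation*}
\heta(S\circ T)\ge\max\qty{\ckappa_m(S)\hkappa_n(T),\,\hkappa_n(S)\ckappa_m(T)},\qquad \ceta(S\circ T)\le\min\qty{\ckappa_m(S)\hkappa_n(T),\,\hkappa_n(S)\ckappa_m(T)}.
\end{equation*}
Your own tools prove this directly.
\begin{itemize}
\item First inequality: use $\heta(ST,\cM)\ge\heta(S,T(\cM))\ceta(T,\cM)$ on the $m$-dimensional $\cM$ optimal for $\ckappa_m(T)$. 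The other term follows from the analogous bound on $\cM=T^{-1}(\cW_S)$.
\item Second inequality: take $\cM$ optimal for $\hkappa_n(T)$, so $\dim\cM=m$. Either $T$ has a kernel in $\cM$, or $\ceta(S,T(\cM))\le\ckappa_m(S)$ gives an $X$ with $\norm{ST(X)}_1\le\ckappa_m(S)\hkappa_n(T)\norm{X}_1$. For the other term, use an $m$-dimensional subspace of $T^{-1}(\cM_S)$ with $\cM_S$ optimal for $\hkappa_n(S)$.
\end{itemize}
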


\begin{remark}
    Note that the inequalities in Proposition~\ref{prop:multiplicative} can be equivalently stated by minimizing or maximizing over all possible indices, for instance, 
    \begin{equation}
        \hkappa_n(S \circ T) \leq \min_{1\leq k \leq n} \; \hkappa_{k}(S) \hkappa_{n-k+1}(T).
    \end{equation}
\end{remark}

\begin{proof}
    The first result is a direct consequence of the submultiplicativity of the Gel'fand numbers and identifying the contraction values with them, see Proposition~\ref{prop:cvs_and_gelfand}. For the third inequality, find subspaces $\cM\subseteq\HermZero(d)$ and $\cW\subseteq\HermZero(d')$, such that
    \begin{align}
        \ceta(S,\cW) = \ckappa_{d^2-n}(S) \quad \text{and} \quad \codim \cW = n-1, \\
        T(\cM) \subseteq \cW \quad \text{and} \quad \codim \cM \leq n-1.
    \end{align}
    Then,
    \begin{align}
        \heta(S T, \cW) &\geq \ceta(S, T(\cM)) \heta(T, \cM) \geq \ceta(S, \cW) \heta(T, \cM)\\
        &= \ckappa_{d^2-n}(S) \heta(T, \cM)\\
        &\geq \ckappa_{d^2-n}(S) \hkappa_n(T).
    \end{align}
    Similarly, find $\cM$ such that $\ceta(T, \cM) = \ckappa_{d^2-n}(T)$ with $\codim \cM = n-1$, then,
    \begin{align}
        \heta(S T, \cM) &\geq \heta(S, T(\cM)) \ceta(T, \cM)\\
        &= \heta(S, T(\cM)) \ckappa_{d^2-n}(T)\\
        &\geq \hkappa_{n}(S) \ckappa_{d^2-n}(T).
    \end{align}

    Finally, the second and fourth are directly implied by the previous two through Lemma~\ref{lemma:cvs_vs_evs_inv} for invertible maps, and using continuity to extend it to all maps. Assume first that $d=d'=d''$ and that $T$ and $S$ are invertible. Applying part (1) to $T^{-1} \circ S^{-1}$ gives
    \begin{equation}
        \hkappa_{n+m-1}(T^{-1} \circ S^{-1}) \leq  \hkappa_n(T^{-1}) \hkappa_m(S^{-1}).
    \end{equation}
    Using Lemma~\ref{lemma:cvs_vs_evs_inv},
    \begin{equation}
        \ckappa_{d^2 -(n+m-1)}(S \circ T) \geq  \ckappa_{d^2-n}(T) \ckappa_{d^2-m}(S).
    \end{equation}
    Relabeling the indices yields the desired inequality. The fourth inequality follows analogously from part (iii). Finally, since invertible maps are dense in $\hptp_0(d)$ and the contraction and expansion values depend continuously on the map, both inequalities extend to non-invertible $T$ and $S$. Finally, to extend to arbitrary dimensions $d\neq d' \neq d''$, we can embed $T$ and $S$ into a common space associated to $D=\max\{d,d',d''\}$, as shown in Lemma~\ref{lemma:embedding}, which preserves the first contraction and expansion values.
\end{proof}

As a simple instance, if we focus on the contraction and expansion coefficient, the first two inequalities combined with $\ckappa_{n}(T) \leq \hkappa_{n}(T)$ yield the intervals
\begin{align}
    \max_{1\leq k \leq d^2-1} \; \ckappa_{k}(S) \ckappa_{d^2-k}(T) 
    \leq
    \heta(S \circ T)
    \leq  \heta(S) \heta(T)
    \quad \text{and} \\
    \ceta(S) \ceta(T)
    \leq
    \ceta(S \circ T)
    \leq  
    \min_{1\leq k \leq d^2-1} \; \hkappa_{k}(S) \hkappa_{d^2-k}(T).
\end{align}

Another natural way to combine two channels is via a convex mixture. It is intriguing to ask whether additive counterparts exist for the multiplicative properties. From the previous section, the following property follows immediately.

\begin{proposition}[Additive properties]\label{prop:additive}
	Let $T, S\in\hptp_0(d,d')$ and $\lambda \in [0, 1]$. Then, the following additive property holds:
    \begin{equation}
        \hkappa_{n+m-1}(\lambda T + (1- \lambda) S) \leq  \lambda \hkappa_n(T) + (1-\lambda) \hkappa_m(S).
    \end{equation}
\end{proposition}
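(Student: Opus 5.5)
The plan is to combine Proposition~\ref{prop:cvs_and_gelfand}, which identifies $\hkappa_n(T)$ with the Gel'fand number $c_n(T|_0)$, with the additivity axiom for $s$-numbers and the homogeneity of the contraction values. There is also a direct subspace argument, which avoids citing the abstract theory, and I will give that as well.

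\emph{Homogeneity.} For $\lambda\geq 0$ we have $\hkappa_n(\lambda T)=\lambda\,\hkappa_n(T)$. This holds because $\norm{\lambda T(X)}_1=\lambda\norm{T(X)}_1$, and positive scaling passes through both the inner maximum and the outer minimum in Definition~\ref{def:contraction_values}. The same identity also follows from the ideal property, applied with $A=\lambda\,\id$ and $B=\lambda^{-1}\id$ when $\lambda>0$. The case $\lambda=0$ is trivial.

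\emph{Main inequality via $s$-numbers.} Note that $(\lambda T+(1-\lambda)S)|_0=\lambda T|_0+(1-\lambda)S|_0$ as operators in $\cL(E_d,E_{d'})$. Gel'fand numbers form an $s$-scale, so Condition~\eqref{list:Additivity} applied to $\lambda T|_0$ and $(1-\lambda)S|_0$ gives
\begin{equation}
c_{n+m-1}\big(\lambda T|_0+(1-\lambda)S|_0\big)\leq c_n(\lambda T|_0)+c_m((1-\lambda)S|_0).
\end{equation}
Translating back with Proposition~\ref{prop:cvs_and_gelfand} and using homogeneity yields the claim. One point needs care: Proposition~\ref{prop:cvs_and_gelfand} is stated for channels, but its derivation uses only that the map is $\hptp_0$, so it applies verbatim to $T$, $S$ and to their mixture.

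\emph{Direct argument.} Choose $\cM_T$ with $\codim\cM_T=n-1$ and $\heta(T,\cM_T)=\hkappa_n(T)$, and choose $\cM_S$ with $\codim\cM_S=m-1$ and $\heta(S,\cM_S)=\hkappa_m(S)$. Both minimizers exist by compactness. The intersection $\cM=\cM_T\cap\cM_S$ satisfies $\codim\cM\leq n+m-2$, so it contains a subspace $\cM'$ of codimension exactly $n+m-2$; this requires $n+m-1\leq d^2-1$, the range in which the statement is meaningful. For $X\in\cS_{\cM'}$ the triangle inequality gives
\begin{equation}
\norm{(\lambda T+(1-\lambda)S)(X)}_1\leq\lambda\norm{T(X)}_1+(1-\lambda)\norm{S(X)}_1\leq\lambda\hkappa_n(T)+(1-\lambda)\hkappa_m(S).
\end{equation}
Taking the maximum over $\cS_{\cM'}$ bounds $\heta(\lambda T+(1-\lambda)S,\cM')$, and this in turn bounds $\hkappa_{n+m-1}$ from above.

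The only obstacle is this codimension bookkeeping: passing from $\cM$ to a subspace $\cM'$ of exact codimension $n+m-2$, and noting that shrinking the subspace can only decrease $\heta$. Both steps are routine.
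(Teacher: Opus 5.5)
Your proposal is correct, and its main route is exactly the paper's one-line proof: additivity of the Gel'fand numbers applied to $\lambda T|_0$ and $(1-\lambda)S|_0$, translated back via Proposition~\ref{prop:cvs_and_gelfand} (whose derivation indeed holds for all $\hptp_0$ maps). You also make explicit the positive homogeneity that the paper leaves implicit. Your direct argument via $\cM_T\cap\cM_S$ and the triangle inequality is also valid; it simply unpacks the standard proof of that additivity axiom and makes the result self-contained. (Your ideal-property remark is slightly garbled as written, but only $c_n(\lambda T|_0)\le\lambda\, c_n(T|_0)$ is needed, and the ideal property gives this directly with the scalar factor $\lambda\,\id$.)
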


\begin{proof}
    This is direct from the sub-additivity of the Gel'fand numbers and identifying the contraction values with them, see Proposition~\ref{prop:cvs_and_gelfand}.
\end{proof}

In a similar vein, one can attempt to derive lower bounds, but this requires passing through the reverse triangle inequality, $\norm{X-Y}\geq\abs{\norm{X} - \norm{Y}}$, which makes the resulting estimate rather crude. Consequently, the bounds obtained are only informative for small values of $\lambda$. While sharper bounds may well exist, this already indicates that the contraction and expansion values are naturally suited to characterizing channel compositions, but are less well adapted to channel mixtures.

\section{Examples}\label{sec:examples}

Here we restrict attention to maps that are $\cptp$. Under this assumption, the simplest examples are unitary conjugations, $\cU(\rho)=U\rho U^{\dag}$. Such maps are isometries and therefore preserve the trace norm, which implies $\hkappa_n(\cU)=\ckappa_n(\cU) = 1$ for all $1\leq n \leq d^2-1$. Moreover, any $\phi\in\cptp(d)$ for which all these values are equal to $1$ must itself be a unitary conjugation. This follows from Helstrom’s characterization of the trace norm, which states that two states are at trace distance $1$ if and only if their supports are orthogonal, together with Wigner’s theorem, which asserts that any map sending pure states to pure states while preserving orthogonality is either of the form $U\rho U^{\dag}$ or $U\rho^{\top}U^{\dag}$, where $^{\top}$ denotes transposition in some fixed basis. Therefore, we get that only the unitary conjugation is a valid choice, as the transposition map is positive but not completely positive.

At the opposite extreme, $\cptp$ maps for which all contraction and expansion values vanish correspond to replacer channels of the form $\rho\mapsto \tr(\rho)\,\sigma$. Other basic examples include idempotent channels, whose contraction and expansion values coincide and are necessarily either $0$ or $1$, as well as the depolarizing channel $\cD_p(\rho)=(1-p)\rho + p \tr(\rho) \, \sigma$, for which one immediately finds $\hkappa_n(\cD_p)=\ckappa_n(\cD_p)=1-p$.  In the remainder of this section, we turn to less trivial cases, focusing on single-qubit channels, $d$-dimensional amplitude damping channels, and direct-sum channels.

\subsection{Single qubit channel} \label{sec:vs_single_qubit}

Single-qubit channels can be conveniently described in the Pauli basis
\begin{equation}\label{eq:channel_pauli_basis}
    \hat{T} = \mqty(1 & 0\\ \bm{\lambda} & \Delta)
\end{equation}
where $\Delta_{ij}=\frac{1}{2}\tr(\sigma_i T(\sigma_j))$ and $\lambda_j = \frac{1}{2} \tr(\sigma_j T(\bbI))$, with $\sigma_1 = \ketbra{0}{1}+\ketbra{1}{0}$, $\sigma_2 = -i \ketbra{0}{1}+ i\ketbra{1}{0}$, and $\sigma_3 = \ketbra{0}-\ketbra{1}$. This representation and related geometrical properties are well understood \cite{Ruskai2002}. Describing the states on this basis $\rho=\frac{1}{2}\qty(\bbI+\br\cdot \bsigma)$ where $\br\in \bbR^3$ and $\abs{\br} \leq 1$ with $\abs{\br}$ is the Euclidean norm of $\br$, and $\br\cdot\bsigma=r_1\sigma_1+r_2\sigma_2+r_3\sigma_3$, the output state is given by
\begin{equation}
    T(\rho) = \frac{1}{2}\qty(\bbI+(\bm{\lambda} + \Delta \br) \cdot \bsigma).
\end{equation}
The Bloch sphere corresponds to the set of pure states and can be identified with the unit sphere $B = \{\br \,: \, \abs{\br} = 1\} \subset \bbR^3$. In turn, the image of the Bloch sphere corresponds to an ellipsoid. Using the singular value decomposition $\Delta = \sum_{i=1}^3 s_i \mathbf{u}_i \mathbf{v}_i^{\dag}$ and introducing $\bm{y}\in\bbR^3$ with $y_i = \mathbf{v}_i^\dag \cdot \br$ (note that $\abs{\mathbf{y}}=\abs{\br}$),
\begin{equation}\label{eq:bloch_sphere_to_ellipsoid}
    T(B) = \{\bm{\lambda} + \sum_{i=1}^3 s_i y_i \mathbf{u}_i \, : \, \abs{\bm{y}} = 1 \},
\end{equation}
which is precisely an ellipsoid shifted from the origin and with principal axes given by $s_1$, $s_2$, and $s_3$.

Further, note that $\Delta$ is a matrix representation of $T|_0$ and that any traceless matrix $X\in\HermZero(2)$ admits the form
\begin{equation}
    X = \frac{1}{2} \br \cdot \bsigma
    \quad \text{so} \quad
    \norm{X}_1 = \abs{\br}.
\end{equation}
Thus, for a qubit, the 1-norm is a Euclidean norm. In terms of Section~\ref{sec:s_numbers}, this means that $E_2$ is a Hilbert space, and hence that any $s$-scale is given by the singular values of $T|_0$, therefore:

\begin{proposition}
    Let $T\in\cptp(2)$ be a single-qubit channel, then
    \begin{equation}
        \hkappa_n(T) = \ckappa_n(T) = s_n(T|_0).
    \end{equation}    
\end{proposition}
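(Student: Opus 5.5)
The plan is to reduce everything to the Courant--Fischer--Weyl min--max principle by making the isometry between $(\HermZero(2),\norm{\cdot}_1)$ and Euclidean space explicit. I would avoid the abstract $s$-number uniqueness theorem; it could serve as a cross-check but is not needed.

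\emph{Step 1: the isometry.} Define $\Phi:\bbR^3\to\HermZero(2)$ by $\Phi(\br)=\frac{1}{2}\br\cdot\bsigma$. Since $(\br\cdot\bsigma)^2=\abs{\br}^2\bbI_2$, the matrix $\frac12\br\cdot\bsigma$ has eigenvalues $\pm\frac12\abs{\br}$. Hence $\norm{\Phi(\br)}_1=\abs{\br}$, and $\Phi$ is a linear isometry onto $E_2$. Moreover $\Phi^{-1}T|_0\Phi$ is the matrix $\Delta$ of \eqref{eq:channel_pauli_basis}. Indeed, $T(\sigma_j)$ is traceless and Hermitian, so $T(\sigma_j)=\sum_i \Delta_{ij}\sigma_i$ with $\Delta_{ij}=\frac12\tr(\sigma_iT(\sigma_j))$.

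\emph{Step 2: transport the variational problems.} Because $\Phi$ is a linear bijection, subspaces $\cM\subseteq\HermZero(2)$ correspond bijectively to subspaces $\Phi^{-1}(\cM)\subseteq\bbR^3$, and this correspondence preserves both dimension and codimension. The sphere $\cS_\cM$ maps to the Euclidean unit sphere of $\Phi^{-1}(\cM)$, and $\norm{T(\Phi(\br))}_1=\abs{\Delta\br}$. Substituting into Definitions~\ref{def:contraction_values} and~\ref{def:expansion_values} gives
\begin{equation}
\hkappa_n(T)=\min_{\codim V=n-1}\max_{\br\in V,\abs{\br}=1}\abs{\Delta\br},\qquad \ckappa_n(T)=\max_{\dim V=n}\min_{\br\in V,\abs{\br}=1}\abs{\Delta\br}.
\end{equation}

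\emph{Step 3: apply min--max.} Both expressions are the Courant--Fischer--Weyl characterizations of the $n$-th singular value of the real $3\times3$ matrix $\Delta$. In $\bbR^3$, having $\codim V=n-1$ is the same as having $\dim V=4-n$, so the first expression is exactly the standard form. The singular values of $\Delta$ coincide with those of $T|_0$, because $\Phi$ is orthogonal up to scaling with respect to the Hilbert--Schmidt inner product: $\langle\Phi(\br),\Phi(\br')\rangle_{HS}=\frac12\br\cdot\br'$. Conjugating by a scaled orthogonal map leaves the singular values unchanged.

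The main obstacle is bookkeeping rather than a genuine difficulty. One point needs care: $s_n(T|_0)$ should be taken with respect to the Hilbert--Schmidt structure on $\HermZero(2)$, as in Proposition~\ref{prop:relation_with_singular_values}. The constant $\frac12$ appears on both the input and output sides and cancels in the ratio $\norm{T(X)}_2/\norm{X}_2$. So the real content is just the isometry of Step~1. As a consistency check, the inequalities $\frac{1}{\sqrt2}s_n\le\ckappa_n\le\hkappa_n\le\sqrt2 s_n$ of Proposition~\ref{prop:relation_with_singular_values} become equalities here, because $\norm{X}_1=\sqrt2\norm{X}_2$ holds exactly for every traceless $2\times2$ Hermitian $X$.
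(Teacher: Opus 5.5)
Your proof is correct, but it takes a more hands-on route than the paper. The paper argues abstractly. Since $\norm{X}_1$ is a Euclidean norm on $\HermZero(2)$, the space $E_2$ is a Hilbert space, and Pietsch's uniqueness theorem says every $s$-scale on Hilbert-space operators coincides with the singular values. The paper applies this to the Gel'fand numbers (the contraction values). It also applies it to the Bernstein numbers (the expansion values), relying on its earlier remark that their weaker additivity $b_n(T+S)\le b_n(T)+\norm{S}$ already suffices for that theorem.

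You instead make the isometry $\br\mapsto\frac12\br\cdot\bsigma$ explicit and transport both variational problems to $\bbR^3$. There they become the two Courant--Fischer--Weyl formulas for $s_n(\Delta)$. You then check that $\Delta$ and $T|_0$ share singular values because the isometry is a scaled Hilbert--Schmidt orthogonal map.

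Both arguments rest on the same key fact, your Step 1. The paper's version is a one-liner once the machinery of Section~\ref{sec:s_numbers} is in place, and it shows at once that every $s$-scale collapses for qubits. Yours is self-contained and does not depend on the Bernstein numbers qualifying for Pietsch's theorem. It also makes explicit which inner product $s_n(T|_0)$ refers to, and it directly gives the identification with $\Delta$ that the paper uses afterwards for the ellipsoid picture.

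One small correction to your closing consistency check. The outer inequalities of Proposition~\ref{prop:relation_with_singular_values} do \emph{not} become equalities here: for instance $\frac{1}{\sqrt2}s_n<s_n=\ckappa_n$ whenever $s_n>0$. What actually happens is that the proof of that proposition loses a factor by bounding the input and output norms separately. In the qubit case the exact proportionality $\norm{X}_1=\sqrt2\norm{X}_2$ holds on both sides and cancels in the ratio, which gives the sharper identity $\ckappa_n=\hkappa_n=s_n$, strictly inside the interval.
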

That is, the contraction and expansion values correspond to the principal axes of the ellipsoid $T(B)$ as represented in Figure~\ref{fig:bloch_sphere}.

Additionally, the channel composition $S\circ T$ can be described as
\begin{equation}
    \hat{S}\hat{T}=\mqty(1 & 0\\ \bm{\lambda}_S+\Delta_T \bm{\lambda}_T & \Delta_S\Delta_T),
\end{equation}
which shows that the multiplicativity results in Proposition~\ref{prop:multiplicative} correspond to the Gel'fand-Naimark inequality for singular values, which letting $\sigma_i = s_i(S)$ and $\tau_i = s_i(T)$ yield:
\begin{align}
    \max \{\sigma_3 \tau_1, \sigma_2 \tau_2, \sigma_1 \tau_3 \}   &\leq s_1(S T)  \leq  \sigma_1 \tau_1 ,
    \\
    \max \{ \sigma_3 \tau_2, \sigma_2 \tau_3 \} &\leq s_2(S T)  \leq \min \{ \sigma_1 \tau_2, \sigma_2 \tau_1 \} ,
    \\
    \sigma_3 \tau_3 &\leq s_3(S T)  \leq \min \{\sigma_3 \tau_1, \sigma_2 \tau_2, \sigma_1 \tau_3 \}.
\end{align}

\begin{figure}
    \centering
    \includegraphics[width=0.75\linewidth]{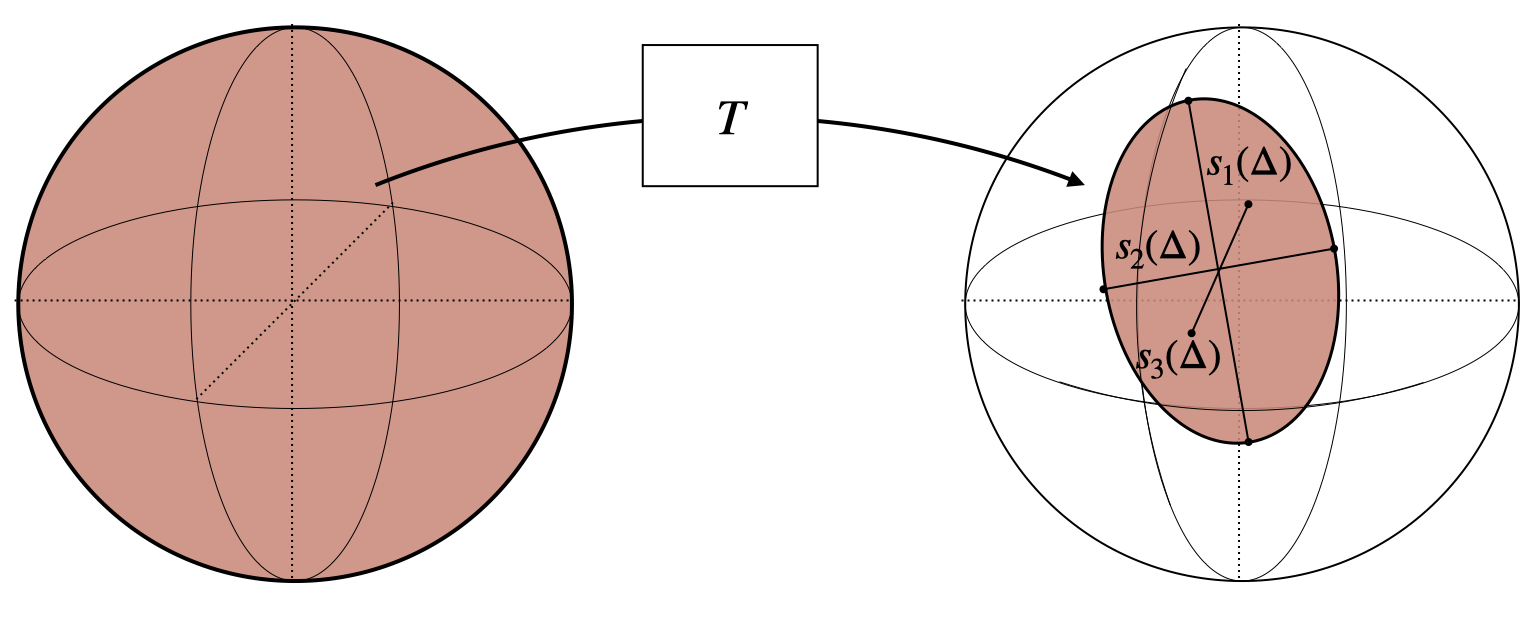}
    \caption{Under the action of a quantum channel $T$, the Bloch sphere $B$ is mapped into an ellipsoid $T(B)$, see Eq.~\eqref{eq:bloch_sphere_to_ellipsoid}, contained inside the original sphere. The principal axes of the ellipsoid correspond to the contraction and expansion values of the channel.}
    \label{fig:bloch_sphere}
\end{figure}

\subsection{Amplitude damping channel}
\label{sec:amplitude_damping}

Let $\ket{x},\ket{y}\in\bbC^d$ be two different vectors and $\lambda\in (0, 1)$, we define the $d$-dimensional $\ket{x}$ to $\ket{y}$ amplitude damping channel $\Psi_{\lambda,x,y}\in\cptp(d)$ as
\begin{gather}\label{eq:ndim_adamp}
    \Psi_{\lambda,x,y}(\rho)=K_{\lambda} \rho K^{\dag}_{\lambda}+L_{\lambda} \rho L^{\dag}_{\lambda},\\
    K_{\lambda} = \mathbb{I}-(1-\sqrt{1-\lambda})\ketbra{x},
    \quad
    L_{\lambda} = \sqrt{\lambda} \ketbra{y}{x}.
\end{gather}
The single-qubit amplitude damping is ubiquitous in the theory of open quantum systems, and its $d$-dimensional generalization provides a simple instance of a physically motivated non-unital channel. Additionally, it has recently been established a method to efficiently find whether a channel $\Phi\in\cptp(d)$ is divisible into two quantum channels as $\Phi=(\Phi \circ \Psi_{\lambda,x,y}^{-1})\circ \Psi_{\lambda,x,y}$ with $\Phi \circ \Psi_{\lambda,x,y}^{-1} \in \cptp(d)$ \cite{vomEnde2025}. Therefore, by Proposition~\ref{prop:multiplicative}, estimating the contraction and expansion values of this family immediately yields bounds for any channel $\Phi$ satisfying this divisibility criterion.

In what follows, we concentrate on the case where $x$ and $y$ are orthogonal and without loss of generality set $\ket{x}=\ket{1},\ket{y}=\ket{0}$, and denote $\Psi_{\lambda}$. Introducing the orthonormal basis $\{\ket{i}\}_{i=0}^{d-1}$, we get the following element-wise description of the channel
\begin{gather}
    \Psi_{\lambda}(\ketbra{i}{j})=\ketbra{i}{j}, \quad
    \Psi_{\lambda}(\ketbra{1})=(1-\lambda)\ketbra{1}+\lambda \ketbra{0}\\
    \Psi_{\lambda}(\ketbra{1}{i})=\sqrt{1-\lambda}\ketbra{1}{i},\quad
    \Psi_{\lambda}(\ketbra{i}{1})=\sqrt{1-\lambda}\ketbra{i}{1},
\end{gather}
where $i,j\neq 1$.

It is worth inspecting the single-qubit case first. Following the notation introduced in Section~\ref{sec:vs_single_qubit}, we describe the channel's transition matrix in the Pauli basis
\begin{equation}
    \hat{\Psi}_{\lambda} =
    \begin{pmatrix}
        1 & 0 & 0 & 0 \\
        0 & \sqrt{1-\lambda} & 0 & 0\\
        0 & 0 & \sqrt{1-\lambda} & 0\\
        \lambda & 0 & 0 & 1-\lambda
    \end{pmatrix}.
\end{equation}
We can directly read the singular values of the traceless part and determine that
\begin{equation}
    \kappa_1(\Psi_{\lambda}) = \sqrt{1-\lambda},
    \quad
    \kappa_2(\Psi_{\lambda}) = \sqrt{1-\lambda},
    \quad \text{and} \quad
    \kappa_3(\Psi_{\lambda}) = 1-\lambda,
\end{equation}
where $\kappa_n=\hkappa_n=\ckappa_n$ are the contraction and expansion values, which coincide for single-qubit channels. For the $d$-dimensional case, we arrive at the following result.

\begin{proposition}\label{prop:cev_n_dim_amplitude_damping}
    Let $\Psi_{\lambda} \in \cptp(d)$ be the $d$-dimensional amplitude damping channel defined above. We have that
    \begin{enumerate}
        \item for $1\leq n \leq (d-1)^2 - 1$,
        \begin{equation}
            \hkappa_n(\Psi_\lambda) = \ckappa_n(\Psi_\lambda) = 1,
        \end{equation}
        \item for $(d-1)^2 \leq n \leq d^2 - 2$,
        \begin{equation}
            \ckappa_n(\Psi_\lambda) \leq \hkappa_n(\Psi_\lambda) \leq \sqrt{1 - \lambda},
        \end{equation}
        \item and, finally,
        \begin{equation}
            \hkappa_{d^2-1}(\Psi_\lambda) = \ckappa_{d^2-1}(\Psi_\lambda) = 1 - \lambda.
        \end{equation}
    \end{enumerate}
\end{proposition}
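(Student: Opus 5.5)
Throughout, I use the subspace-restricted coefficients and the monotonicity $\ckappa_n\le\hkappa_n$ of Lemma~\ref{lemma:cvs_vs_evs}. Each part reduces to exhibiting one good subspace, on the expansion side for lower bounds and on the contraction side for upper bounds.

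\textbf{Part 1.} Since $\Psi_\lambda$ is a channel, $\hkappa_n\le\heta(\Psi_\lambda)\le1$, so it suffices to show $\ckappa_n(\Psi_\lambda)\ge 1$ for $n\le (d-1)^2-1$. Let $P=\bbI-\ketbra{1}$ and take the subspace $\cW=\{X\in\HermZero(d): X=PXP\}$ of traceless Hermitian operators supported on $\operatorname{span}\{\ket{i}: i\neq1\}$. It has dimension $(d-1)^2-1$. By the elementwise description, $\Psi_\lambda$ acts as the identity on $\cW$, so $\ceta(\Psi_\lambda,\cW)=1$. Every subspace of $\cW$ of dimension $n$ inherits this, which gives $\ckappa_n\ge1$ for all $n\le(d-1)^2-1$.

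\textbf{Part 2.} Here we need $\hkappa_n\le\sqrt{1-\lambda}$ for $n\ge(d-1)^2$, i.e.\ a subspace of codimension $\le (d-1)^2-1$ on which $\heta\le\sqrt{1-\lambda}$. By monotonicity in $n$ it suffices to treat $n=(d-1)^2$, which requires a subspace of dimension $d^2-(d-1)^2=2d-1$. The natural candidate is
\begin{equation}
\cM=\vspan\{\ketbra{1}{i}+\ketbra{i}{1},\; i\ketbra{1}{i}-i\ketbra{i}{1}: i\neq1\}\oplus\vspan\{Z\},
\end{equation}
with $Z=\ketbra{1}-\ketbra{0}$. This has dimension $2(d-1)+1=2d-1$. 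A general element is $X=a Z+\ketbra{1}{v}+\ketbra{v}{1}$ with $v\perp\ket1$. Using the elementwise rules,
\begin{equation}
\Psi_\lambda(X)=a(1-\lambda)\ketbra{1}-a(1-\lambda)\ketbra{0}+\sqrt{1-\lambda}\,(\ketbra{1}{v}+\ketbra{v}{1}),
\end{equation}
since $\Psi_\lambda(\ketbra1)=(1-\lambda)\ketbra1+\lambda\ketbra0$ and $\Psi_\lambda(\ketbra0)=\ketbra0$.

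Writing $A=\ketbra{1}-\ketbra{0}$, this is $Y=(1-\lambda)aA+\sqrt{1-\lambda}\,B$, where $B=\ketbra{1}{v}+\ketbra{v}{1}$ and $X=aA+B$. So $Y=\sqrt{1-\lambda}\,(\sqrt{1-\lambda}\,aA+B)$, and the claim reduces to showing that $t\mapsto\norm{taA+B}_1$ is nondecreasing in $t\in[0,1]$. Everything lives in the span of $\ket0,\ket1,v$, a matrix of size at most $3\times3$, so I would verify this there by a direct computation. Convexity of the trace norm in $t$ together with the symmetry $t\to -t$ should help: a unitary conjugation flipping the phase of $v$ relative to $\ket0,\ket1$ can map $taA+B$ to something with the same spectrum as $-taA+B$. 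An even convex function of $t$ is nondecreasing on $[0,\infty)$, which gives the desired inequality.

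\textbf{Main obstacle.} The delicate point is exactly this symmetry. The diagonal entries on $\ket0,\ket1$ have opposite signs, and $v$ may have a component on $\ket0$, so the relevant $3\times3$ structure is genuinely mixed. If the symmetry fails, I would instead replace $Z$ by a better-chosen direction (for instance $\ketbra1-\rho$ for a suitable $\rho$), or compute the eigenvalues of the $3\times3$ matrix explicitly. I expect this step to be the hardest.

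\textbf{Part 3.} For the expansion coefficient, $\ceta(\Psi_\lambda)\le\norm{\Psi_\lambda(Z)}_1/\norm{Z}_1=1-\lambda$. For the matching lower bound $\norm{\Psi_\lambda(X)}_1\ge(1-\lambda)\norm{X}_1$ for all traceless $X$, I would use the factorization into the channel $\Psi_\lambda=\Psi_\lambda$ and bound through the inverse: $\Psi_\lambda^{-1}$ scales the $\ket1$ coherences by $(1-\lambda)^{-1/2}$ and the population transfer by $(1-\lambda)^{-1}$. By Lemma~\ref{lemma:cvs_vs_evs_inv} it suffices to show $\heta(\Psi_\lambda^{-1})\le(1-\lambda)^{-1}$. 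I would do this by writing $\Psi_\lambda^{-1}=\frac1{1-\lambda}\Phi$, where $\Phi$ is a Hermiticity- and trace-preserving map that is a difference of CP maps with trace-norm contraction at most $1$, checked on orthogonal pure-state pairs.
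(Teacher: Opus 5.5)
Part 1 and the upper bound in Part 3 are the same as the paper's argument.

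In Part 2 the obstacle you flag does not actually arise. Take the reflection $U=\bbI-2\ketbra{1}$. It commutes with the diagonal $Z$, and since $v\perp\ket{1}$ it gives $UBU^\dagger=-B$, whether or not $v$ has a $\ket{0}$ component. Hence $\norm{taZ+B}_1=\norm{taZ-B}_1=\norm{-taZ+B}_1$. So $t\mapsto\norm{taZ+B}_1$ is even and convex, hence nondecreasing on $[0,\infty)$. This gives $\norm{\Psi_\lambda(X)}_1\le\sqrt{1-\lambda}\,\norm{X}_1$ on your $(2d-1)$-dimensional $\cM$.

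Your choice of $\cM$ also fills a hole in the paper's argument. The paper only uses subspaces of $C$, which have dimension at most $2(d-1)$, so its argument covers $n\ge(d-1)^2+1$. The endpoint $n=(d-1)^2$ needs a subspace of dimension $d^2-(d-1)^2=2d-1$. Your extra direction $Z$, with $\Psi_\lambda(Z)=(1-\lambda)Z$, supplies exactly that dimension.

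The genuine gap is the lower bound $\ceta(\Psi_\lambda)\ge 1-\lambda$ in Part 3. Passing to the inverse does not help: $\heta(\Psi_\lambda^{-1})\le(1-\lambda)^{-1}$ is the same inequality read backwards. You give no mechanism for why $\Phi=(1-\lambda)\Psi_\lambda^{-1}$ is trace-norm contractive on $\HermZero(d)$, and this is not automatic:
\begin{itemize}
\item $\Phi$ is not positive, since $\Phi(\ketbra{1})=\ketbra{1}-\lambda\ketbra{0}$.
\item $\Phi$ is not trace-preserving, since $\tr\Phi(Y)=(1-\lambda)\tr Y$.
\item Writing $\Phi=\Phi_1-\Phi_2$ with CP maps only bounds the norm by $\norm{\Phi_1}+\norm{\Phi_2}$, which need not be at most $1$.
\end{itemize}
Reducing to orthogonal pure pairs is legitimate, but that check is then the entire proof, and it is not carried out.

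The missing idea is that on traceless inputs $\Phi$ agrees with a genuine channel. Define $\Phi'(Y)=\Phi(Y)+\lambda\tr(Y)\ketbra{0}$. It has Kraus operators $A=\ketbra{1}+\sqrt{1-\lambda}\,(\bbI-\ketbra{1})$ and $\sqrt{\lambda}\ketbra{0}{k}$ for $k\neq 1$, and one checks $A^2+\lambda\sum_{k\neq1}\ketbra{k}=\bbI$. Moreover, $\Phi'\circ\Psi_\lambda$ is the channel $\rho\mapsto(1-\lambda)\rho+\lambda\tr(\rho)\ketbra{0}$. Data processing then gives $\norm{\Psi_\lambda(X)}_1\ge\norm{\Phi'(\Psi_\lambda(X))}_1=(1-\lambda)\norm{X}_1$ for all traceless $X$.

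This is precisely the paper's route. Lemma~\ref{lemma:degrading_dephasing} exhibits this same $\Phi'$ as $\cP_\Gamma\circ\Psi_{\lambda,0,d-1}\circ\cdots\circ\Psi_{\lambda,0,2}$. Without this, or an equivalent explicit verification, Part 3 of your proposal is incomplete.
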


\begin{proof}
   We first establish some facts about restrictions of the channel to some relevant subspaces. Let $V=\vspan\{\ket{k} \, : \, k\neq 1\}$ and observe that the space $\cB(V)$ is invariant under the action of the channel, hence $\Psi_{\lambda}|_{\cB(V)}=\text{id}_{\cB(V)}$. Note that the dimension of the traceless Hermitian matrices acting on $V$ is $\dim \HermZero(V) = (d-1)^2 - 1$. Similarly, consider $C = \qty{\ketbra{1}{\psi}+\ketbra{\psi}{1}\, :\, \ket{\psi} \in V }$, and observe that $\eval{\Psi_{\lambda}}_{C}=\sqrt{1-\lambda}\, \text{id}_{C}$. Note that as a real space $\dim C = 2(d-1)$.
    
    For the range $1\leq n\leq (d-1)^2-1$ and $\cW$ with $\dim \cW = n \leq (d-1)^2-1$, we can take $\cW\subset \HermZero(V)$ and find that $\ckappa_n(\Psi_{\lambda}) \geq 1$. We get the equalities from the standard fact $\heta(\Psi_{\lambda}) \leq 1$ and the relation $\ckappa_n\leq \hkappa_n$.
    
    For the values $(d-1)^2\leq n\leq d^2-2$ and $\cM$ with $\codim \cM \geq (d-1)^2$, we can take $\cM\subset C$ as $\dim\cM \leq d^2-1-(d-1)^2 = 2(d - 1)$. This leads to the upper bound for the contraction values $\hkappa_n(\Psi_{\lambda})\leq \sqrt{1-\lambda}$ (and hence $\ckappa_n\le\hkappa_n\le\sqrt{1-\lambda}$ by Lemma~\ref{lemma:cvs_vs_evs}).
        
    Finally, for $n = d^2-1$, we can use $X = \frac{1}{2}(\ketbra{0}-\ketbra{1})$ to prove that $\hkappa_{d^2-1}(\Psi_{\lambda})\leq 1-\lambda$. Conversely, Lemma~\ref{lemma:degrading_dephasing} shows that the amplitude damping channel can be degraded to a depolarizing channel with parameter $1-\lambda$, implying that $\ckappa_{d^2-1}(\Psi_{\lambda}) \geq  1-\lambda$.
\end{proof}

This example illustrates a method for estimating the contraction and expansion values of a family of channels whose structure in the $d$-dimensional case can be related to that of the two-dimensional one. Nonetheless, there remains room to tighten the interval $[1-\lambda,\sqrt{1-\lambda}]$ for the values in the intermediate regime. Our current understanding is consistent with a wide range of behaviors: these values could all coincide, they could be distinct, they could concentrate near one endpoint of the interval or the other, or they could be distributed more uniformly throughout it. Moreover, in this regime, the contraction and expansion values could differ; however, we have not been able to establish this even in the case $d=3$.

\subsection{Direct sum channels}\label{sec:direct_sum_channels}

Let $A\in\cM_{d_1}$ and $B\in\cM_{d_2}$ be two matrices, then their direct sum $A\oplus B \in \cM_{d_1+d_2}$ is the block-wise diagonal matrix given by
\begin{equation}
    A \oplus B = \mqty( A & 0 \\ 0 & B).
\end{equation}
Being able to decompose an operator as a direct sum greatly simplifies its analysis, for instance, $\norm{A \oplus B}_1 = \norm{A}_1 + \norm{B}_1$. Given $T_1\in\cptp(d_1,d_1')$ and $T_2\in\cptp(d_2,d_2')$, we define the direct sum channel $T_1 \oplus T_2 \in\cptp(d_1+d_2, d_1'+d_2')$ as
\begin{equation}
    (T_1\oplus T_2)\left(
    \begin{pmatrix}
        \rho_{11} & \rho_{12} \\
        \rho_{21} & \rho_{22}
    \end{pmatrix}
    \right)
    =
    \begin{pmatrix}
        T_1(\rho_{11}) & 0 \\
        0 & T_2(\rho_{22}),
    \end{pmatrix}
\end{equation}
that is, $(T_1\oplus T_2) (\rho) = T_1(\rho_{11}) \oplus T_2(\rho_{22})$. We concentrate on two terms, but both notions generalize to multiple terms as $(\oplus_{i=1}^N T_i)(\rho)= \oplus_{i=1}^N T_i(\rho_{ii})$. Lastly, we use the direct sum notation for subspaces $\cV_1 \subseteq \cM_{d_1}$ and $\cV_2 \subseteq \cM_{d_2}$ to denote $\cV_1 \oplus \cV_2 = \{x \oplus y \, : \, x\in\cV_1, y\in\cV_2 \} \subseteq \cM_{d_1+d_2}$.

Quantum states and channels with direct sum structure have been used to simplify additivity problems in quantum information quantities \cite{Shor2001, Fukuda2007}. Moreover, direct sum channels have been generalized to include some coherence in the off-diagonal blocks, enabling the extension of some of these results \cite{Chessa2021}. Alternatively, in \cite{Lonigro2023}, they introduce a generalization of amplitude damping that is suitably described in this block-wise structure, in this case including leakage from one of the diagonal blocks to the other. Recently, direct sum channels have appeared in the context of mixed unitary channels, providing an example where the Kraus rank of a mixed unitary channel is strictly smaller than the minimum number of unitaries required to describe the channel \cite{Girard2022}. These results encourage estimating the contraction and expansion values of direct sum channels in terms of their constituents. Beyond the interest in their simple structure, their application and generalization reveal promising directions for extending this preliminary analysis.

It is direct from trace preservation that the contraction coefficient of a direct sum channel is equal to one. Indeed, let $\rho_1\in\cM_{d_1}$ and $\rho_2\in\cM_{d_2}$ be two arbitrary quantum states, and define $X = \frac{1}{2} (\rho_1 \oplus (-\rho_2))$, clearly $X\in\HermZero(d_1+d_2)$ and $\norm{X}_1 = 1$. Moreover,
\begin{equation}
    \norm{(T_1 \oplus T_2)(X)}_1 = \frac{1}{2} (\norm{T_1(\rho_1)}_1 + \norm{T_2(\rho_2)}_1) = 1,
\end{equation}
thus $\heta(T_1 \oplus T_2) = 1$. Moreover, taking $\rho_1$ and $\rho_2$ to be fixed points of $T_1$ and $T_2$, respectively, leads to $T(X) = X$. Alternatively, define $\cC\subset\HermZero(d_1+d_2)$ as the subspace
\begin{equation}
    \cC = \qty{ \mqty(0 & X \\ X^\dag & 0) \,:\, X\in\cM_{d_1,d_2} },
\end{equation}
observe that $(T_1 \oplus T_2) (\xi) = 0$ for $\xi\in \cC$ and $\dim \cC = 2 d_1 d_2$, equivalently $\codim \cC = d_1^2 + d_2^2 - 1$. Thus, $\hkappa_n(T_1\oplus T_2) = \ckappa_n(T_1\oplus T_2) = 0$ for $n\geq d_1^2 + d_2^2$. For the intermediate values, we need to establish the following claim.

Define the sequence $(u_1, u_2, \dots, u_{d_1^2 + d_2^2 - 1})$ as the non-increasingly sorted concatenation of the contraction values of $T_1$ and $T_2$, and $(l_1, l_2, \dots, l_{d_1^2 + d_2^2 - 1})$ as the non-increasingly sorted concatenation of the expansion values of $T_1$ and $T_2$. These sequences serve as estimates for the contraction and expansion values of the direct sum channel $T_1\oplus T_2$.

\begin{proposition}\label{prop:direct_sum_kappas}
    With the notation introduced above, it holds that:
    \begin{equation}
        \hkappa_1(T_1\oplus T_2) = \ckappa_1(T_1\oplus T_2) = 1,
    \end{equation}
    for $2 \leq n \leq d_1^2 + d_2^2 - 2$,
    \begin{equation}\label{eq:direct_sum_kappas_intermediate}
        l_{n} \leq \ckappa_{n}(T_1\oplus T_2) \leq \hkappa_{n}(T_1\oplus T_2) \leq u_{n-1},
    \end{equation}
    for $n = d_1^2 + d_2^2 - 1$,
    \begin{equation}
        0 \leq \ckappa_{d_1^2 + d_2^2 - 1}(T_1\oplus T_2) \leq \hkappa_{d_1^2 + d_2^2 - 1}(T_1\oplus T_2) \leq u_{d_1^2 + d_2^2 - 2},
    \end{equation}
    and, for $n\geq d_1^2 + d_2^2$,
    \begin{equation}
        \hkappa_n(T_1\oplus T_2) = \ckappa_n(T_1\oplus T_2) = 0.
    \end{equation}
\end{proposition}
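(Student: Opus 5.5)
The plan is to exploit the orthogonal decomposition
\begin{equation*}
    \HermZero(d_1+d_2)=\HermZero(d_1)\oplus\HermZero(d_2)\oplus\vspan\{Z\}\oplus\cC,
\end{equation*}
where $Z=\frac{1}{d_1}\bbI_{d_1}\oplus(-\frac{1}{d_2}\bbI_{d_2})$ is the balancing direction. On the block-diagonal part the $1$-norm is additive, $\norm{A\oplus B}_1=\norm{A}_1+\norm{B}_1$, and $T_1\oplus T_2$ acts blockwise; $\cC$ is annihilated.

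\textbf{Extreme indices.} $\hkappa_1=\heta=1$ was shown above. For $\ckappa_1$ it suffices to exhibit one fixed direction: $X=\frac12(\rho_1\oplus(-\rho_2))$ with fixed points $\rho_i$ satisfies $T(X)=X$, so $\ckappa_1\geq 1$, and $\ckappa_1\le\hkappa_1=1$. The vanishing for $n\geq d_1^2+d_2^2$ is already established via $\cC$. The bound $\ckappa_n\le\hkappa_n$ is Lemma~\ref{lemma:cvs_vs_evs}, and $\ckappa\geq 0$ is trivial.

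\textbf{Upper bound $\hkappa_n\le u_{n-1}$.} Write $u_{n-1}=\hkappa_{a}(T_1)$ or $\hkappa_b(T_2)$ with $a+b=n-1$, where $a$ and $b$ count how many entries of each list appear among the first $n-1$ sorted values. Equivalently, the first $n-1$ sorted values consist of the first $a$ values of $T_1$ and the first $b$ of $T_2$, and $\max(\hkappa_{a+1}(T_1),\hkappa_{b+1}(T_2))=u_n\le u_{n-1}$. Choose optimal subspaces $\cM_1\subseteq\HermZero(d_1)$ with $\codim\cM_1=a$ attaining $\hkappa_{a+1}(T_1)$, and $\cM_2$ likewise with codimension $b$. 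Set $\cM=\cM_1\oplus\cM_2\oplus\vspan\{Z\}\oplus\cC$, which has codimension $a+b=n-1$.

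For $X=A\oplus B+tZ+\xi$ we have
\begin{equation*}
    \norm{T(X)}_1=\norm{T_1(A+\tfrac{t}{d_1}\bbI)}_1+\norm{T_2(B-\tfrac{t}{d_2}\bbI)}_1 .
\end{equation*}
The obstacle is the $Z$ direction, on which the restricted coefficient is $1$. I would therefore try instead to place the codimension used on $Z$: take $\cM=\cM_1\oplus\cM_2\oplus\cC$ with codimension $a+b+1=n$. This gives only $\hkappa_{n+1}\le u_n$, which is exactly the claimed $\hkappa_n\le u_{n-1}$ after shifting the index. In this subspace $\norm{T(X)}_1\le\hkappa_{a+1}(T_1)\norm{A}_1+\hkappa_{b+1}(T_2)\norm{B}_1\le u_n\norm{X}_1$, using $\norm{X}_1\geq\norm{A\oplus B}_1$. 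That last inequality holds because pinching to diagonal blocks is trace-norm contractive. The endpoint case $n=d_1^2+d_2^2-1$ is the same construction with $a,b$ saturating.

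\textbf{Lower bound $\ckappa_n\ge l_n$.} Dually, pick $a+b=n$ with $l_n=\min(\ckappa_a(T_1),\ckappa_b(T_2))$, taking $\ckappa_0:=\infty$, and optimal $\cW_1,\cW_2$ of dimensions $a,b$. Set $\cW=\cW_1\oplus\cW_2$, which has dimension $n$ and is traceless in each block. Then
\begin{equation*}
    \norm{T(A\oplus B)}_1\ge\ckappa_a(T_1)\norm{A}_1+\ckappa_b(T_2)\norm{B}_1\ge l_n\norm{A\oplus B}_1 .
\end{equation*}
The main care point is the index bookkeeping relating sorted concatenations to the choice of $(a,b)$, and the pinching inequality; both are routine.
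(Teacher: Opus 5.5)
Your argument is correct and is essentially the paper's proof. Once you discard the balancing direction $Z$, your subspaces $\cM_1\oplus\cM_2\oplus\cC$ (codimension $a+b+1$) and $\cW_1\oplus\cW_2$ are exactly the paper's $\cM_n=\cC+\cM^1_k\oplus\cM^2_{n-k}$ and $\cW_n=\cW^1_k\oplus\cW^2_{n-k}$, and your pinching plus blockwise-additivity estimate is the content of Lemma~\ref{lemma:heta_direct_sum}. Two small bookkeeping points: the convention $\cM_i=\{0\}$ (with value $0$) for an exhausted list is needed at every $n$ where one list runs out, not only at the endpoint; and $\ckappa_1=\hkappa_1=\heta$ holds by definition, so you do not need fixed points, which exist only when $d_i=d_i'$.
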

\begin{proof}
    See Appendix~\ref{sec:supp_mat_direct_sum}
\end{proof}

The following is a simple application of this result.

\begin{corollary}
    Let $\cD_p\in\cptp(d_1)$ and $\cD_q\in\cptp(d_2)$ be two depolarizing channels with $0 \leq  p \leq q \leq 1$. Then, for $2 \leq n \leq d_1^2$:
    \begin{equation}
        \ckappa_n(\cD_p\oplus\cD_q) = \hkappa_n(\cD_p\oplus\cD_q) = 1 - p,
    \end{equation}
    and for $d_1^2 + 1 \leq n \leq d_1^2 + d_2^2 - 2$,
    \begin{equation}
        \ckappa_n(\cD_p\oplus\cD_q) = \hkappa_n(\cD_p\oplus\cD_q) = 1 - q.
    \end{equation}
\end{corollary}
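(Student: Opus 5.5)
The plan is to combine Proposition~\ref{prop:direct_sum_kappas} with the values for the depolarizing channel, $\hkappa_n(\cD_p)=\ckappa_n(\cD_p)=1-p$ for $1\leq n\leq d_1^2-1$ and likewise $1-q$ for $\cD_q$ with $1\leq n\leq d_2^2-1$. Since $p\leq q$, we have $1-p\geq 1-q$. The sorted concatenations therefore coincide: $u_k=l_k=1-p$ for $1\leq k\leq d_1^2-1$, and $u_k=l_k=1-q$ for $d_1^2\leq k\leq d_1^2+d_2^2-2$.

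The main step is to read off the sandwich \eqref{eq:direct_sum_kappas_intermediate}, $l_n\leq\ckappa_n\leq\hkappa_n\leq u_{n-1}$. This forces equality exactly when $l_n=u_{n-1}$, which holds whenever $n$ and $n-1$ fall in the same block of the sorted sequence.
\begin{itemize}
\item For $2\leq n\leq d_1^2-1$, both indices lie in the first block, so both bounds equal $1-p$.
\item For $d_1^2+1\leq n\leq d_1^2+d_2^2-2$, both $n-1$ and $n$ lie in $[d_1^2,\,d_1^2+d_2^2-2]$, so both bounds equal $1-q$.
\end{itemize}

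The obstacle is the boundary index $n=d_1^2$. There the sandwich only gives $1-q\leq\ckappa_{d_1^2}\leq\hkappa_{d_1^2}\leq 1-p$, while the claim is the stronger lower bound $\ckappa_{d_1^2}\geq 1-p$. I would prove this directly from Definition~\ref{def:expansion_values} by exhibiting a $d_1^2$-dimensional subspace $\cW$ on which $\norm{(\cD_p\oplus\cD_q)(X)}_1\geq(1-p)\norm{X}_1$. The first candidate is
\[
\cW=\{A\oplus 0: A\in\HermZero(d_1)\}\oplus\vspan\{\omega\},
\]
where $\omega=\tfrac{\bbI_{d_1}}{d_1}\oplus\bigl(-\tfrac{\bbI_{d_2}}{d_2}\bigr)$, so that $\dim\cW=d_1^2-1+1=d_1^2$. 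Elements of $\cW$ are block diagonal, of the form $(A+t\bbI/d_1)\oplus(-t\bbI/d_2)$.

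To verify the bound, I would take the depolarizing replacement state to be maximally mixed, so that both blocks of $\omega$ are fixed by the channel. The first block maps to $(1-p)A+t\bbI/d_1$ and the second block is fixed. Writing $B=A+t\bbI/d_1$, which has trace $t$, the output first block is $(1-p)B+p\,t\,\bbI/d_1$. I then need
\[
\norm{(1-p)B+p\,t\,\bbI/d_1}_1+\abs{t}\geq(1-p)\bigl(\norm{B}_1+\abs{t}\bigr).
\]
Using the triangle inequality, the left-hand side is at least $(1-p)\norm{B}_1-p\abs{t}+\abs{t}=(1-p)\norm{B}_1+(1-p)\abs{t}$, which is exactly the right-hand side. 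This gives $\ckappa_{d_1^2}\geq 1-p$, and the upper bound $u_{d_1^2-1}=1-p$ then yields equality.

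The remaining worry is whether the paper's depolarizing channels use a general $\sigma$. If so, I would either assume the maximally mixed state, as the corollary implicitly does, or replace $\omega$ by $\sigma_1\oplus(-\sigma_2)$ with the respective fixed states; the same triangle-inequality computation then goes through.
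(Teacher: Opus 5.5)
Your proposal is correct and follows the paper's proof essentially step for step. Away from $n=d_1^2$ you read off equality from the sandwich in Proposition~\ref{prop:direct_sum_kappas}, and at $n=d_1^2$ you use the same $d_1^2$-dimensional subspace (traceless first block plus the direction $\tfrac{\bbI_{d_1}}{d_1}\oplus(-\tfrac{\bbI_{d_2}}{d_2})$) with the same triangle-inequality estimate. Your fallback for a general replacement state is also sound. In fact it is not even needed, because $A+t\tau_1$ ranges over all Hermitian $B$ with $\tr B=t$ for any state $\tau_1$, and the second block always has output norm $\abs{t}$.
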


\begin{proof}
    First, note that the contraction and expansion values of the depolarizing channel $\cD_p$ are all equal to $1-p$. Concatenating and sorting them, we find that
    \begin{equation}
        u_n = l_n =
        \begin{cases}
            1 - p & 1 \leq n \leq d_1^2 - 1, \\
            1 - q & d_1^2 \leq n \leq d_1^2 + d_2^2 - 1.
        \end{cases}
    \end{equation}
    Substituting in Proposition~\ref{prop:direct_sum_kappas} leads to the result for $n\neq d_1^2$, and to the upper bound $\ckappa_{d_1^2}(\cD_p\oplus\cD_q) \leq \hkappa_{d_1^2}(\cD_p\oplus\cD_q) \leq 1 - p$. In this case, we can show that this bound is saturated. Consider the subspace $\cW$ formed by operators of the form
    \begin{equation}
        X = \omega \left(\frac{\Pi_{1}}{d_1} - \frac{\Pi_{2}}{d_2}\right) + \Delta 
        \quad \text{with} \; \omega\in\bbR, \; \Delta \in \HermZero(d_1)\oplus 0,
    \end{equation}
    where $\Pi_{1}$ and $\Pi_{2}$ are the orthogonal projectors on the first and second subspaces. Observe that $\dim\cW = d_1^2$, so it suffices to show that $\ceta(\cD_p\oplus\cD_q, \cW)\geq 1 - p$. Let us set
    \begin{equation}
        \norm{X}_1 = \omega + \norm{\Delta + \omega \frac{\Pi_{1}}{d_1}}_1 = 1,
    \end{equation}
    then
    \begin{align}
        \norm{\cD_p\oplus\cD_q(X)}_1 &= \omega + \norm{(1-p) \Delta + \omega \frac{\Pi_{1}}{d_1}}_1 \\
        &\geq \omega + \abs{(1-p)\norm{\Delta + \omega \frac{\Pi_{1}}{d_1}}_1 - p \norm{\omega \frac{\Pi_{1}}{d_1}}_1 }\\
        &\geq \omega + \abs{(1-p)(1-\omega) - p \omega } 
        = \omega + \abs{1 - p -\omega}
        \\
        &\geq 1 - p.
    \end{align}
\end{proof}

These results are reminiscent of known estimates for Gel'fand numbers for diagonal operators in $\ell_p\to\ell_p$ and their generalization to direct sum operators \cite{Pietsch1980opideals, Ismailov2025}. Those results could potentially be used in this case, but there is a fundamental bottleneck: for direct sum channels, the relevant Banach space is not the direct sum of the Banach spaces, that is, $\HermZero(d_1+d_2)\neq \HermZero(d_1)\oplus\HermZero(d_2)$. This is because one has to add the subspace $\cC$ and a degree of freedom like $\frac{1}{2}(\rho_1\oplus(-\rho_2))$. While $\cC$ is not problematic, the latter term prevents us from extending the lower bound with $l_n$ to $n=d_1^2 + d_2^2 - 1$.

\section{Conclusion}

We have introduced the \emph{contraction and expansion values} $\hkappa_n(T)$ and $\ckappa_n(T)$, two non-increasing sequences indexed by $n=1,\dots,d^2-1$ that refine the trace-distance contraction and expansion coefficients of a quantum channel, which they recover at the extremes $n=1$ and $n=d^2-1$. They are defined through a min--max variational principle that mirrors the Courant--Fischer characterization of the singular values, with the Hilbert--Schmidt norm replaced by the Schatten $1$-norm both in the objective and in the constraint. This replacement gives the sequences a direct operational meaning, which we phrased as two state-discrimination games: one player fixes a set of observables and the other chooses two states that are indistinguishable under those observables, and the resulting optimal guessing probability is governed by $\hkappa_n$ or $\ckappa_n$ according to which player seeks to maximize the output distinguishability.

The structural backbone of these results is the identification of the contraction values with the \emph{Gel'fand numbers} of the channel restricted to traceless Hermitian operators, $T|_0$, within Pietsch's theory of $s$-numbers, and the duality relating the contraction values of an invertible map to the expansion values of its inverse. From these observations we inherit that the sequences are monotone, continuous in the channel, and vanish beyond the algebraic rank. Most importantly for applications, they yield composition bounds showing that the contraction and expansion values of $S\circ T$ are controlled, above and below, by products of the contraction and expansion values of $S$ and $T$. Specialized to $n=1$ these reproduce the familiar bound $\heta(S\circ T)\le \heta(S)\,\heta(T)$, while the accompanying lower bound is, to our knowledge, new and uses the expansion values in an essential way.

We illustrated the framework on three families. For single-qubit channels the $1$-norm coincides with a Euclidean norm, so the two sequences merge and equal the principal semi-axes of the Bloch-sphere image ellipsoid, giving a fully geometric picture. For the $d$-dimensional amplitude damping channel the single-qubit values provide effective estimates; this case is of particular interest because of the recently established, easily checkable criterion for splitting off an amplitude-damping factor, which turns these estimates into bounds for every channel admitting such a decomposition. Finally, for direct-sum channels the contraction coefficient is always trivial, many values vanish, and the remaining ones are estimated by those of the summands.

Several open questions point to natural continuations, all of independent interest to the quantum information community. First, our results bound the contraction \emph{values} but not the \emph{directions} attaining them; a faithful notion of ``contraction vectors'' that composes well seems out of reach beyond the qubit case, where the directions are simply the principal axes of the ellipsoid, and clarifying this obstruction is an open problem. Second, the non-differentiability of the trace norm makes the inner optimizations non-smooth, which places reliable numerical estimation of $\hkappa_n$ and $\ckappa_n$ beyond the present work and motivates dedicated convex-optimization or SDP relaxations. Third, sharper closed-form estimates for elementary channels---notably the $d$-dimensional amplitude damping and direct-sum channels treated here---would improve the composition bounds in practice. Fourth, tensor powers remain elusive: our techniques give no informative control of the values of $T^{\otimes N}$ from those of $T$, a serious gap given the ubiquity of product channels. Beyond the trace distance, extending the construction to other contractive divergences, such as the quantum relative entropy, would align the framework with information-theoretic methods; the unboundedness of the relative entropy, however, introduces genuinely new phenomena, for instance channels whose trace-distance expansion coefficient is strictly positive while the relative-entropy counterpart vanishes. Lastly, applying this framework to infinite-dimensional channels could open new challenges and provide new insights. We expect that progress on any of these fronts will draw further on the bridge between quantum channel contraction and the theory of $s$-numbers established here.

\subsection*{Acknowledgements}

This project was supported by the Basque Government BasQ initiative under the Q-STREAM project. We also aknowledge support from OpenSuperQ+100 (Grant No. \allowbreak{101113946}) of the EU Flagship on Quantum Technologies, from Project Grant No. PID2024-156808NB-I00 and Spanish Ram\'on y Cajal Grant No. RYC-2020-030503-I funded by MI-CIU/AEI/10.13039/501100011033 and by “ERDF A way of making Europe” and “ERDF Invest in your Future”, and from the Spanish Ministry for Digital Transformation and of Civil Service of the Spanish Government through the QUANTUM ENIA project call Quantum Spain, and by the EU through the Recovery, Transformation and Resilience Plan–Next Generation EU within the framework of the Digital Spain 2026 Agenda. RI also acknowledges the support of the Basque Government Ph.D. Grant No. PRE 2021-1-0102.

\bibliography{bibliography_unpruned}{}
\bibliographystyle{plain}

\appendix

\section{Example of extreme relation with the singular values}\label{app:example_singluar_values}

It is known that for non-unital channels at least one singular value of $T$ is strictly larger than $1$ \cite{PerezGarcia2006}. Conveniently, this is not necessary for $T|_0$, for instance, the replacer channel $T\in \cptp(d,d')$ acting as $T(\rho) = \tr(\rho) \, \ketbra{\psi}$ has $s_n(T|_0)=0$, even though $s_1(T) = \sqrt{d}$. However, the following example shows that the dimensional factor in the previous proposition is generally unavoidable:

\begin{example}\label{ex:svs_vs_cvs_evs}
    We take the example from the proof of Theorem III.1 in \cite{PerezGarcia2006}. The theorem demonstrates that for any channel $T\in\cptp(d)$ we have that $s_1(T|_0) \leq \sqrt{\frac{d}{2}}$ or $\sqrt{\frac{d}{2} - \frac{1}{2d}}$ for $d$ even or odd, respectively. Let $\Pi\in \cM_d$ be an orthogonal projector of rank $m=\tr(\Pi)$ and consider the channel $\cE\in\cptp(d)$
    \begin{equation}
        \cE(\rho) = \tr(\Pi \rho) \; \ketbra{0} + \tr(\Pi_{\perp} \rho) \; \ketbra{1},
    \end{equation}
    where $\Pi_{\perp} = \bbI - \Pi$. For simplicity consider $d$ even, then this channel saturates the upper bound $s_1(\cE|_0) = \sqrt{\frac{d}{2}}$, choosing $m=d/2$ and $X = \Pi - \frac{m}{d-m}\Pi_{\perp}$.

    Considering $X\in\HermZero(d)$ with $\tr(\Pi X) = \omega$, observe that $\cE(X) = \omega \, (\ketbra{0} - \ketbra{1})$ implies $s_j(\cE|_0) = \hkappa_j(\cE) = \ckappa_j(\cE) = 0$ for all $j > 1$. Then, choosing states $\rho,\sigma$ with $\Pi \rho = \rho$ and $\Pi \sigma = 0$ and setting $X = \rho - \sigma$ we saturate the ratio $\norm{\cE(X)}_1 / \norm{X}_1 = 1$, which shows $\hkappa_1(\cE) = \ckappa_1(\cE) = 1$. We arrive at
    \begin{equation}
        s_1(\cE|_0) = \sqrt{\frac{d}{2}} \hkappa_1(\cE) = \sqrt{\frac{d}{2}} \ckappa_1(\cE).
    \end{equation}

    It is worth considering the following generalization of the previous example. Consider a orthogonal projector $\Pi'\in\cM_{d'}$ with rank $m'=\tr(\Pi')$ and define the channel $\tilde{\cE} \in \cptp(d, d')$
    \begin{equation}
        \tilde{\cE}(\rho) = \tr(\Pi \rho) \; \frac{\Pi'}{m'} + \tr(\Pi_{\perp} \rho) \; \frac{\Pi'_{\perp}}{m'_{\perp}}.
    \end{equation}
    While the contraction and expansion values are independent of the choice of $m$ and $m'$, it can be shown that
    \begin{equation}
        s_1(\tilde{\cE}|_0) = \sqrt{\frac{1/m' + 1/m'_{\perp}}{1/m + 1/m_{\perp}}},
    \end{equation}
    where $m_{\perp} = d - m$. In particular, taking even dimensions and setting $m=d/2$ and $m'=d'/2$:
    \begin{equation}
        s_1(\tilde{\cE}|_0) = \sqrt{\frac{d}{d'}} \hkappa_1(\tilde{\cE}) = \sqrt{\frac{d}{d'}} \ckappa_1(\tilde{\cE}).
    \end{equation}

    Although we could not establish the tightness of Proposition~\ref{prop:relation_with_singular_values}, these examples show that dimensional factors are necessary.
\end{example}

The example illustrates why the contraction and expansion values are more meaningful than the singular values. It is clear that the channels $\cE$ and $\tilde{\cE}$ transmit equivalent information, as they can be transformed into each other with postprocessing channels associating $\ketbra{0} \leftrightarrow \Pi'/m'$ and $\ketbra{1} \leftrightarrow \Pi'_{\perp}/m'_{\perp}$. While the contraction and expansion values are insensitive to this postprocessing, the singular values vary strongly.

\section{More on \texorpdfstring{$s$}{s}-numbers}\label{app:more_on_s_numbers}

We gather the definitions for six examples of $s$-numbers, their basic relations, and explore their application to define new sequences that characterize channel contraction.

\begin{definition}[Approximation numbers]
    Let $T \in \cL(E, F)$ and $n\in\bbN$.  The $n$th approximation
    number is
    \begin{equation}
        a_n(T):=\inf\big\{ \norm{T-L} : L\in\mathcal L(E, F),\, \rank L< n\big\}.
    \end{equation}
\end{definition}

As the name suggests, $a_n(T)$ quantifies how closely $T$ can be approximated by operators of rank smaller than $n$.

\begin{definition}[Gel'fand numbers]
    Let  $T \in \cL(E, F)$ and $n\in \bbN$. The Gel'fand numbers are given by
    \begin{equation}    
        c_n(T) := \inf \qty{ \norm{T|_W} : W \subset E, \, \codim W < n }.
    \end{equation}
\end{definition}

The definition of the Gel'fand numbers directly mimics the Fischer-Courant minimax principle \cite{Pietsch1987}.

\begin{definition}[Weyl numbers]
    For  $T \in \cL(E, F)$ and $j\in\bbN$ the $n$th Weyl number is defined by
    \begin{equation}
        x_n(T) := \sup\big\{ a_n(T\,S) : S\in\cL (\ell_2,E),\ \norm{S} \leq 1 \big\}.
    \end{equation}
\end{definition}

The Weyl numbers play a significant role in the theory of eigenvalues of operators in Banach spaces \cite{Pietsch1980}.

\begin{definition}[Kolmogorov numbers]
    Let  $T \in \cL(E, F)$ and $n\in\bbN$. The Kolmogorov numbers are
    \begin{equation}
        d_n(T) := \inf\Big\{ \sup_{x\in E, \norm{x}_E \leq 1} \inf_{y\in W} \norm{T(x) - y} : W\subset F,\ \dim W < n\Big\}.
    \end{equation}
\end{definition}

\begin{definition}[Chang numbers]
    For  $T \in \cL(E, F)$ and $n\in\bbN$ the $n$th Chang number is defined by
    \begin{equation}
        y_n(T) := \sup\big\{ a_n(R\,T) : R\in\cL (F,\ell_2),\ \norm{R}\leq 1 \big\}.
    \end{equation}
\end{definition}

\begin{definition}[Hilbert numbers]
    For  $T \in \cL(E, F)$ and $n\in\bbN$ the $n$th Hilbert number is defined by
    \begin{equation}
        h_n(T) :=
        \sup \qty{
            a_n(R\, T\, S) :
            \begin{matrix}
                R\in\cL (F,\ell_2),\ \norm{R}\leq 1 \\
                S\in\cL (\ell_2, E),\ \norm{S} \leq 1
            \end{matrix}
        }
    \end{equation}
\end{definition}

Determining whether any of the established properties of these $s$-numbers apply to the study of quantum channels remains unclear. To close this brief overview, we focus on summarizing their known interrelations depicted in Fig.~\ref{fig:s_numbers_order}. For convenience, each statement is accompanied by a reference to the corresponding result in \cite{Pietsch1987}, which offers a comprehensive review. Readers interested in further details may consult this source directly.

\begin{figure}[tb]
    \centering
    \begin{tikzpicture}
        \tikzstyle{help lines} = [thin, draw = black!50]
        \tikzstyle{block} = [rectangle, draw = black, thick, text width = 2em, align = center, rounded corners, minimum height = 2em]
        \tikzstyle{line} = [draw, thick, shorten >=2pt]
        \tikzstyle{attribute} = [ellipse, draw = black!25, thick, text width = 3em, align = center]
        \tikzstyle{atriangle} = [draw, shorten >=2pt, shorten <=2pt]
    
        \node[block] (app) at (8.50, 3.50) {$a_n$};

        \node[block] (gelfand) at (6.50, 5.50) {$c_n$};
        \node[block] (kolmogorov) at (6.50, 1.50) {$d_n$};

        \draw [-latex, line] (app.north) to [out=90,in=0] (gelfand.east);
        \draw [-latex, line] (app.south) to [out=270,in=0] (kolmogorov.east);

        \node[block] (weyl) at (3.50, 5.50) {$x_n$};
        \node[block] (chang) at (3.50, 1.50) {$y_n$};

        \draw[-latex, line] (gelfand.west) -- (weyl.east);
        \draw[-latex, line] (kolmogorov.west) -- (chang.east);

        \node[block] (hilbert) at (1.50, 3.50) {$h_n$};

        \draw [-latex, line] (weyl.west) to [out=180,in=90] (hilbert.north);
        \draw [-latex, line] (chang.west) to [out=180,in=270] (hilbert.south);

        \draw[open triangle 60 reversed-open triangle 60 reversed, atriangle] (gelfand.south) to (kolmogorov.north);

    \end{tikzpicture}
    \caption{Relations between $s$-numbers. The filled arrows point from larger to smaller $s$-scales and the empty reversed arrows connect dual $s$-scales, in the sense of Proposition~\ref{prop:dual_s_numbers}.}%
    \label{fig:s_numbers_order}
\end{figure}

\begin{theorem}[Theorem 2.3.4 in \cite{Pietsch1987}]
    The approximation numbers are the largest $s$-numbers, that is, $s_n(T) \leq a_n(T)$ for any $s$-number sequence $s: T\to (s_n(T))_n$.
\end{theorem}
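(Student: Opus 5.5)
The plan is the standard argument that $a_n$ dominates every $s$-scale, using only the axioms of Definition~\ref{def:s_numbers}. Fix an arbitrary $s$-number sequence $s$, an operator $T\in\cL(E,F)$ and $n\in\bbN$. Let $L\in\cL(E,F)$ be any operator with $\rank L<n$. Write $T=(T-L)+L$ and apply the additivity axiom with indices $(n,1)$ in the form $s_{n+1-1}\big((T-L)+L\big)\le s_1(T-L)+s_n(L)$. This gives
\begin{equation}
    s_n(T)\le s_1(T-L)+s_n(L).
\end{equation}

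Next, I would simplify each term with the remaining axioms. By the rank property, $s_n(L)=0$ because $\rank L<n$. By the monotonicity axiom, $s_1(T-L)=\norm{T-L}$, since $s_1$ coincides with the operator norm. Hence $s_n(T)\le\norm{T-L}$ for every $L$ of rank less than $n$. Taking the infimum over all such $L$ yields $s_n(T)\le a_n(T)$, which is the claim.

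Finally, for completeness I would note that $a_n$ is itself an $s$-scale, so that ``largest'' is meaningful; this is the content of Pietsch's result. Monotonicity, additivity, the ideal property and the rank property follow directly from the definition, using that $\rank(L_1+L_2)\le\rank L_1+\rank L_2$ and that $ALB$ has rank at most $\rank L$. The norming property $a_n(I_n)=1$ holds because any $L$ of rank less than $n$ has a nonzero kernel vector $x$, so $\norm{(I_n-L)x}=\norm{x}$.

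The only step needing care is the use of additivity with index $m=1$, and that instance is permitted by the axiom. I do not foresee a real obstacle. The one check I would make is the norming step for $a_n$, which goes through via the kernel argument just described.
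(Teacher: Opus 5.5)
Your argument is correct and is essentially the standard one: the paper gives no proof of its own but cites Pietsch, where the result is proved exactly as you do. There, additivity is applied to $T=(T-L)+L$ with indices $1$ and $n$, the rank property gives $s_n(L)=0$, and $s_1=\|\cdot\|$ is used before taking the infimum over $\operatorname{rank} L<n$. Your side check that $a_n$ itself satisfies the $s$-number axioms, including the kernel argument for $a_n(I_n)=1$, is also correct, and it is what makes ``largest'' meaningful.
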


We say that an $s$-scale $s: T\to (s_n(T))_n$ is injective if it satisfies
\begin{equation}\label{eq:injective_s_number}
    s_n(J_W^F\, T) = s_n(T)
    \quad \text{for all} \quad
    T \in \cL (E, W) \quad 
    \text{and subspace}\quad W \subset F.
\end{equation}

\begin{theorem}[Theorem 4.4 in \cite{Pietsch1974}]
    The Gel'fand numbers are the largest injective $s$-numbers, that is, $s_n(T) \leq c_n(T)$ for any $s$-number sequence $s: T\to (s_n(T))_n$ satisfying Eq.~\eqref{eq:injective_s_number}.
\end{theorem}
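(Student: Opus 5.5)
The plan is to compare an arbitrary injective $s$-scale $s$ with the Gel'fand numbers through the approximation numbers. By Theorem 2.3.4 in \cite{Pietsch1987}, stated above, every $s$-scale satisfies $s_n \leq a_n$. The idea is to embed the target space $F$ isometrically into a large space $\ell_\infty(\Gamma)$. On such a space approximation numbers collapse to Gel'fand numbers, and injectivity of $s$ says the embedding does not change $s_n$. I would also check, as a preliminary, that the $c_n$ are themselves injective, so that ``largest injective'' is meaningful.

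\textbf{Step 1 (the $c_n$ are injective).} If $W\subset F$ carries the induced norm and $T\in\cL(E,W)$, then for every subspace $V\subset E$ we have $\norm{(J_W^F T)|_V}=\norm{T|_V}$, because the norm of $W$ is the restriction of the norm of $F$. Taking the infimum over $\codim V<n$ gives $c_n(J_W^F T)=c_n(T)$.

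\textbf{Step 2 (embedding).} Let $\Gamma$ be the unit ball of the dual $F'$, and define $J:F\to\ell_\infty(\Gamma)$ by $J(y)=(\langle y,f\rangle)_{f\in\Gamma}$. By Hahn--Banach, $J$ is an isometric embedding. Write $T_0:E\to J(F)$ for $T$ viewed with values in $J(F)$.
\begin{itemize}
\item The map $T_0$ differs from $T$ only by the isometric isomorphism $F\cong J(F)$. Applying the ideal property in both directions gives $s_n(T_0)=s_n(T)$.
\item Injectivity of $s$ then gives $s_n(JT)=s_n(J_{J(F)}^{\ell_\infty}T_0)=s_n(T)$.
\end{itemize}
Hence $s_n(T)=s_n(JT)\le a_n(JT)$.

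\textbf{Step 3 (key lemma: $a_n(JT)\le c_n(T)$).} This is the main step. Fix $\varepsilon>0$ and a subspace $V\subset E$ with $\codim V<n$ and $\norm{T|_V}\le c_n(T)+\varepsilon$.
\begin{itemize}
\item The space $\ell_\infty(\Gamma)$ has the metric extension property: each coordinate functional of $JT|_V$ extends by Hahn--Banach with the same norm. So there is $L_0\in\cL(E,\ell_\infty(\Gamma))$ with $L_0|_V=JT|_V$ and $\norm{L_0}=\norm{JT|_V}=\norm{T|_V}$.
\item The operator $L:=JT-L_0$ vanishes on $V$. It therefore factors through the quotient $E/V$, which has dimension $<n$, so $\rank L<n$.
\item Consequently $a_n(JT)\le\norm{JT-L}=\norm{L_0}\le c_n(T)+\varepsilon$.
\end{itemize}
Letting $\varepsilon\to 0$ gives $a_n(JT)\le c_n(T)$. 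In the finite-dimensional setting of the paper the infimum defining $c_n(T)$ is attained, so $\varepsilon$ can be dropped altogether.

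\textbf{Conclusion.} Combining the three steps,
\begin{equation}
s_n(T)=s_n(JT)\le a_n(JT)\le c_n(T).
\end{equation}
The only delicate points are the norm-preserving extension in Step 3, which is exactly why the target is $\ell_\infty(\Gamma)$ rather than an arbitrary space, and the identification of $F$ with the subspace $J(F)$ in Step 2, which is handled by the ideal property.
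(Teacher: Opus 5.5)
The paper gives no proof of this statement; it quotes it from Pietsch. Your argument is correct and is essentially Pietsch's own proof: the identity $c_n(T)=a_n(JT)$ for the canonical isometric embedding $J\colon F\to\ell_\infty(\Gamma)$, combined with injectivity of $s$ and the maximality $s_n\le a_n$ of the approximation numbers. Your Step 3 supplies the inequality $a_n(JT)\le c_n(T)$ that this needs, via the metric extension property of $\ell_\infty(\Gamma)$, and the remaining steps also hold: the two-sided use of the ideal property to pass from $T$ to $T_0$, and the bound $\rank L\le\dim(E/V)<n$.
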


\begin{theorem}[Theorem 2.6.3 in \cite{Pietsch1987}]
    The Hilbert numbers are the smallest $s$-numbers, that is, $h_n(T) \leq s_n(T)$ for any $s$-number sequence $s: T\to (s_n(T))_n$.
\end{theorem}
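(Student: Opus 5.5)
The plan is to reduce the statement to the Hilbert-space case and then use the ideal property. Fix an arbitrary $s$-scale $s$, an operator $T\in\cL(E,F)$, and contractions $R\in\cL(F,\ell_2)$ and $S\in\cL(\ell_2,E)$. The composition $A:=RTS$ acts from $\ell_2$ to $\ell_2$. The ideal property gives $s_n(A)\leq\norm{R}\,s_n(T)\,\norm{S}\leq s_n(T)$. So if I can show $a_n(A)\leq s_n(A)$ for every operator $A$ between Hilbert spaces, then $a_n(RTS)\leq s_n(T)$. Taking the supremum over all admissible $R,S$ yields $h_n(T)\leq s_n(T)$.

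The core step is therefore the inequality $a_n(A)\leq s_n(A)$ for $A\in\cL(H_1,H_2)$ with $H_1,H_2$ Hilbert spaces. This is the content of the fact, quoted earlier from \cite{Pietsch1974}, that the axioms force the $s$-numbers to agree with the singular values on Hilbert spaces. I would prove it directly from the norming, ideal and rank axioms.

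Set $\alpha:=a_n(A)$ and assume $\alpha>0$. Fix $0<\varepsilon<\alpha$ and let $P$ be the spectral projection of $|A|=(A^{\dag}A)^{1/2}$ onto $[\alpha-\varepsilon,\infty)$. If $\rank P<n$, then $L:=AP$ has rank $<n$ and $\norm{A-L}=\norm{A(\bbI-P)}=\norm{|A|(\bbI-P)}<\alpha$, contradicting the definition of $a_n$. Hence $\rank P\geq n$, and I can choose an isometric embedding $J:\ell_2^n\to\operatorname{ran}P$.

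On $\operatorname{ran}P$ we have $\norm{Ax}=\norm{|A|x}\geq(\alpha-\varepsilon)\norm{x}$. So $AJ:\ell_2^n\to H_2$ is injective with a left inverse $B:\operatorname{ran}(AJ)\to\ell_2^n$ satisfying $\norm{B}\leq(\alpha-\varepsilon)^{-1}$. I extend $B$ to all of $H_2$ by composing with the orthogonal projection onto $\operatorname{ran}(AJ)$, which does not increase its norm. Then $I_n=B\,A\,J$, and the norming and ideal properties give
\begin{equation}
1=s_n(I_n)\leq\norm{B}\,s_n(A)\,\norm{J}\leq\frac{s_n(A)}{\alpha-\varepsilon}.
\end{equation}
Letting $\varepsilon\to0$ gives $a_n(A)\leq s_n(A)$.

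For completeness, the reverse inequality $s_n(A)\leq a_n(A)$ follows from additivity with $m=1$ and the rank property: for any $L$ with $\rank L<n$, $s_n(A)\leq s_1(A-L)+s_n(L)=\norm{A-L}$. This gives equality on Hilbert spaces, although only the lower bound is needed here.

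I expect the main obstacle to be the construction of the $n$-dimensional subspace on which $A$ is bounded below by roughly $a_n(A)$, in the possibly non-compact infinite-dimensional setting. The spectral-projection argument above is designed to handle this. In our finite-dimensional application it reduces to taking the span of the top $n$ right singular vectors.

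The remaining bookkeeping is routine. One checks that $\norm{B}$ indeed carries the factor $(\alpha-\varepsilon)^{-1}$, and that the definition of $h_n$ uses $\ell_2$, which is where the Hilbert-space identification applies.
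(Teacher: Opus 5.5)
Your proof is correct. The paper gives no argument of its own; it only cites Pietsch's Theorem 2.6.3, and your proof uses the two ingredients of the standard argument behind that result: every $s$-scale satisfies $a_n(A)\leq s_n(A)$ on Hilbert-space operators, and the ideal property then gives $a_n(RTS)\leq s_n(RTS)\leq s_n(T)$. Your spectral-projection construction of $I_n=BAJ$, with $\norm{J}=1$ and $\norm{B}\leq(\alpha-\varepsilon)^{-1}$, makes the Hilbert-space step self-contained and also covers the possibly non-compact operator $RTS$ on $\ell_2$.
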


\begin{proposition}[Propositions 2.4.19 \& 2.5.13 in \cite{Pietsch1987}]
    The following relations hold:
    \begin{align}
        x_n(T) \leq c_n(T),
        \quad \text{and} \quad
        y_n(T) \leq d_n(T).
    \end{align}
\end{proposition}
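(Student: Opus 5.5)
The plan is to prove both inequalities the same way. First I will show that, once one side of the operator is a Hilbert space, the approximation numbers coincide with (or are dominated by) the relevant Gel'fand or Kolmogorov numbers. Then I will use the ideal property of $c_n$ and $d_n$ to absorb the contraction $S$ or $R$. I will not invoke the general theorem that $a_n$ is the largest $s$-number, since that gives the wrong direction here. Instead I will exploit orthogonal projections in $\ell_2$, which have norm one.

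\emph{Weyl numbers.} Fix $S\in\cL(\ell_2,E)$ with $\norm{S}\leq 1$, and write $A=TS\in\cL(\ell_2,F)$.
\begin{enumerate}
\item I will first show $a_n(A)\leq c_n(A)$. Take a closed subspace $W\subset\ell_2$ with $\codim W<n$, and let $P_W$ be the orthogonal projection onto $W$. Then $L=A(\bbI-P_W)$ has rank $<n$. Moreover $\norm{A-L}=\norm{AP_W}\leq\norm{A|_W}\,\norm{P_W}=\norm{A|_W}$. Taking the infimum over $W$ gives the claim. Closedness of $W$ is harmless, because $\norm{A|_W}=\norm{A|_{\overline W}}$ and taking the closure does not increase the codimension.
\item Next I will check the ideal inequality $c_n(TS)\leq c_n(T)\norm{S}$ directly. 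For $W\subset E$ with $\codim W<n$, the preimage $S^{-1}(W)$ has codimension at most $\codim W$, since $S$ induces an injection $\ell_2/S^{-1}(W)\hookrightarrow E/W$. Also $\norm{TS|_{S^{-1}(W)}}\leq\norm{T|_W}\norm{S}$.
\end{enumerate}
Combining the two steps gives $a_n(TS)\leq c_n(T)$, and taking the supremum over $S$ yields $x_n(T)\leq c_n(T)$.

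\emph{Chang numbers.} This is the dual argument. Fix $R\in\cL(F,\ell_2)$ with $\norm{R}\leq1$, and write $B=RT$.
\begin{enumerate}
\item I will show $a_n(B)\leq d_n(B)$. Take $W\subset\ell_2$ with $\dim W<n$, so $W$ is closed. Put $L=P_WB$, which has rank $<n$. Then
\begin{equation}
\norm{B-L}=\sup_{\norm{x}_E\leq1}\norm{(\bbI-P_W)Bx}=\sup_{\norm{x}_E\leq1}\inf_{y\in W}\norm{Bx-y},
\end{equation}
because the orthogonal projection realises the distance to $W$ in a Hilbert space. Taking the infimum over $W$ gives the claim.
\item For the ideal step, given $W\subset F$ with $\dim W<n$, the image $R(W)$ also has dimension $<n$. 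Moreover $\inf_{z\in R(W)}\norm{RTx-z}\leq\norm{R}\inf_{y\in W}\norm{Tx-y}$. Hence $d_n(RT)\leq\norm{R}\,d_n(T)\leq d_n(T)$.
\end{enumerate}
Taking the supremum over $R$ gives $y_n(T)\leq d_n(T)$.

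The only delicate point I expect is the first step in each case, namely the reverse comparison $a_n\leq c_n$ (respectively $a_n\leq d_n$) on Hilbert domains (respectively codomains). This is exactly where norm-one complementing projections are needed, and it fails for general Banach spaces. The remaining steps are routine dimension and codimension bookkeeping.
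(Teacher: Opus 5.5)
Your argument is correct. The paper gives no proof of this statement and only cites Pietsch's monograph, so there is nothing in it to compare against step by step.

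What you wrote is a sound self-contained version of the standard argument, which has two ingredients:
\begin{itemize}
\item On a Hilbert domain the approximation and Gel'fand numbers coincide, via the rank-$<n$ operator $A(\bbI-P_W)$. On a Hilbert codomain the approximation and Kolmogorov numbers coincide, via $P_W B$ and best approximation by orthogonal projection.
\item The ideal properties of $c_n$ and $d_n$ then absorb the contractions $S$ and $R$.
\end{itemize}
Your bookkeeping is also right: passing to the closure of a finite-codimensional subspace, the bound $\codim S^{-1}(W)\le\codim W$ via the injection $\ell_2/S^{-1}(W)\hookrightarrow E/W$, and $\dim R(W)<n$.

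The paper's appendix suggests an alternative route for the first inequality. One would check that the Weyl numbers form an injective $s$-scale and then invoke Theorem 4.4 (Gel'fand numbers are the largest injective $s$-numbers); dually, one would use surjectivity for the Chang and Kolmogorov numbers. However, verifying injectivity of $x_n$ relies on the same identity $a_n=c_n$ for operators with Hilbert domain that you prove directly. The surjective counterpart of Theorem 4.4 is also not stated in the paper. Your direct argument is therefore the more economical one and avoids checking the remaining $s$-number axioms for $x_n$ and $y_n$.
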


Given a Banach space $E$, its \emph{dual space} $E'$ is defined as the set of all bounded linear functionals on $E$. Each functional $a \in E'$ assigns a scalar value denoted $\langle x, a \rangle$ to every $x \in E$. The dual space $E'$ itself becomes a Banach space endowed with the norm
\begin{equation}
    \norm{a}_{E'} = \sup \{ \abs{\langle x, a\rangle} : x \in E,\, \norm{x}_E \leq 1\}.
\end{equation}
Given $T \in\cL(E, F)$, one can define its dual operator $T' \in \cL(F', E')$. For every $b \in F'$ and $x \in E$, the action of $T'$ is given by
\begin{equation}
    \langle x, T'(b) \rangle = \langle T(x), b \rangle.
\end{equation}
The following is a remarkable relation between the Gel'fand and Kolmogorov numbers:

\begin{proposition}[2.5.6 in \cite{Pietsch1987}]\label{prop:dual_s_numbers}
    The Gel'fand and Kolmogorov numbers are dual to each other:
    \begin{equation}
        c_n(T) = d_n(T')
        \quad \text{and} \quad
        d_n(T) = c_n(T'),
    \end{equation}
    for all compact $T\in\cL(E, F)$.
\end{proposition}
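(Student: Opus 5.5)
The plan is to reduce both identities to one duality fact. For a closed subspace, the distance to it and the norm of a restriction are computed by annihilators, via Hahn--Banach. For $W\subseteq E$ write $W^\perp=\{a\in E':\langle x,a\rangle=0\ \forall x\in W\}$, and for $N\subseteq E'$ write $N_\perp=\{x\in E:\langle x,a\rangle=0\ \forall a\in N\}$. I will use two standard facts. First, for a closed subspace $W\subseteq E$ the dual of the injection $J_W^E$ is the restriction map $E'\to W'$; by Hahn--Banach it induces an isometry $E'/W^\perp\cong W'$. Second, for a closed subspace $N\subseteq F$ one has $\operatorname{dist}(y,N)=\sup\{\abs{\langle y,b\rangle}: b\in N^\perp,\ \norm{b}\le1\}$.

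\emph{Step 1: $c_n(T)=d_n(T')$.} Take $W\subseteq E$ with $\codim W<n$; such a $W$ is closed. Since an operator and its dual have the same norm, and $(TJ_W^E)'=(J_W^E)'T'$, the first fact gives
\begin{equation}
\norm{T|_W}=\norm{(J_W^E)'\,T'}=\sup_{\norm{b}_{F'}\le1}\operatorname{dist}\big(T'(b),W^\perp\big).
\end{equation}
Here $W^\perp$ is a subspace of $E'$ with $\dim W^\perp=\codim W<n$. Conversely, every $N\subseteq E'$ with $\dim N<n$ is weak$^*$-closed, so $N=(N_\perp)^\perp$ and $\codim N_\perp=\dim N$. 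Hence $W\mapsto W^\perp$ is a bijection between the two index sets, and taking infima gives $c_n(T)=d_n(T')$. This step needs no compactness.

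\emph{Step 2: $d_n(T)\ge c_n(T')$.} Fix $N\subseteq F$ with $\dim N<n$. By the second fact,
\begin{equation}
\sup_{\norm{x}\le1}\operatorname{dist}(T(x),N)=\sup_{\norm{x}\le1}\ \sup_{b\in S_{N^\perp}}\abs{\langle x,T'(b)\rangle}=\norm{T'|_{N^\perp}},
\end{equation}
where $S_{N^\perp}$ denotes the unit sphere of $N^\perp$. Since $\codim N^\perp=\dim N<n$, the subspace $N^\perp$ is admissible in the definition of $c_n(T')$, so $d_n(T)\ge c_n(T')$.

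\emph{Step 3: $d_n(T)\le c_n(T')$.} This is the main obstacle. A subspace $V\subseteq F'$ of codimension $<n$ has the form $V=\bigcap_k\Ker\phi_k$ with $\phi_k\in F''$. It is of the form $N^\perp$ only when the $\phi_k$ can be taken in $F$, i.e.\ when $V$ is weak$^*$-closed. My first attempt is to use compactness of $T$, hence of $T'$ (Schauder), as follows.
\begin{enumerate}
\item Given $\varepsilon>0$, fix a finite $\varepsilon$-net of the compact set $T'(B_{F'})\subseteq E'$.
\item By Goldstine's theorem, or the principle of local reflexivity, replace each $\phi_k$ by some $y_k\in F$ that agrees with $\phi_k$ up to $\varepsilon$ on finitely many relevant functionals.
\item Set $N=\vspan\{y_k\}$ and show $\norm{T'|_{N^\perp}}\le\norm{T'|_V}+O(\varepsilon)$, by perturbing elements of $N^\perp\cap B_{F'}$ into $V$ with control transferred through the net.
\end{enumerate}
Letting $\varepsilon\to0$ gives $d_n(T)\le c_n(T')$.

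In our setting $E=E_d$ and $F=E_{d'}$ are finite-dimensional, so every subspace of $F'$ is weak$^*$-closed and Step 3 is immediate. The general compact case follows Pietsch's argument in \cite{Pietsch1987}.
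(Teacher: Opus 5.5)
Your Steps 1 and 2 are correct, and you are right that Step 1 needs no compactness. One small repair is needed there: a finite-codimensional subspace of a Banach space need not be closed. You should first replace $W$ (and $V$ in Step 3) by its closure, which does not increase the codimension and leaves the restricted norm unchanged. The paper gives no argument beyond the citation of Pietsch. For the finite-dimensional spaces $E_d$ where the result is actually used, your proof is complete.

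For the statement as written, however, Step 3 is a genuine gap, and the mechanism you sketch does not work. Suppose $V$ is not weak$^*$-closed. Then some unit vector $\psi\in V^\perp=\vspan\{\phi_k\}\subseteq F''$ has $\gamma:=\operatorname{dist}(\psi,F)>0$. For every finite-dimensional $N\subseteq F$ one has $\sup\{\abs{\psi(b)} : b\in N^\perp,\ \norm{b}\le 1\}=\operatorname{dist}(\psi,N)\ge\gamma$. Since $\psi$ vanishes on $V$, this produces some $b\in N^\perp\cap B_{F'}$ at norm distance more than $\gamma/2$ from $V$, however the $y_k$ are chosen. So there is no small perturbation of $N^\perp\cap B_{F'}$ into $V$. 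Agreement of $y_k$ with $\phi_k$ on finitely many functionals (Goldstine) is only weak$^*$ information and gives no norm control of $\norm{T'|_{N^\perp}}$. The natural correction $b\mapsto b-\sum_k\phi_k(b)e_k$, with $\phi_i(e_k)=\delta_{ik}$, inflates $\norm{b}$ by a factor bounded away from $1$, so this route gives at best an estimate up to a constant.

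The missing idea is to dualize once more and take the $\varepsilon$-net in $T(B_E)$ rather than in $T'(B_{F'})$.

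\emph{Reformulation.} By Hahn--Banach, $V'\cong F''/V^\perp$. Hence for $y\in F$ we have $\sup\{\abs{\langle y,b\rangle} : b\in V,\ \norm{b}\le 1\}=\operatorname{dist}_{F''}(y,V^\perp)$, and so $\alpha:=\norm{T'|_V}=\sup_{x\in B_E}\operatorname{dist}_{F''}(Tx,V^\perp)$. In other words, restricting $T'$ to a non-weak$^*$-closed $V$ gives a Kolmogorov-type width whose approximating $m$-dimensional subspace ($m=\codim V$) lies in $F''$ instead of $F$.

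\emph{Finite net.} Take an $\varepsilon$-net $Tx_1,\dots,Tx_r$ of $T(B_E)$, and choose $\psi_j=\sum_k c_{jk}\phi_k$ with $\norm{Tx_j-\psi_j}\le\alpha$. What remains is norm control at finitely many points: you need $z_1,\dots,z_m\in F$ with $\max_j\norm{Tx_j-\sum_k c_{jk}z_k}\le\alpha+\varepsilon$.

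\emph{Transfer to $F$.} In $X=\ell_\infty^r(F)$ put $L=\{(\sum_k c_{jk}z_k)_j : z_k\in F\}$. Any $f=(f_j)_j\in L^\perp\subseteq\ell_1^r(F')$ satisfies $\sum_j c_{jk}f_j=0$ for every $k$, hence $\sum_j\psi_j(f_j)=0$. Therefore $\abs{f((Tx_j)_j)}=\abs{\sum_j (Tx_j-\psi_j)(f_j)}\le\alpha\norm{f}$, and Hahn--Banach gives $\operatorname{dist}_X((Tx_j)_j,L)\le\alpha$. The principle of local reflexivity would also do.

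\emph{Conclusion.} Then $N=\vspan\{z_k\}$ has $\dim N\le m<n$, and $\operatorname{dist}(Tx,N)\le\alpha+2\varepsilon$ for every $x\in B_E$. This yields $d_n(T)\le c_n(T')$ for compact $T$.
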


With the application to contraction and expansion values in mind, we identify the direct Banach space with $\HermZero(d)$ equipped with the 1-norm, denote it $E_d \cong (\HermZero(d), \norm{\cdot}_1)$. To identify the dual space, for each $R\in\Herm(d)$ we associate the linear functional $f_R:\Herm(d) \to \bbR$ defined by $f_R(X) = \tr(R X)$, which is bounded $\norm{f_R} = \sup \{ \tr(R X) \, : \, X\in\Herm(d), \norm{X}_1 \leq 1\} = \norm{R}_\infty$ \cite{Bhatia1996}. Moreover, every bounded linear functional on $\Herm(d)$ is of this form, which implies that the dual space of $(\Herm(d), \norm{\cdot}_1)$ can be identified with $(\Herm(d), \norm{\cdot}_\infty)$. We can find the dual $E_d'$ by considering the restriction of these functionals $f_R$ to $\HermZero(d)$. If we decompose $R=c\bbI + Y$ with $c\in\bbR$ and $Y\in\HermZero(d)$, we observe that, as the functional $f_R(X)$ is independent of $c$ when $X\in\HermZero(d)$, we can choose $Y$ as a representative. The dual norm is
\begin{align}\label{eq:dual_hermzero_1norm}
    \norm{f_Y}_{E_d'} &= \sup \{ \tr(Y X) \, : \, X\in\HermZero(d), \norm{X}_1 \leq 1\}  \\
    &= \min_{c\in\bbR} \norm{Y + c \bbI}_{\infty} = \frac{1}{2} (\lambda_{\max}(Y) - \lambda_{\min}(Y)).
\end{align}
The latter follows from combining the upper bound $\norm{f_Y}_{E_d'} = \norm{f_R} \leq \norm{R}_\infty$ and the lower bound $\norm{f_Y}_{E_d'} \geq \frac{1}{2} (\lambda_{\max}(Y) - \lambda_{\min}(Y))$ implied by an appropriate choice of $X$. Therefore, we can identify $E_d'\cong(\HermZero(d), \norm{\cdot}_{1'})$ with the norm $\norm{Y}_{1'} = \min_{c\in\bbR} \norm{Y + c \bbI}_{\infty}$. The dual operator acting on traceless representatives $(T|_0)' \in \cL(E_{d'}', E_d') \cong \cL(\HermZero(d'), \HermZero(d))$, can be related with the Hilbert-Schmidt dual $T^*$, determined by $\tr(A^\dag T(B)) = \tr(T^*(A)^\dag B)$ for all $A \in \cM_{d'}$ and $B \in \cM_{d}$:
\begin{equation}\label{eq:dual_of_T0}
    (T|_0)'(Y) = T^*(Y) - \tr(T^*(Y)) \, \frac{\bbI}{d}.
\end{equation}

Identifying the contraction values with Gel'fand numbers suggests that we could find similar sequences by applying the $s$-numbers to quantum channels and $\hptp_0$ maps in general. For instance, for $T\in\hptp_0(d,d')$ the approximation numbers yield
\begin{equation}\label{eq:cv_from_approximation_numbers}
    a_n(T|_0) =
    \min \qty{
        \norm{T|_0 - L}
        \; :\; L \in \cL(\HermZero(d), \HermZero(d')),
        \, \rank L < n
    }.
\end{equation}
From the relations in the previous section, we know that this sequence is additive and multiplicative (Proposition 2.3.12 in \cite{Pietsch1987}). Therefore, they retain some of the nice properties of the contraction values under composition. Additionally, we know that $\hkappa_n(T) \leq a_n(T|_0)$. The caveat is the lack of a physical interpretation. For $T\in\cptp(d,d')$, if we could guarantee that there is always a $\Phi \in\cptp(d, d')$ such that $L=\Phi|_0$, then the $n$th approximation number would quantify how well a channel $T$ can be approximated by another channel $\Phi$. However, as $L$ is arbitrary, it could happen that there is no such $\Phi$, for instance, if $\norm{L} > 1$. The Weyl, Chang, and Hilbert numbers inherit this lack of interpretability.

The approach is more fruitful if we consider Kolmogorov numbers; a direct substitution yields
\begin{equation}
    d_n(T|_0) = \min_{\dim \cM' \leq n - 1}
                \max_{\norm{X}_1 \leq 1}  \min_{Y \in \cM'}
                    \norm{T(X) - Y}_1,
\end{equation}
where the first minimization is over subspaces $\cM' \subset E_{d'}$. By Proposition~\ref{prop:dual_s_numbers} we can identify $d_n$ with the duals of $c_n$, which by Eq.~\eqref{eq:dual_hermzero_1norm} and \eqref{eq:dual_of_T0} gives
\begin{align}
    d_n(T|_0) &= c_n((T|_0)') \\
    &=
    \min_{\codim \cM' \leq n-1}
    \max_{Y\in \cM', \norm{Y}_{1'} \leq 1}
    \norm{T^*(Y)}_{1'}.
\end{align}
Therefore, the sequence $d_n(T|_0)$ describes the contraction of the spectral diameter, the difference between maximum and minimum eigenvalue, which quantifies the loss in sensitivity of observables. We find that these values are the appropriate duals to study contraction in the Heisenberg picture.

\section{Supplementary material for Section~\ref{sec:amplitude_damping}}

\begin{lemma}[Hadamard channel]\label{lemma:generalized_hadamard}
    Given the positive semidefinite matrix $\Gamma\in\cM_{d}$ with $\Gamma_{ii} = 1$, and a basis $\{\ket{i}\}_{i=1}^d$, the following linear map is CPTP:
    \begin{equation}
        \cP_{\Gamma} : \quad \rho \mapsto \Gamma \odot \rho 
        \quad \text{where} \quad (\Gamma \odot \rho )_{ij} = \Gamma_{ij} \rho_{ij}.
    \end{equation}

    Moreover, its Kraus operators are given by
    \begin{equation}
        E_k = \sqrt{\omega_k} \sum_{i=1}^{d} u_{ik} \ketbra{i},
    \end{equation}
    where the values $\omega_k$ and the unitary $u$ give the spectral decomposition $\Gamma = u \operatorname{diag}(\omega) u^{\dag}$.
\end{lemma}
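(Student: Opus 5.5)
The plan is to verify the Kraus decomposition directly; complete positivity and trace preservation then follow from the standard Kraus characterization. Write $\Gamma = u\operatorname{diag}(\omega)u^{\dag}$, i.e. $\Gamma_{ij} = \sum_k \omega_k u_{ik}\overline{u_{jk}}$, with $\omega_k \geq 0$ because $\Gamma$ is positive semidefinite. The square roots $\sqrt{\omega_k}$ are therefore real, and each $E_k = \sqrt{\omega_k}\sum_i u_{ik}\ketbra{i}$ is a well-defined diagonal operator.

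\emph{First step.} Compute $\sum_k E_k \rho E_k^{\dag}$ entrywise. Since $E_k$ is diagonal with entries $\sqrt{\omega_k}u_{ik}$, the product satisfies
\begin{equation}
    \bra{i}E_k\rho E_k^{\dag}\ket{j} = \omega_k u_{ik}\overline{u_{jk}}\,\rho_{ij}.
\end{equation}
Summing over $k$ gives $\Gamma_{ij}\rho_{ij}$, so $\sum_k E_k\rho E_k^{\dag} = \Gamma\odot\rho = \cP_\Gamma(\rho)$. This exhibits $\cP_\Gamma$ in Kraus form, which establishes complete positivity.

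\emph{Second step.} Check trace preservation via $\sum_k E_k^{\dag}E_k$. This operator is diagonal with $i$-th entry $\sum_k \omega_k\abs{u_{ik}}^2 = \Gamma_{ii} = 1$, so $\sum_k E_k^{\dag}E_k = \bbI_d$. Alternatively, $\tr(\Gamma\odot\rho) = \sum_i \Gamma_{ii}\rho_{ii} = \tr\rho$ directly.

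There is no real obstacle here. The only point needing care is the conjugation convention in the spectral decomposition: the off-diagonal entries must come out as $u_{ik}\overline{u_{jk}}$ rather than its conjugate. This is fixed by writing $\Gamma = u\operatorname{diag}(\omega)u^{\dag}$ and noting that $E_k^{\dag}$ has diagonal entries $\sqrt{\omega_k}\,\overline{u_{jk}}$.
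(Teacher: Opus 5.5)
Your proposal is correct and follows essentially the paper's route: the spectral decomposition $\Gamma = u\operatorname{diag}(\omega)u^{\dag}$ with $\omega_k\geq 0$, an entrywise check that $\sum_k E_k\rho E_k^{\dag} = \Gamma\odot\rho$, and trace preservation from $\Gamma_{ii}=1$. The paper phrases the entrywise check through the identity $\Gamma = \sum_k E_k\ketbra{s}E_k^{\dag}$ with $\ket{s}=\sum_j\ket{j}$, which is the same computation, while your added check $\sum_k E_k^{\dag}E_k = \bbI_d$ confirms trace preservation directly from the Kraus operators.
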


\begin{proof}
    Since $\Gamma_{i, i}=1$ guarantees that the diagonal is preserved, we clearly have $\tr(\rho) = \tr(\Gamma\circ \rho)$. To prove positivity, introduce $\Gamma$'s spectral decomposition
    \begin{equation}
        \Gamma
        = \sum_{i,j=1}^d \Gamma_{ij} \ketbra{i}{j}
        = \sum_{k=1}^d \omega_k \qty(\sum_{i=1}^d u_{ik} \ket{i}) \qty(\sum_{j=1}^d \bar{u}_{jk} \bra{j})
        = \sum_{k=1}^d E_k \ketbra{s} E_k^\dag,
    \end{equation}
    where $\ket{s} = \sum_{j=1}^d \ket{j}$. Then, observe that $\mel{i}{E_k}{s} = \mel{i}{E_k}{i}$, so we can verify that
    \begin{align}
        \Gamma \circ \rho &= \sum_{i,j=1}^d \Gamma_{ij}\; \rho_{ij} \ketbra{i}{j} 
        = \sum_{i,j=1}^d \qty(\sum_{k=1}^d \mel{i}{E_k}{s} \mel{s}{E_k^\dag}{j} \mel{i}{\rho}{j}) \; \ketbra{i}{j} \\
        &= \sum_{k=1}^d \sum_{i,j=1}^d \mel{i}{E_k}{i} \mel{j}{E_k^\dag}{j} \mel{i}{\rho}{j} \; \ketbra{i}{j} = \sum_{k=1}^d E_k \rho E_k^\dag.
    \end{align}
    Finding a Kraus representation proves that $\cP_\Gamma$ is completely positive. Note that this is only possible if $\omega_k \geq 0$.
\end{proof}

The following is a $d$-dimensional generalization of an existing result \cite{Hirche2024}. It shows that a depolarizing channel can be decomposed into a chain of amplitude damping channels and a Hadamard channel. The idea of the decomposition is depicted in Fig.~\ref{fig:adamp} for $d=3$ and $\lambda=0.7$.

\begin{figure}
    \centering
    \includegraphics[width=0.6\linewidth]{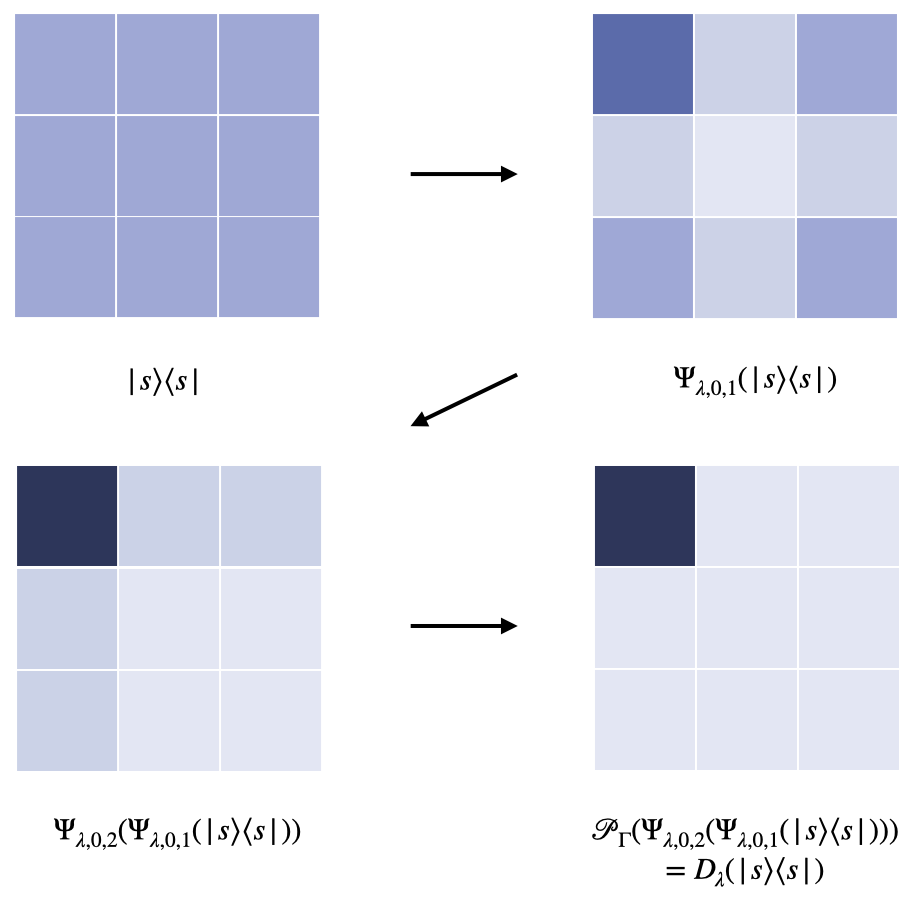}
    \caption{Depicts the decomposition presented in Lemma~\ref{lemma:degrading_dephasing}, which expresses the depolarizing channel as a chain of amplitude damping channels and a Hadamard channel. This example takes $d=3$ and $\lambda=0.7$ and the initial state $\ket{s} = \frac{1}{\sqrt{3}} (\ket{0} + \ket{1} + \ket{2})$.}
    \label{fig:adamp}
\end{figure}

\begin{lemma}[Generalization of Lemma C.2. in \cite{Hirche2024}]\label{lemma:degrading_dephasing}
    Let $\Psi_{\lambda, 0, j}$ be the $d$-dimensional $\ket{j}$ to $\ket{0}$ amplitude damping channel, define
    \begin{equation}
        \cD_{\lambda}(\rho)=(1-\lambda) \rho + \lambda \tr(\rho) \; \ketbra{0},
    \end{equation}
    and take the Hadamard channel $\cP_{\Gamma}$ acting on the canonical basis with
    \begin{equation}
        \Gamma_{kl}
        =
        \begin{cases}
            \sqrt{1-\lambda} & \text{if } k=0 \neq l \text{ or } l=0 \neq k, \\
            1  & \text{otherwise}.
        \end{cases}
    \end{equation}
    
    Then, 
    \begin{equation}
        \cD_{\lambda} = \cP_{\Gamma} \circ \Psi_{\lambda,0,d-1} \circ \cdots \circ \Psi_{\lambda,0,2} \circ \Psi_{\lambda,0,1}
    \end{equation}
\end{lemma}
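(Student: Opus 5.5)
The identity is between linear maps on $\cM_d$, so the plan is to check that both sides agree on every matrix unit $\ketbra{k}{l}$, $0\le k,l\le d-1$, and then extend by linearity. Complete positivity of $\cP_\Gamma$ (which would come from Lemma~\ref{lemma:generalized_hadamard}) plays no role in the identity itself. The target is easy to read off: $\cD_\lambda(\ketbra{k}{l}) = (1-\lambda)\ketbra{k}{l} + \lambda\,\delta_{kl}\ketbra{0}$.

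First I will write the action of a single factor $\Psi_{\lambda,0,j}$, with $j\ge1$, which damps $\ket{j}$ into $\ket{0}$. Relabelling the element-wise formulas of Section~\ref{sec:amplitude_damping} gives:
\begin{itemize}
\item every unit with neither index equal to $j$ is fixed;
\item $\ketbra{j}\mapsto(1-\lambda)\ketbra{j}+\lambda\ketbra{0}$;
\item $\ketbra{j}{l}\mapsto\sqrt{1-\lambda}\,\ketbra{j}{l}$ and $\ketbra{l}{j}\mapsto\sqrt{1-\lambda}\,\ketbra{l}{j}$ for $l\neq j$ (this includes $l=0$).
\end{itemize}
The key observation is that the output of each factor on a matrix unit is again a combination of matrix units. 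Moreover, the only new unit ever created is $\ketbra{0}$, and $\ketbra{0}$ is fixed by every factor because $0\neq j$. So the composition can be tracked unit by unit, and the order of the factors is irrelevant.

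Next I will run through four cases, applying the chain $\Psi_{\lambda,0,1},\dots,\Psi_{\lambda,0,d-1}$ and then $\cP_\Gamma$.
\begin{enumerate}
\item $\ketbra{0}$ is fixed by every factor, and $\Gamma_{00}=1$. This matches $\cD_\lambda(\ketbra{0})=\ketbra{0}$.
\item For a diagonal unit $\ketbra{k}$ with $k\ge1$, only $\Psi_{\lambda,0,k}$ acts, producing $(1-\lambda)\ketbra{k}+\lambda\ketbra{0}$. Both terms are then fixed by all remaining factors and by $\cP_\Gamma$, since $\Gamma_{kk}=\Gamma_{00}=1$. This matches $\cD_\lambda$.
\item For an off-diagonal unit $\ketbra{k}{l}$ with $k\neq l$ both nonzero, the factors $\Psi_{\lambda,0,k}$ and $\Psi_{\lambda,0,l}$ each contribute $\sqrt{1-\lambda}$, the other factors act trivially, and $\Gamma_{kl}=1$. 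The total is $(1-\lambda)\ketbra{k}{l}$.
\item For $\ketbra{0}{l}$ or $\ketbra{l}{0}$ with $l\ge1$, only $\Psi_{\lambda,0,l}$ acts, giving a factor $\sqrt{1-\lambda}$. The missing second factor is supplied by $\Gamma_{0l}=\Gamma_{l0}=\sqrt{1-\lambda}$, so the total is again $1-\lambda$.
\end{enumerate}
In all cases the result coincides with $\cD_\lambda(\ketbra{k}{l})$, and linearity finishes the proof.

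The only step needing care is case 4. It explains why the Hadamard channel is needed at all: the damping chain under-damps the coherences involving $\ket{0}$, since that level is never itself damped, and the entries $\Gamma_{0l}$ correct exactly this. One also has to confirm that no damping step acts on a unit created by an earlier step in an unexpected way. This is settled by the observation above that the only newly created unit, $\ketbra{0}$, is invariant under every factor.
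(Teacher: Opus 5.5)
Your verification is correct. The single-factor action on matrix units is right. The observation that the only newly created unit, $\ketbra{0}{0}$, is fixed by every factor is exactly what justifies tracking unit by unit, and all four cases check out.

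\textbf{Comparison with the paper.} The paper uses different bookkeeping. It computes the whole damping chain at once through Kraus-operator products: the $K_{\lambda,0,k}$ commute, $K_{\lambda,0,k}L_{\lambda,0,l}=L_{\lambda,0,l}K_{\lambda,0,k}=L_{\lambda,0,l}$, and $L_{\lambda,0,k}L_{\lambda,0,l}=0$. This yields a closed block form of the intermediate state with respect to $\Pi=\bbI-\ketbra{0}{0}$, after which the paper applies $\cP_\Gamma$ blockwise. Your matrix-unit route is more elementary and shows plainly why $\Gamma$ is needed, namely to supply the second $\sqrt{1-\lambda}$ on coherences with $\ket{0}$. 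The paper's route additionally gives the closed-form output of the damping chain by itself.

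\textbf{The omitted step.} You set aside one thing that the paper does prove: $\Gamma\succeq 0$. You are right that it plays no role in the operator identity. However, it does not come for free from Lemma~\ref{lemma:generalized_hadamard}, since it is precisely that lemma's hypothesis. It is also what entitles the statement to call $\cP_\Gamma$ a channel.

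This matters for how the lemma is used. The bound $\ckappa_{d^2-1}(\Psi_\lambda)\ge 1-\lambda$ comes from writing $\cD_\lambda=\Phi\circ\Psi_{\lambda,0,1}$ with $\Phi=\cP_\Gamma\circ\Psi_{\lambda,0,d-1}\circ\cdots\circ\Psi_{\lambda,0,2}$. One then uses $1-\lambda=\ceta(\cD_\lambda)\le\heta(\Phi)\,\ceta(\Psi_{\lambda,0,1})$, which requires $\heta(\Phi)\le 1$, i.e.\ every factor must be a genuine channel.

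\textbf{How to fix it.} The paper diagonalizes the $2\times 2$ block of $\Gamma$ on $\vspan\{\ket{0},\ket{\phi}\}$. A quicker check is to write
$\Gamma=\lambda\,\ketbra{0}{0}+\ketbra{v}{v}$ with $\ket{v}=\sqrt{1-\lambda}\,\ket{0}+\sum_{l=1}^{d-1}\ket{l}$. This is manifestly positive semidefinite, and it even exhibits $\cP_\Gamma(\rho)=\lambda\,\ketbra{0}{0}\rho\ketbra{0}{0}+D\rho D^{\dag}$ with $D=\sqrt{1-\lambda}\,\ketbra{0}{0}+\Pi$.
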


\begin{proof}
    First, we verify that $\Gamma\geq 0$ to show that the map $\cP_{\Gamma}$ is indeed CPTP by Lemma~\ref{lemma:generalized_hadamard}. Note that
    \begin{equation}
        \Gamma = \ketbra{0} + (d-1) \ketbra{\phi} + \sqrt{1-\lambda} \sqrt{d-1}(\ketbra{0}{\phi} + \ketbra{\phi}{0}),
    \end{equation}
    where we defined $\ket{\phi}=\frac{1}{\sqrt{d-1}} \sum_{k=1}^{d-1} \ket{k}$. Thus, we can represent $\Gamma$ taking $\{\ket{0},\ket{\phi}\}$ as a basis
    \begin{equation}
        \Gamma \dot{=} \pmqty{1 & \sqrt{1-\lambda} \sqrt{d-1} \\ \sqrt{1-\lambda} \sqrt{d-1} & d-1}.
    \end{equation}
    Direct diagonalization shows that
    \begin{equation}
        \Gamma = \lambda_+ \ketbra{\psi_+} + \lambda_- \ketbra{\psi_-}
    \end{equation}
    with
    \begin{equation}
        \lambda_{\pm} = \frac{d}{2}\qty(1 \pm \sqrt{1-4 \lambda \frac{d-1}{d^2}})
        \quad \text{and} \quad
        \ket{\psi_{\pm}} \propto \ket{0} + \frac{\lambda_{\pm}-1}{\sqrt{d-1}\sqrt{1-\lambda}} \ket{\phi}.
    \end{equation}
    Since $\lambda_{\pm}\geq 0$, $\cP_\Gamma$ is completely positive. Additionally, defining $\Pi = \bbI-\ketbra{0}$ note that any state admits a decomposition
    \begin{equation}
        \rho = \ketbra{0} \rho \ketbra{0} + \ketbra{0} \rho \Pi + \Pi \rho \ketbra{0} + \Pi \rho \Pi,
    \end{equation}
    which allows us to act with $\cP_\Gamma$ easily,
    \begin{gather}
        \cP_{\Gamma}(\ketbra{0}) = \ketbra{0}, \;
        \cP_{\Gamma}(\Pi \rho\Pi) = \Pi \rho \Pi,\\
        \cP_{\Gamma}(\ketbra{0} \rho \Pi ) = \sqrt{1-\lambda} \ketbra{0} \rho \Pi, \; \text{and}\;
        \cP_{\Gamma}(\Pi \rho \ketbra{0}) = \sqrt{1-\lambda}\Pi \rho \ketbra{0}.
    \end{gather}

    Introducing $K_{\lambda, 0, k} = \bbI - (1-\sqrt{1-\lambda})\ketbra{k}$ and $L_{\lambda, 0, k} = \sqrt{\lambda} \ketbra{0}{k}$, for $l\neq k$
    \begin{align}
        K_{\lambda, 0, k}K_{\lambda, 0, l} &= K_{\lambda, 0, l}  K_{\lambda, 0, k} = \bbI - (1-\sqrt{1-\lambda})(\ketbra{k}+\ketbra{l}), \\
        K_{\lambda, 0, k} L_{\lambda, 0, l} &= L_{\lambda, 0, l} K_{\lambda, 0, k} = L_{\lambda, 0, l}, \; \text{and} \; L_{\lambda, 0, k} L_{\lambda, 0, l} = 0,
    \end{align}
    hence, denoting $\rho' = \Psi_{\lambda,0,n-1} \circ \cdots \circ \Psi_{\lambda,0,1}(\rho)$,
    \begin{align}
        \rho' &=
        \prod_{k=1}^{d-1} K_{\lambda,0,k} \, \rho\, \prod_{k=1}^{d-1} K_{\lambda,0,k}^{\dag} +  \sum_{k=1}^{d-1} L_{\lambda,0,k} \, \rho \, L_{\lambda, 0, k}^{\dag} \\
        &=
        \qty(\bbI -(1 - \sqrt{1-\lambda})\Pi) \rho \qty(\bbI -(1-\sqrt{1-\lambda}) \Pi)
        +\lambda \tr(\Pi \rho) \, \ketbra{0}\\
        &=
        \qty(\ketbra{0} + \sqrt{1-\lambda}\Pi) \rho \qty(\ketbra{0} + \sqrt{1-\lambda} \Pi)
        +\lambda \tr(\Pi \rho) \, \ketbra{0}\\
        &=
        (\expval{\rho}{0} + \lambda \tr(\Pi \rho)) \ketbra{0}
        + (1-\lambda) \Pi \rho \Pi \\
        & \quad + \sqrt{1-\lambda} (\Pi \rho \ketbra{0} + \ketbra{0} \rho \Pi).
    \end{align}
     Applying $\cP_\Gamma$, we arrive at
    \begin{align}
        \cP_\Gamma ( \rho')
        &=
        (\expval{\rho}{0} + \lambda \tr(\Pi \rho)) \ketbra{0}
        + (1-\lambda) \Pi \rho \Pi \\
        & \quad + (1-\lambda) (\Pi \rho \ketbra{0} + \ketbra{0} \rho \Pi)
        \\
        &=
        \lambda \tr(\rho) \, \ketbra{0}
        +(1-\lambda) \rho 
        \equiv 
        \cD_{\lambda, 0}(\rho).
    \end{align}
\end{proof}

\section{Supplementary material for Section~\ref{sec:direct_sum_channels}}\label{sec:supp_mat_direct_sum}

The extreme values of a direct sum channel are $1$ and $0$. For the intermediate values we need to establish first the following claim.

\begin{lemma}\label{lemma:heta_direct_sum}
    Let $\cV_1 \subseteq \cM_{d_1}$, $\cV_2 \subseteq \cM_{d_2}$, and $\cV_C \subseteq \cC$, with $\cV_1,\cV_2 \neq \{0\}$. Then
    \begin{align}
        \heta(T_1 \oplus T_2, \cV_1 \oplus \cV_2 + \cV_C) 
        = \max \{ \heta(T_1, \cV_1), \heta(T_2, \cV_2) \}, \\
        \ceta(T_1 \oplus T_2, \cV_1 \oplus \cV_2) 
        = \min \{ \ceta(T_1, \cV_1), \ceta(T_2, \cV_2) \}.
    \end{align}
\end{lemma}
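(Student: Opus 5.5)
The plan is to prove both identities directly from the definitions, using two facts about the block structure. The first is that $\norm{A\oplus B}_1=\norm{A}_1+\norm{B}_1$, which the paper already notes. The second is the pinching inequality: for any Hermitian $Z=\mqty(A & \xi_{12}\\ \xi_{12}^\dag & B)$ we have $\norm{Z}_1\geq \norm{A}_1+\norm{B}_1$. To see the second fact, let $U=\bbI_{d_1}\oplus(-\bbI_{d_2})$. Then $\frac12(Z+UZU^\dag)=A\oplus B$, and the triangle inequality together with unitary invariance of the trace norm gives the bound. Throughout, $\cV_1,\cV_2$ are understood as subspaces of traceless Hermitian matrices, so that $X\oplus 0$ and $0\oplus Y$ lie in $\HermZero(d_1+d_2)$; this is the setting in which the coefficients are defined.

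\emph{Contraction identity, upper bound.} Take any $Z=X\oplus Y+\xi$ with $X\in\cV_1$, $Y\in\cV_2$, $\xi\in\cV_C$. Since $T_1\oplus T_2$ annihilates $\cC$, we have $(T_1\oplus T_2)(Z)=T_1(X)\oplus T_2(Y)$. Therefore
\begin{equation}
\norm{(T_1\oplus T_2)(Z)}_1=\norm{T_1(X)}_1+\norm{T_2(Y)}_1\leq \heta(T_1,\cV_1)\norm{X}_1+\heta(T_2,\cV_2)\norm{Y}_1 ,
\end{equation}
using Eq.~\eqref{eq:subspace_contraction_range}. Bounding both coefficients by their maximum, the right-hand side is at most $\max\{\heta(T_1,\cV_1),\heta(T_2,\cV_2)\}\,(\norm{X}_1+\norm{Y}_1)$. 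The pinching inequality then bounds this by the same maximum times $\norm{Z}_1$.

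\emph{Contraction identity, attainment.} Suppose the maximum is attained by $T_1$. Choose $X\in\cS_{\cV_1}$ achieving $\heta(T_1,\cV_1)$, and set $Y=0$, $\xi=0$. Then $Z=X\oplus 0$ has unit norm and achieves the bound. The case where the maximum is attained by $T_2$ is symmetric.

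\emph{Expansion identity.} Here there is no off-diagonal part, so $\norm{X\oplus Y}_1=\norm{X}_1+\norm{Y}_1$ holds exactly. The lower bound follows in the same way: $\norm{T_1(X)}_1+\norm{T_2(Y)}_1\geq \min\{\ceta(T_1,\cV_1),\ceta(T_2,\cV_2)\}(\norm{X}_1+\norm{Y}_1)$. The bound is attained by $X\oplus 0$ or $0\oplus Y$, with $X$ (respectively $Y$) a minimizer. This uses the hypothesis $\cV_1,\cV_2\neq\{0\}$, which guarantees such elements exist.

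The only step that is not pure bookkeeping is the pinching inequality, which controls the contribution of $\cV_C$ in the contraction identity. This also explains why $\cV_C$ cannot appear in the expansion identity: a nonzero $\xi\in\cV_C$ alone is mapped to zero, so including $\cV_C$ would force $\ceta=0$.
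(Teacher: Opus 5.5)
Your proof is correct and follows essentially the same route as the paper. Both arguments combine the pinching bound $\norm{Z}_1\geq\norm{X}_1+\norm{Y}_1$ with exact trace-norm additivity on the block-diagonal output (the channel annihilates $\cC$), bound by the max/min coefficient, and saturate with $X\oplus 0$ or $0\oplus Y$; the only additions on your side are the explicit derivation of pinching via $U=\bbI_{d_1}\oplus(-\bbI_{d_2})$, which the paper simply cites, and your remark on why $\cV_C$ must be absent from the expansion identity.
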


\begin{remark}
    Note that the purpose of the requirement $\cV_1,\cV_2\neq \{ 0 \}$ is to guarantee that the subspace expansion and contractions in the left-hand side are defined. However, we can easily let one of them be trivial, for instance, if $\cV_2  = \{ 0 \}$,
    \begin{align}
        \heta(T_1 \oplus T_2, \cV_1 \oplus \{ 0 \} + \cV_C) 
        = \heta(T_1, \cV_1), \\
        \ceta(T_1 \oplus T_2, \cV_1 \oplus \{ 0 \}) 
        = \ceta(T_1, \cV_1).
    \end{align}
\end{remark}

\begin{proof}
    Let $X=X_1\oplus X_2 + X_C$ with $X_1\in\cV_1$, $X_2\in\cV_2$, and $X_C\in\cC$. Since the 1-norm cannot increase when off-diagonal blocks are removed (pinching is 1-norm non-increasing \cite{Bhatia1996}),
    \begin{equation}
        \norm{X}_1 \geq \norm{X_1}_1 + \norm{X_2}_1,
    \end{equation}
    additionally,
    \begin{equation}
        \norm{(T_1\oplus T_2)(X)}_1 =
        \norm{T_1(X_1) \oplus T_2(X_2)}_1 =
        \norm{T_1(X_1)}_1 + \norm{T_2(X_2)}_1.
    \end{equation}
    Hence, $\heta(T_1 \oplus T_2, \cV_1 \oplus \cV_2 + \cV_C)$ is given by
    \begin{align}
        &
        \max \qty{
            \norm{T_1(X_1)}_1 + \norm{T_2(X_2)}_1 \; :\; \norm{X}_1 = 1
        } \\
        &\leq
        \max \qty{
            \norm{T_1(X_1)}_1 + \norm{T_2(X_2)}_1 \; :\; \norm{X_1}_1 + \norm{X_2}_1 \leq 1
        } \\
        &\leq
        \max \qty{
            x\, \heta(T_1, \cV_1) + y\, \heta(T_2, \cV_2) \; :\;  x,y \geq 0,\, x+y\leq 1
        } \\
        &= \max \{ \heta(T_1, \cV_1), \heta(T_2, \cV_2) \}.
    \end{align}
    The proof for the expansion coefficient follows similarly:
    \begin{align}
        \ceta(T_1 \oplus T_2, \cV_1 \oplus \cV_2)
        &=
        \min \qty{
            \norm{T_1(X_1)}_1 + \norm{T_2(X_2)}_1 \; :\; \norm{X_1}_1 + \norm{X_2}_1 = 1
        } \\
        &\geq
        \min \qty{
            x\, \ceta(T_1, \cV_1) + y\, \ceta(T_2, \cV_2) \; :\;  x,y \geq 0,\, x+y = 1
        } \\
        &= \min \{ \ceta(T_1, \cV_1), \ceta(T_2, \cV_2) \}.
    \end{align}
    Both inequalities can be saturated by taking $X_2=0$ and optimizing over $\cV_1$, or symmetrically with $X_1=0$.
\end{proof}

With this we can prove Proposition~\ref{prop:direct_sum_kappas}:

\begin{proof}
    The cases $n=1$ and $n\geq {d_1}^2 + {d_2}^2-1$ are already discussed in this section, so let us focus on the intermediate values. For this, we will prove the existence of subspaces $\cM_n, \cW_n \subseteq \HermZero({d_1}+{d_2})$ such that
    \begin{align}
        \codim \cM_n &= n - 1 \quad \text{and} \quad  \heta(T_1\oplus T_2, \cM_n) = u_{n - 1}, \\
        \dim \cW_n &= n \quad \text{and} \quad  \ceta(T_1\oplus T_2, \cW_n) = l_n,
    \end{align}
    which directly imply Eq.~\eqref{eq:direct_sum_kappas_intermediate}. For this, we use the subspaces
    $\cM^{\alpha}_k$ and $\cW^{\alpha}_k$ with $\alpha=1, 2$ such that
    \begin{align}
       \codim \cM^{\alpha}_k = k - 1, \quad &\text{and} \quad \heta(T_{\alpha}, \cM^{\alpha}_k ) = \hkappa_k(T_{\alpha}) \\
       \dim \cW^{\alpha}_k = k, \quad &\text{and} \quad \ceta(T_{\alpha}, \cW^{\alpha}_k ) = \ckappa_k(T_{\alpha}),
    \end{align}
    which exist by the definition of the contraction and expansion values.

    For a given $n$, assume $u_{n-1} = \hkappa_k(T_1)$ for some $1\leq k\leq n - 1$. This means that in the sequence $(u_{1}, u_{2}, \dots, u_{n - 1})$ there are $k$ values drawn form the contraction values of $T_1$ and $n - 1 - k$ values drawn from the ones of $T_2$. Therefore, if there is an $m$ such that $1 \leq m \leq n-1-k$, we must have $\hkappa_k(T_1) \leq \hkappa_m(T_2)$, which corresponds to the values already included in the sequence. Additionally, if $n-k \leq {d_2}^2-1$, we know that there is at least a contraction value of $T_2$ to be included and that $\hkappa_k(T_1) \geq \hkappa_{n-k}(T_2)$. Define the subspace $\cM_n$ as
    \begin{equation}
        \cM_n = \cC + \cM^{1}_{k} \oplus \cM^{2}_{n - k},
    \end{equation}
    where $1 \leq n - k \leq {d_2}^2$ and we use the temporary convention $\cM_{{d_2}^2} = \{ 0 \}$. By Lemma~\ref{lemma:heta_direct_sum} and the preceding observation, we have that $\heta(T_1\oplus T_2, \cM_n) = \hkappa_k(T_1) = u_{n - 1}$. We complete the proof by verifying that 
    \begin{align}
        \codim\cM_n 
        &=
        \codim \cC - \dim \cM^{1}_{k} - \dim \cM^{2}_{n - k} \\
        &= {d_1}^2 + {d_2}^2 - 1 - \dim \cM^{1}_{k} - \dim \cM^{2}_{n - k} \\
        &= \codim \cM^{1}_{k} + \codim \cM^{2}_{n - k} + 1\\
        &= (k - 1) + (n - k - 1) + 1 = n - 1.
    \end{align}
    The alternative $u_{n - 1} = \hkappa_k(T_2)$ follows equally. As we ultimately have $1\leq k\leq {d_1}^2 - 1$, this strategy works for $2 \leq n \leq {d_2}^2 + {d_1}^2 - 1$.

    Similarly, assume $l_n = \ckappa_k(T_1)$ for some $1\leq k\leq n$. As above, this means that if $n-k \geq 1$, we have $\ckappa_k(T_1) \leq \ckappa_l(T_2)$ for $l \leq n - k$; and that if $n-k+1 \leq {d_2}^2-1$, it must be that $\ckappa_k(T_1) \geq \ckappa_{n-k+1}(T_2)$. Now, let the subspace $\cW_n$ be
    \begin{equation}
        \cW_n = \cW^{1}_{k} \oplus \cW^{2}_{n - k},
    \end{equation}
    where $0 \leq n - k \leq {d_2}^2 - 1$ with the temporary convention $\cW_{0} = \{ 0 \}$. Resorting to Lemma~\ref{lemma:heta_direct_sum}, we have that $\ceta(T_1\oplus T_2, \cW_n) = \ckappa_k(T_1) = l_n$. And
    \begin{equation}
        \dim\cW_n 
        =
        \dim \cW^1_k + \dim \cW^2_{n-k} = n.
    \end{equation}
    The scenario $l_n = \hkappa_k(T_2)$ is equivalent and this is valid for $1 \leq n \leq {d_2}^2 + {d_1}^2 - 2$.
\end{proof}

\end{document}